\newtheorem{theorem}{Theorem}[section]
\newtheorem{lemma}[theorem]{Lemma}
\theoremstyle{remark}
\newtheorem{definition}[theorem]{Definition}
\DeclareMathOperator*{\argmax}{arg\,max}
\newtheorem{corollary}{Corollary}
\newtheorem{proposition}{Proposition}
\newtheorem{remark}{Remark}
\algrenewcommand\algorithmicrequire{\textbf{Initialisation:}}
\algrenewcommand\algorithmicensure{\textbf{Output:}}
\newcommand{\hrefitem}[2]{\hyperref[#1]{#2}}
\newcommand{\of}[1]{{\left( #1 \right)}}
\newcommand{\bc}[1]{\left\lbrace #1 \right\rbrace} 
\newcommand{\bp}[1]{\left( #1 \right)} 
\newcommand{\bs}[1]{\left[ #1 \right]} 
\newcommand{\abs}[1]{\left\lvert #1 \right\rvert}
\newcommand{\absx}[1]{\lvert #1 \rvert}
\newcommand{\cD}{\mathcal{D}}
\newcommand{\cN}{\mathcal{N}}
\newcommand{\cX}{\mathcal{X}}
\newcommand{\PP}{\mathbb{P}}
\newcommand{\RR}{\mathbb{R}}
\newcommand{\NN}{\mathbb{N}}
\newcommand{\vertiii}[1]{{\left\vert\kern-0.25ex\left\vert\kern-0.25ex\left\vert #1 \right\vert\kern-0.25ex\right\vert\kern-0.25ex\right\vert}}
\newcommand{\intd}{\mathrm d}
\newcommand{\sgn}[1]{\text{sgn}(#1)}
\newcommand{\1}[1]{\mathbbm 1_{#1}}
\newcommand{\EE}[2][]{\mathbb{E}_{#1} {\left[ #2\right]}}
\title{Annealed Leap-Point Sampler\\for Multimodal Target Distributions}
\author{Nicholas G. Tawn \and Matthew T. Moores \and Hugo Queniat \and Gareth O. Roberts}
\date{}
\begin{document}

\maketitle

\abstract{
In Bayesian statistics, exploring high-dimensional multimodal posterior distributions poses major challenges for existing MCMC approaches. This paper introduces the Annealed Leap-Point Sampler (ALPS), which augments the target distribution state space with modified annealed (cooled) distributions, in contrast to traditional tempering approaches. The coldest state is chosen such that its annealed density is well-approximated locally by a Laplace approximation. This allows for automated setup of a scalable mode-leaping independence sampler. ALPS requires an exploration component to search for the mode locations, which can either be run adaptively in parallel to improve these mode-jumping proposals, or else as a pre-computation step. A theoretical analysis shows that for a d-dimensional problem the coolest temperature level required only needs to be linear in dimension, $\mathcal{O}\left(d\right)$, implying that the number of iterations needed for ALPS to converge is $\mathcal{O}\left(d\right)$ (typically leading to overall complexity $\mathcal{O}\left(d^3\right)$ when computational cost per iteration is taken into account). ALPS is illustrated on several complex, multimodal distributions that arise from real-world applications. This includes a seemingly-unrelated regression (SUR) model of longitudinal data from U.S. manufacturing firms, as well as a spectral density model that is used in analytical chemistry for identification of molecular biomarkers.
}
%\keywords{parallel tempering, mode-jumping proposals, curved exponential family, seemingly-unrelated regression, panel data, Raman spectroscopy}

\section{Introduction}\label{sec:intro}

The standard approach to sampling from a multivariate Bayesian posterior density is Markov Chain Monte Carlo (MCMC).
There is a wealth of established literature and  methodology enabling the construction of a Markov chain with prescribed invariant distribution   \citep[e.g.,][]{Robert2013}. Arguably the easiest way to construct such a chain is by using a localised proposal mechanism 
e.g., random walk Metropolis \citep{metropolis1953equation}, Metropolis-adjusted Langevin \citep{rossky1978brownian,roberts1996exponential}, or hybrid methods formed from concatenating local moves such as Hamiltonian Monte Carlo \citep{duane1987hybrid}.
Such proposals can then be accepted or rejected according to the Metropolis-Hastings acceptance probability \citep{hastings1970monte}, ensuring reversibility of the chain.

When the target exhibits multimodality, the majority of MCMC algorithms that use localised proposals will fail to explore the state space, leading to biased estimation of posterior expectations from the sample output \citep{roberts2001optimal,Mangoubi2018}. Indeed, when the target distribution is high-dimensional or the posterior arises from a complex Bayesian modeling context then it becomes very hard to know whether the posterior is multimodal. Therefore it is important to develop methods that are robust but also scale well to high-dimensional settings.

The most popular MCMC methods designed to explore multimodal distributions have traditionally been based on {\em tempering} approaches and building on the seminal contributions of \citet{geyer1991markov,marinari1992simulated,Geyer1995}. However, despite the success of these methods, tempering approaches often fail in high-dimensional settings, and where individual modes have different variances and/or display asymmetries \citep{woodard2009sufficient}.

This paper introduces a new algorithm, the Annealed Leap-Point Sampler (ALPS), that overcomes many of these disadvantages. It is designed to sample from a class of smooth densities on $\mathbb{R}^d$ that may exhibit multimodality. The ALPS construction uses 
an annealing approach, where contrary to traditional approaches it relies on raising the density to powers greater than one. It also adapts ideas from established methods in the literature designed to mitigate the worst effects of multimodality in multi-dimensional settings, such as mode-jumping proposals \citep{Tjelmeland2001} and weight-stabilised tempering \citep{Tawn2018a,GarethJeffNick} that incorporate non-local information about the target.

As well as presenting novel methodology, this paper introduces accompanying theoretical results that 
describe its computational complexity (under suitable regularity) and which underpin practical guidance on temperature selection for the algorithm. The conclusion is that the largest power %, 
$\beta_{\text{max}}$ in the sequence of annealing distributions should be selected so that it scales as $\mathcal{O}(d)$. Therefore, building on the theoretical results in \citet{Roberts2020} we are able to deduce that the number of iterations required should scale as $\mathcal{O}(d)$, typically leading to an overall
computational cost complexity of $\mathcal{O}\left(d^{3}\right)$ for ALPS when taking into account computational expense per iteration. See further discussion in Section \ref{sec:compcomplex}.
It should be noted that these complexity calculations do not include the cost of searching for the mode locations, an essential problem for all multi-modal MCMC approaches, and which is a challenging problem in its own right. We incorporate an exploration component in ALPS that searches for new modes, as well as maintaining a list of modes that it has already found. This can either be run as one or more parallel chains to adaptively improve the mode-jumping proposal distribution, or else as a pre-computation step separate from ALPS.

The merits of the ALPS algorithm are demonstrated empirically, firstly on a synthetic example where the modes are known, then on a Seemingly-Unrelated Regression (SUR) model for manufacturing investment data \citep{Grunfeld1958}, and lastly on a peak decomposition model for Raman spectroscopy \citep{Moores2016raman}. Bayesian analysis of spectroscopy was pioneered by \citet{Ritter1994} and \citet{van2001analysis}. It is a proven approach in analytical chemistry for many different types of spectroscopic data \citep[e.g., ][]{wang2008reversible,zhong2011bayesian,Harkonen2023}. As with SUR, this is a popular model for applications but its problems with multimodality have seldom been acknowledged or addressed.

SUR models were introduced by \cite{Zellner1962} and are widely used for panel data in econometrics. Many examples are provided by \cite{Srivastava1987} and \cite{Fiebig2001}. It has only relatively recently been established that the SUR model likelihood exhibits multimodality and that the current methodology for inference is not robust.
Our paper is the first time that multimodality has been demonstrated for the \citet{Grunfeld1958} data.
We show that ALPS can successfully overcome the difficulties in both the synthetic problem as well as the complex real-data applications of SUR and spectroscopy. 

The paper is structured as follows:
after giving background and motivation in \Cref{sec:motivation},
\Cref{sec:Components} provides a detailed description of each component  of ALPS; Section~\ref{sec:Procedure} covers their assembly into the complete method; Section~\ref{sec:theory}  presents the main theoretical contribution of the paper that is crucial to understanding the computational cost of the procedure; Section~\ref{sec:compcomplex} discusses the computational cost and how it scales with the dimension of the problem; Section~\ref{sec:Empiricalexamples} contains empirical studies; and Section~\ref{sec:conclusions} presents our conclusions and considerations for further work.
Detailed proofs, along with full R source code for the applications, are contained in the online supplementary material.

\section{Background and Motivation}
\label{sec:motivation}

Consider the problem of sampling from a $d$-dimensional smooth density function $\pi(\cdot)$, where it is either known or suspected that the target distribution is multimodal. Such problems arise frequently across many fields e.g., astrophysics \citep{feroz_multinest_2009}; biology \citep{Ballnus2017benchmarking} or signal processing \citep{Fong2019}. Additionally, multimodality can arise due to non-identifiability in statistical models, as commonly seen in neural networks where symmetries in parametrization induce multiple equivalent modes \citep{papamarkou_challenges_2022}.  These kinds of multimodal scenarios are well known to present significant challenges for even the most advanced MCMC algorithms, often resulting in poor mixing or incomplete exploration of the posterior distribution \citep[e.g., ][]{Mangoubi2018}.

There is a significant volume of existing research focused on overcoming the multimodal sampling problem, which can be roughly designated into one of two approaches. The most direct approach uses ``mode jumping'' MCMC proposals,  e.g.\ \cite{Tjelmeland2001}, \cite{multtryliu}, \cite{Tjelmeland2004}, \cite{baththesis09}, \cite{zhou2011multi,zhou2011random} and \cite {Pompe2018}.

However the most successful and widely adopted approach to the problem are based on state-space augmentation techniques.
These stem from two seminal, closely related but independently discovered algorithms:
 the parallel tempering (PT) algorithm \citep{geyer1991markov} and its closely-related cousin, the simulated tempering algorithm  \citep{marinari1992simulated}. The principle behind these methods is that a Markov chain can achieve global exploration when targeting a tempered version of the density ($\pi^\beta$ for some $0<\beta<1$). Therefore the strategy is to construct an MCMC algorithm on an augmented space, ${\mathbb{R}^d \times \Delta}$ where $\Delta=\{\beta_j, j=0, \ldots n\} \subset(0,1] $, with $\beta _0=1$ denoting the  temperature of interest. Since $\pi $ is assumed to be multi-modal, direct Markov chain transitions between its modes are likely to be very difficult. However, indirect paths between these modes can be achieved by moving to {\em hotter} temperatures (smaller $\beta $) where inter-mode transitions are feasible, and then moving temperature back to $\beta _0=1$.

The success of this simple strategy has inspired generalisations to a more general class of  augmentation methods, see for example \cite{Wang1990a}, \cite{geyer1991markov}, \cite{marinari1992simulated}, \cite{Neal1996}, \cite{kou2006discussion}, \cite{atchade2010wang}, \cite{2017arXiv170805239N}, \cite{Tawn2018} and \cite{syed2022non}.

Both classes of approaches have their strengths and weaknesses. Part of the motivation for the novel methodology presented in this paper is to design an algorithm that combines their strengths while mitigating as many of their weaknesses as possible.

Despite their many successes, there are significant negative aspects regarding the performance of tempering-based MCMC approaches which afflict these algorithms particulary in moderate to  high-dimensional settings, e.g.\ \cite{woodard2009conditions}, \cite{woodard2009sufficient} and \cite{Bhatnagar2016}. These problems are caused by weight degeneracy, whereby a mode which may have significant probability mass in the temperature of interest could become negligible at a different temperature.
There have been recent developments in \cite{Tawn2018a,Tawn2018} that mitigate these problems and thus accelerate the performance of PT.  However, these approaches lack robustness when the modal components of the target distribution exhibit significant asymmetry and heterogeneity of scale as illustrated in Figure~\ref{fig:ALPSintuition1}. 
\cite{GarethJeffNick} attempts to address the issues of heterogeneous scale, by applying tempering in a localised way. However, there remain questions over the high-dimensional practicality of the approach, especially regarding the Markov chain's convergence properties for the tempered target distributions that are essential to ensuring successful global exploration. %Additional issues arise when the modes exhibit skew with the robustness of the approximations that are relied on in \cite{GarethJeffNick} and \cite{Tawn2018}.

Alternative approaches, e.g. \cite{Tjelmeland2001} and \cite{baththesis09}, attempt to approximate the target distribution via local Laplace approximations in order to design appropriate inter-modal MCMC moves. However, their mixing times increase exponentially in dimension \citep{Liu1996}. Performance is particularly problematic where the modal components exhibit skewness. Even in one dimension, the best Gaussian mixture approximation to the target distribution might not be a good fit, see for example Figure~\ref{fig:ALPSintuition1} at $\beta=1$.

Overcoming these problems and developing a practically applicable algorithm is the motivation behind this work. The most significant novelty in this work  is  that an  annealing  approach (i.e. powers\  $\beta>1$) is adopted to reduce the skew of any particular local component, making it appear approximately Gaussian. From the upper two plots in Figure \ref{fig:ALPSintuition1} we can clearly see the effect of reducing skew by annealing. This greatly facilitates the design of a Markov chain that has efficient inter-modal mixing properties (in practice we use a mode-hopping independence Metropolis-Hastings sampler).

\begin{figure}
  \centering
  \includegraphics[width=\linewidth]{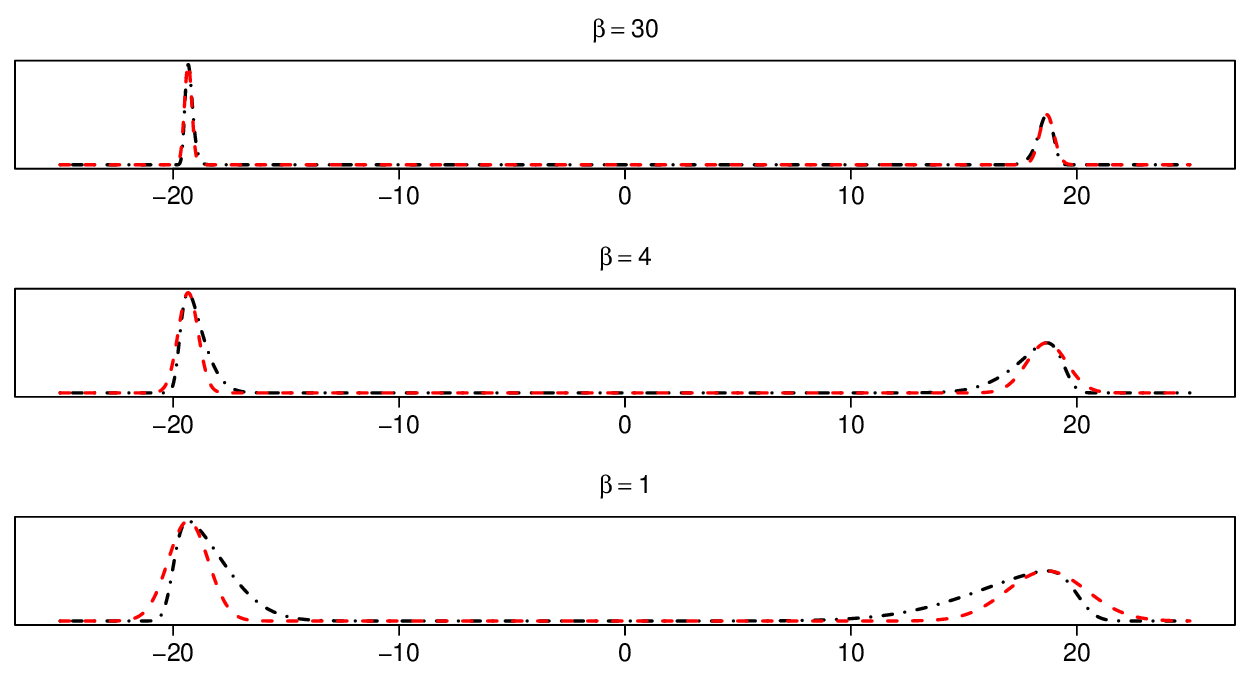}
  \caption{Annealing reduces the effect of skew. The dot-dashed line in the bottom panel shows a bimodal density with appreciable skewness; overlayed by the red dashed line representing the Gaussian mixture approximation derived through a local Laplace approximation. The two figures above this show annealed versions of the density and its respective approximation (at annealings $\beta =4$ and $\beta=30$ respectively); demonstrating that as the modes are annealed the deviation from local Gaussianity reduces. 
  }
  \label{fig:ALPSintuition1}
\end{figure}

We shall show that this ambitious plan can be accomplished by giving detailed methodology and supporting theory that demonstrates its robustness in high dimensional contexts. In particular, we demonstrate that the proposed annealing strategy achieves linear mixing-time scaling in dimension, addressing regimes in which standard tempering methods may exhibit exponentially slow convergence, e.g. \citet{woodard2009sufficient}.

This linear behaviour persists even for well-separated, skewed modes and heterogeneous covariance structures, ensuring efficient mixing as dimension increases. The resulting gain in Markov chain convergence shifts the complexity from exponential to polynomial, at the expense of a modest increase in per-iteration computational cost, preserving polynomial-time scalability.

\section{Components of ALPS}
\label{sec:Components}

ALPS consists of two components, \textit{Parallel Annealing} (ALPS-PA) which is an annealing ``parallel tempering'' scheme  and the \textit{Exploration Component} (ALPS-EC) which is a procedure aimed at exploring the whole state space to identify the locations of modes. These two processes can be run in parallel essentially creating an adaptive MCMC algorithm
\citep{roberts2009examples}.

As with any parallel tempering method, ALPS-PA  constructs a Markov chain on the space 
\def\state{{\cal X}}
$\mathbb{R}^{d(n+1)}=\prod_{j=0}^n\state _{\beta _{j}}$, where each $\state _j$ is an independent copy of $\mathbb{R}^{d}$ whose value at iteration $t$ is denoted by $X^t:= (%x_{\text{hot}}^t,
x_0^t,\ldots,x_n^t)$, which is targeting an invariant distribution
\begin{equation}
	\pi_{ALPS}(X) \propto % \left(\pi\left(x_{\text{hot}}\right)\right)^{\beta_{\text{hot}}}
 \prod_{j=0}^n \pi_{\beta_j}(x_j), \label{eq:ALPSinvar}
\end{equation}
where $\pi_{\beta_j}$ are an appropriate sequence of densities. Crucial to the success of ALPS-PA though will be
\begin{enumerate}[label={(\hrefitem{sec:3.\arabic*}{3.\arabic*})}, ref=3.\arabic*]
    \item the choice of the auxiliary densities $\pi_{\beta_j}$;
    \item the tempering schedule $\Delta= \{\beta_0, \beta_1, \ldots, \beta_{n}  \}$,  which determines a sequence of auxiliary probability densities;
    \item the choice of temperature swap moves that have accelerated mixing properties through the aid of a deterministic transformation between $\state_{\beta _j}$ and $\state_{\beta_{j\pm 1}}$;
    \item the designing of $P_{\beta_j}(\cdot,\cdot)$, the Markov chain transition mechanism for moves conditional on $\beta _j$, facilitating {\em within-mode} mixing;
    \item the collection of transition kernels $\{  Q_{\beta_0}(\cdot,\cdot),Q_{\beta_1}(\cdot,\cdot),\ldots,Q_{\beta_n}(\cdot,\cdot) \}$ designed to allow {\emph{inter-modal}} jumps;
    \item the mode finding exploration component of ALPS to discover the local maxima of the target distribution.
\end{enumerate}

The coming subsections will be devoted to describing and justifying the choices made in (3.1)--(3.6). 

\citet{Tjelmeland2001} and \citet{Pompe2018} showed that if the modes can be well-approximated by Gaussian distributions (typically derived from a Laplace approximation) then an independence sampler-type proposal based on a Gaussian or Gaussian mixture is very effective for mode-jumping. 

However, in high dimensional settings the challenge is that even small deviations (marginally) from Gaussianity lead to degenerate acceptance rates of such moves. \citet{Pompe2018} overcome this by fitting a more involved Gaussian mixture to the often complex modal structures. ALPS takes a different approach, it seeks to naturally and automatically adapt the target distribution into something that is increasingly well-approximated by a Gaussian mixture distribution with well-separated components i.e., a target where all modes appear locally Gaussian.

A central component in the design of the algorithm is the following Gaussian mixture distribution. 
Given a list of vectors  $M=\{\mu_j,\  1\le j\le m\}$, a list of symmetric positive-definite matrices $S=\{\Sigma_j,\ 1\le j\le m\}$, and a list of positive weights summing to unity $W=\{w_1, \ldots, w_m\}$, we set
\begin{align}
    \pi_{M,S,W}(x) = \sum_{j=1}^m w_j 
     \phi \left(x|\mu_j,{\Sigma_j}\right) \label{eq:GM}
     \end{align}
where $\phi(\cdot|\mu_j,\Sigma_j)$ denotes the pdf of a multivariate Gaussian with mean $\mu_j$ and covariance matrix $\Sigma_j $.

If the target distribution has modes that locally resemble Gaussian distributions, it can naturally be approximated effectively by a Gaussian mixture like the one in \eqref{eq:GM}, where $m$ would correspond to the number of local maxima. In such cases, provided the Gaussian mixture parameters can be derived, the problem simplifies considerably. Indeed, by combining localized MCMC moves with independence sampler moves drawn from the proposal distribution given in \eqref{eq:GM}, one can design a Markov chain that would yield very rapid convergence.

ALPS aims to exploit this construction with its very rapid convergence properties even in cases where the target has modes that deviate substantially from appearing Gaussian.

\subsection{Hessian-Adjusted Tempered (HAT) Distributions}
\label{sec:3.1}
Traditionally, tempering algorithms rely on augmentation strategies using powered densities where the entire density is raised to some power $\beta<1$. This has the effect of merging modes and thus facilitating transitions between different modal domains of attraction.

\citet{woodard2009sufficient} shows that when modes of a multimodal target exhibit heterogeneity, particularly with regards to their scales, then these power-based target distributions are a poor choice as they lead to torpid mixing of PT when the dimensionality of the problem grows. %ALPS will use annealing as opposed to tempering but without modification, we would face the same issues of torpid mixing. Indeed without modification the most annealed targets would be dominantly exploring the mode containing the global maximum as in simulated annealing, \cite{aarts1988simulated}.
Unless the distributions are altered, annealing is subject to similar challenges of torpid mixing. The chains targeting the most annealed distributions become increasingly restricted to the dominant modes corresponding to global maxima, as in simulated annealing \citep{aarts1988simulated}.

The target distributions used in the PA component build upon on recent work in \citet{GarethJeffNick} which aims to (approximately) preserve mass that one may associate with each mode across different temperature levels. 

First, we introduce an approximate partition of the state space 
into domains (or basins) of attraction, in the sense of \citet{Tjelmeland2001,zhou2011multi}. In this context, the domain of attraction $\cD_j \subseteq \cX $ associated with mode $\mu_j$ is defined as the set of points from which the gradient flow dynamics (i.e., the solutions to the gradient-flow ODE) converge to $\mu_j$, as introduced in \citet{zhou2011multi}. Given the Gaussian mixture distribution $\pi_{M,S,W}$,  we derive an approximation of these domains, based on a modified Mahalanobis distance criterion (dependent on $M, S, W$, although we omit this explicit dependence for notational convenience).

Formally, the partitioning is provided by the following \textit{mode assignment function} at inverse temperature $\beta$,
\begin{equation}
A_{x, \beta} := \argmax_{j \in \{1,\dots,m\}} \left\{w_j\phi\left(x\mid\mu_j,\frac{\Sigma_j}{\beta}\right)\right\} \label{eq:ass}
\end{equation}
where $\phi(\cdot|\mu_j,\Sigma_j)$ denotes the pdf of a multivariate Gaussian with mean $\mu_j$ and covariance matrix $\Sigma_j$.

This assignment produces a partition consistent with the gradient-flow definition of domains of attraction when the target distribution is a Gaussian mixture. Through annealing, the modes approach local Gaussianity, and the approximation of these domains becomes increasingly accurate.. Using this partition, we follow \citet[Def.~2]{GarethJeffNick} and restate the Hessian Adjusted Tempered (HAT) distributions, which we adopt as our auxiliary densities~$\pi_{\beta_j}$.

\begin{definition}[Hessian Adjusted Tempered (HAT) distributions]
\label{def:HAT}
Given an inverse-temperature level $\beta$ and its associated \textit{mode assignment function} $A_{x, \beta}$, the HAT distribution at inverse temperature $\beta$ is defined as
\begin{equation}
	\pi_{\beta}(x)  \propto
    \pi(x)^{\beta} \pi(\mu_{A_{x, \beta}})^{1-\beta}.
    \label{eq:robadj1}
\end{equation}
\end{definition}

Theoretical justification for these distributions can be found in the supplementary material in Appendix~\ref{sec:Weightstabilised}. In short, this form of the target distribution can mitigate some of the most undesirable weight stabilisation issues exhibited by traditional tempering approaches.

\subsection{Choice of inverse temperatures} % $\Delta $}
\label{sec:3.2}
ALPS-PA will carry out an MCMC algorithm on the density $$
\Pi (x_0, \ldots ,x_n) \propto \prod_{i=0}^n \pi_{\beta_i} (x_i)
$$
where $\Delta = \{\beta_0, \beta_1, \ldots, \beta_n\}$ is a user-defined {\em temperature schedule}. By convention $\beta_0$ is taken to be $1$, the temperature of interest.  Unlike other tempering approaches, the other temperatures, $\Delta -\{\beta_0\}$,  are taken to be a collection of
colder temperatures, so that $\beta_0<\beta_1< \ldots < \beta_n$.

There are two important aspects to the choice of $\Delta$, the choice of $\beta_n $ and the spacing of the intermediate temperatures. For the former problem, we will derive theory in Section~\ref{sec:theory} which shows that we need to take $\beta_n=\mathcal{O}(d)$.

The spacings between consecutive temperatures in $\Delta $ must be chosen carefully. Appropriate choices maximize the round-trip rate, which measures how frequently complete cycles occur—beginning and ending at $\beta_0$, with an intermediate visit to $\beta_n$ \citep{syed2022non}. Optimizing this rate ensures rapid transfer of mixing information from the most annealed state to improve mixing at the target state. If spacings are too large, the acceptance rates for temperature swaps become excessively low. Conversely, excessively small spacings lead to slow mixing (and increased computational complexity) since there are too many intermediate temperature levels and it takes a long time to pass the information through the temperature schedule.

Appealing to the theoretical contributions of \cite{Tawn2018a}, \cite{Tawn2018} and \cite{GarethJeffNick}, it is approximately optimal to select the spacings consecutively with $\beta_{j+1}=\beta_{j} +\epsilon_j$ where $\epsilon_j = \mathcal{O}\left(d^{-1/2}\right)$. Crucially, the optimal choice for the spacings corresponds to temperature swap acceptance rates between consecutive temperatures of 0.234. A practitioner can use this to tune the setup of the algorithm; even pursuing an adaptive strategy similar to  \cite{Miasojedow2013}.

Implementing this scheme robustly would usually start with a geometrically-spaced schedule (optimal for all power densities as shown in \cite{atchade2011towards}), and then fine-tuned to approach the optimal acceptance rate at all temperatures. Details can be found in 
\cite{Tawn2018a,Tawn2018}.

\subsection{Design of temperature swap moves}
\label{sec:3.3}

Within ALPS-PA we shall make extensive use of the QuanTA transformation \citep[as originally introduced in][]{Tawn2018}.

\begin{definition}[QuanTA Transformation] \label{def:quantyy}
Given a tempering schedule $\Delta $ with $\beta_0<\ldots < \beta _n$ and lists $M,S,W$, then  define $T^+_{M,S,W}:\mathbb R^d \times
\{\beta _0, \ldots, \beta _{n-1}\}
\to \mathbb R ^d\times \{\beta _1, \ldots, \beta _{n}\}$ and $T^-_{M,S,W}:\mathbb R^d \times
\{\beta _1, \ldots, \beta _{n}\}
\to \mathbb R ^d\times \{\beta _0, \ldots, \beta _{n-1}\}$
by
\begin{subequations}\label{eq:QuTrans}
\begin{align}
T^+_{M,S,W}(x,\beta_i) &= \left(
\left(\frac{\beta_i}{\beta_{i+1}} \right)^{1/2} (x-\mu_{A_{x,\beta_i}}) + \mu_{A_{x,\beta_i}}, \ 
\beta_{i+1}
\right), \label{eq:QuTrans_plus} \\
T^-_{M,S,W}(x,\beta_i) &= \left(
\left(\frac{\beta_i}{\beta_{i-1}} \right)^{1/2} (x-\mu_{A_{x,\beta_i}}) + \mu_{A_{x,\beta_i}}, \ 
\beta_{i-1}
\right). \label{eq:QuTrans_minus}
\end{align}
\end{subequations}
\end{definition}
Note that $T^+_{M,S,W}$ and $T^-_{M,S,W}$ are both injective (though not surjective) and so their inverse are well-defined within the domains given by their respective images.

As in the standard framework of parallel tempering, pairs of adjacent temperature level components are chosen at random and for a swap move. Traditionally this simply involves proposing a swapping of the locations of the adjacent temperature components. \cite{Tawn2018} introduced a novel swap move that simultaneously proposes a deterministic transformation based on \eqref{eq:QuTrans} with the motivation of accelerating the mixing through the temperature space.

At time $t$, the procedure for a temperature swap move in ALPS follows a very similar strategy to that in \cite{Tawn2018}:
\begin{enumerate}
    \item Select a pair of temperatures $\beta_i,\beta_{i+1}$ uniformly at random
    \item Compute the transformations
    \[
    y_{i+1}=\left(T^+_{M,S,W}(x_i^t,\beta _i)\right)_1~~~\mbox{and}~~~y_{i}=\left(T^-_{M,S,W}(x_{i+1}^t,\beta _{i+1})\right)_1
    \]
    \item Check for reversibility of the transformation (as in rare cases it may have crossed a boundary, immediately voiding reversibility). Reject the proposed swap if
    \[
    A_{x^t_i,\beta_i }\ne A_{y_{i+1},\beta_{i+1} }~~~\mbox{or}~~~A_{x^t_{i+1},\beta_{i+1} }\ne A_{y_{i},\beta_i }
    \]
    and set $x_{i+1}^{t+1}=x_{i+1}^t$ and $x_{i}^{t+1}=x_{i}^t$.
    \item If 3. does not lead to rejection then compute
    \[
    \alpha = \min \left\{ 1, \frac{\pi_{\beta_{i}}(y_i)\pi_{\beta_{i+1}}(y_{i+1})}{\pi_{\beta_{i}}(x_i^{t})\pi_{\beta_{i+1}}(x_{i+1}^{t})} \right\}.
    \]
    With probability $\alpha$ we accept a transformation-aided swap move and set
    \[
    x_{i+1}^{t+1}=y_{i+1}~~~\mbox~~~x_{i}^{t+1}=y_{i}\]
    else set
    \[
    x_{i+1}^{t+1}=x_{i+1}^t~~~\mbox~~~x_{i}^{t+1}=x_{i}^t.\]
\end{enumerate}

\cite{Tawn2018} show that, under mild smoothness conditions, for large inverse temperatures  $\beta$ the traditional parallel tempering scheme is optimal with spacings that are $\mathcal{O}(\beta)$, while this transformation-enhanced swapping scheme can achieve spacings that are up to $\mathcal{O}(\beta^{3/2})$ and $\mathcal{O}(\beta^{5/4})$ when the modes are symmetric and  asymmetric respectively in a neighbourhood of the mode point.

The transformation is more effective when the modes exhibit approximately Gaussian structure. For robustness in highly degenerate cases where QuanTA swaps may start to fail or, for instance, when modes are highly non-Gaussian, occasional traditional swap moves may optionally be included near the target temperature. In practice, we did not observe such behaviour, and empirical evidence suggests that this safeguard is not necessary.

\subsection{Choice of within-mode transitions}
\label{sec:3.4}

Algorithm moves that reposition particles within each temperature level are crucial for the algorithm’s success. The purpose of these moves is not to transition between modes—this is exclusively attempted at the highest inverse temperature $\beta_n$, see \citet{Roberts2020}. At all other temperatures, we only require that the sampler mixes adequately within each mode. Although this paper does not specify a particular within-temperature Markov chain proposal for ALPS, careful tuning of these localized moves remains crucial and will certainly be a function of the temperature level such that tuning would be needed for all temperature levels, as in Parallel Tempering. 

To increase robustness, we recommend employing sampling schemes that explicitly exploit local geometric properties which can differ significantly between modes, e.g.\ pre-conditioned Random Walk Metropolis. There are many established results for tuning the within-temperature marginal updates to the chain performed in PA component, e.g. \citet{roberts2001optimal}.

For our purposes, we have found a suitable approach to be the pre-conditioned variant of the multi-dimensional Random Walk Metropolis algorithm, in which the proposal covariance is determined by the negative inverse Hessian of the log-target evaluated at each mode. Specifically, at position $x$ and inverse temperature $\beta$, the Gaussian increment will have
\begin{align*}
    {\Sigma} = -\left[\nabla^2 \log \pi({\mu}_{A_{x,\beta}})\right]^{-1},
\end{align*}
as the proposal covariance. This strategy leverages local curvature information to enhance sampling efficiency, while avoiding the tuning challenges associated with more sophisticated algorithms such as Hamiltonian Monte Carlo \citep{neal_mcmc_2011}.

\subsection{Choice of mode-hopping mechanism.}
\label{sec:3.5}
Using the collections $M$, $S$, and $W$, we can construct a Gaussian mixture target distribution indexed by a temperature $\beta$ given by:
\begin{equation}
	q_{\beta}(x| M,S,W) :=  \sum_{j=1}^m {w}_j  \phi\left(x ~\Big|~\mu_{j},\frac{\Sigma_j}{\beta}\right).
	\label{eq:inddist}
\end{equation}
The transition kernel $Q_\beta$ will thus be constructed for inverse temperature $\beta $ as an independence sampler with proposal density $q_{\beta}(\cdot | M,S,W)$ and target distribution $\pi _{\beta }$. Given current state $x$ and proposed new state $y$ drawn from this proposal density, the move to $y$ is accepted with probability

\begin{equation}
	A_{IS}(x,y| M,S,W) := \mbox{min}\left( 1,\frac{\pi_\beta (y) q_{\beta}(x | M,S,W)}{\pi_\beta (x) q_{\beta}(y | M,S,W)} \right). \label{eq:indaccrate}
\end{equation}

This move will predominantly only be attempted at the highest inverse temperature $\beta_n$, as depicted in \cite{Roberts2020}.

The collections $M$, $S$ and $W$ will be estimated via Laplace approximations as detailed in Section~\ref{sec:3.6}. If adequate annealing has taken place so that the Laplace approximations for the local modes are reasonably accurate then  the weight approximations will closely match the regional masses in the $\beta_{\text{max}}$-level HAT target. In that case, the  independence sampler constructed above should be able to achieve high acceptance rates. 

Of course, any finite amount of annealing leaves some modal asymmetry and thus prevents acceptance rates being equal to 1. In general, one must anneal more as the dimensionality of the problem increases before the Laplace approximation becomes accurate. The question of how large $\beta_{\text{max}}$ should be with respect to the dimensionality of the problem to achieve sufficiently good performance is addressed in Section~\ref{sec:theory}. In practice, to choose such a value of $\beta_{\text{max}}$, any target acceptance rate \( a \in (0,1) \) at the cold level achieves the corresponding complexity guarantees described in Section~\ref{sec:compcomplex}. A natural choice is therefore \( a = 0.5 \), which can be increased if the resulting hopping rate is not sufficiently high for the problem at hand. Our empirical results do not indicate a clear preference for a particular value of \( a \); indeed, Section~\ref{sec:Empiricalexamples} presents experiments covering acceptance rates ranging from 0.25 to 0.9. Nevertheless, for robustness, a practitioner may choose a higher target acceptance rate, such as 0.9, to ensure frequent hopping. This does not incur computational burden under the chosen temperature schedule in Section~\ref{sec:3.3}.

\subsection{Design of mode-finding procedure}
\label{sec:3.6}
Each of the previously described mechanisms relies on the collections $M$, $S$, and $W$, which represent respectively the locations,  covariance structures, and weights of the modes.  To determine this information for each mode, one could employ a range of approaches. In ALPS, this task is achieved by the \textit{Exploration Component} (ALPS-EC) with an approach that appears to work adequately in the empirical studies.

ALPS-EC constructs a Markov chain that has an invariant distribution given by $[\pi(\cdot)]^{\beta_{\text{hot}}}$. If the inverse-temperature level, $\beta_{\text{hot}}$, is sufficiently low then, as in the standard PT algorithm, the Markov chain can explore the entire state space rapidly. After a pre-specified number of steps of the Markov chain, one performs a local optimisation (this work uses Quasi-Newton optimisation) initialised from the current location of the chain. If the optimum attained is a new mode point not previously encountered, one adds it to the collection of discovered modes.

In high-dimensional settings, or when modes are exceptionally close together, numerical instabilities can result in either false mode identification or redundant detection of modes. Both phenomena can degrade the algorithm’s performance. To mitigate this, we require that any newly identified mode be sufficiently separated from previously detected modes. Specifically, we enforce that the Mahalanobis distance between the candidate mode and all existing modes exceeds a prescribed threshold, set as a pre-chosen quantile of the chi-squared distribution. This criterion helps prevent the algorithm from registering numerically indistinguishable or excessively similar modes as distinct.
In ALPS-EC, the adopted approach is to compute the following pseudo-distance between the proposed new mode and each of the already discovered modes:
\begin{equation}
	D(\mu_k,\mu^{*}):= d^{-1}\max \left\{ (\mu_k-\mu^{*})^T \Sigma_k^{-1}(\mu_k-\mu^{*}), (\mu_k-\mu^{*})^T (\Sigma^{*})^{-1}(\mu_k-\mu^{*}) \right\}. \label{eq:alpseudo}
\end{equation}
This pseudo-distance is a maximum of two Mahalanobis distances between the modes  and thus measures the extent to which the modes lie within each others domains of attraction once their covariance structure is accounted for. Using the pre-defined tolerance, we check if the pseudo-distances from the proposed new mode $\mu^*$ to existing modes surpass this threshold. If so, we augment the estimated sets $\hat{M}$, $\hat{S}$, and $\hat{W}$ with the new mode $\mu^*$; these sets correspond to the current estimates for $M$, $S$, and $W$, as outlined below:
\begin{itemize}
	\item  $\hat{M}:=\{ \hat{\mu}_1 , \ldots, \hat{\mu}_{\hat{m}} \}$, the collection of mode points of the target distribution $\pi$, with $\hat{m}$ the number of unique mode points discovered so far;
	\item $\hat{S}:=\{ \hat{\Sigma}_1 , \ldots, \hat{\Sigma}_{\hat{m}} \}$,  where $\hat{\Sigma}_j= -\left[\nabla^2 \log (\pi(\hat{\mu}_j))\right]^{-1},~~j=1,\ldots, \hat{m}$;
   	\item $\hat{W}=\{\hat{w}_1,\ldots, \hat{w}_{\hat{m}} \}$, the collection of approximated weights \begin{equation}\hat{w_j}= \frac{\pi(\hat{\mu}_j) |\hat{\Sigma}_j|^{1/2}}{\sum_{k=1}^{\hat{m}} \pi(\hat{\mu}_k) |\hat{\Sigma}_k|^{1/2}}, ~~j=1,\ldots,\hat{m}.\label{eq:weighapp} \notag\end{equation} 
\end{itemize}

The hot-state mode finder outlined above provides a straightforward implementation of mode detection, and determining optimal EC temperature levels remains a nuanced challenge. Specifically, there is a trade off between using temperature levels that are too hot therefore the chain  drifts away from the modes, and too cold, leading to a chain that remains stuck in the local mode.

To go further and increase robustness, there are many ad-hoc modifications that can be considered: running a population of chains at different temperature levels; repeatedly refreshing the chain from already discovered modes; use different optimisation strategies; perform optimisation based on the curvature of the target; to only name a few.  

We do not claim that our approach to mode finding is optimal, nor do we provide theoretical guarantees. The process of identifying modes is computationally intensive and highly problem-dependent; no single approach can be expected to work best in all scenarios. Improving the performance of the mode-finding procedure is a focus of further work beyond the scope of this paper. In practice, it may be beneficial to adapt ALPS–EC to the characteristics of the problem at hand. For instance, in machine-learning settings, where objective functions are often evaluated using minibatches and are therefore noisy, stochastic gradient-based methods such as SGD or Adam \citep{kingma2017adammethodstochasticoptimization} are natural candidates for identifying regions of high density.

\section{ALPS : The procedure}
\label{sec:Procedure}

Bringing together the ingredients of Section~\ref{sec:Components}, we shall summarize the methodology underlying a complete iteration of the ALPS algorithm. 

Starting from the current sample \( X^t = (x_0^t, \dots, x_n^t) \) at iteration \( t \), we illustrate how the tools previously introduced in subsections (3.1)--(3.6) can be combined systematically to produce the subsequent sample \( X^{t+1} \).

Given a user-defined temperature schedule \( \Delta \), see \eqref{sec:3.2}, and current collections \(\{\hat{W},\hat{M},\hat{C}\}\), obtained through an online or prior running of ALPS-EC, see \eqref{sec:3.6}, ALPS-PA engages in the following:
\begin{enumerate}
    \item Conduct the within-temperature updates for all inverse temperatures \(\beta_j\);
\begin{enumerate}
    \item For \(0 \leq j \leq n-1\), draw \(x_j^{t+1} \sim P_{\beta_j}(x_j^t, .)\) for local exploration, see \eqref{sec:3.1} and \eqref{sec:3.4};
    \item For \(j=n\), run the mode-hopping independence sampler, \(x_n^{t+1} \sim Q(x_n^t, .)\), for global exploration, see \eqref{sec:3.5}.
\end{enumerate}
    \item Perform a predetermined number of \(s\) temperature swaps, through QuanTA-aided moves, as explained in \eqref{sec:3.3};
    \item Store the subsequent sample \(X^{t+1} = (x_0^{t+1}, \ldots, x_n^{t+1})\), go to step 1.
\end{enumerate}

\begin{remark} A comprehensive pseudocode version of ALPS, suitable for direct implementation, is presented in Appendix~\ref{sec:PseudoCode} of the online supplementary material. \end{remark}

\section{Theoretical Underpinnings of ALPS} 
\label{sec:theory}

This section introduces theoretical results addressing how the highest inverse temperature, $\beta_{\text{max}}$, must scale with dimension. The choice of $\beta_{\text{max}}$ is crucial, as the effectiveness of the ALPS algorithm fundamentally relies on ensuring the target distribution at the PA component’s highest annealed level is well approximated by a Gaussian mixture distribution.

In general, as the dimensionality of the problem increases,  $\beta_{\text{max}}$ needs to get larger to ensure that the local mode Laplace approximation is accurate. Hence choosing $\beta_{\text{max}}$ too small will mean that the PA component mixes poorly in the coldest temperature level.

Choosing $\beta_{\text{max}}$ extremely large to overcome this has a few issues. On a practical level, it  may result in  computational instability as modes collapse towards point masses. Furthermore, the HAT targets provide only approximate mass preservation, and care is therefore required when considering extremely large values of the inverse temperature; see \citet{GarethJeffNick}.
Finally, there is extra computational burden as one would need more (unnecessary) intermediate temperature levels in the PA component. 
Hence, one would like to choose $\beta_{\text{max}}$ sufficiently large to achieve non-degenerate acceptance rates but not so large that it makes round trips infrequent.

\subsection{Optimal Scaling of the Coldest Temperature Level}\label{sec:optCold}

\cite{Roberts2020} demonstrates that, at least under stylised conditions, if we can ensure that  jumping between modes is possible at the coldest temperature, then the entire algorithm (suitably time-scaled) converges to a process (Walsh Brownian Motion) which itself has a mixing time which is independent of $d$.
(While the exact result (Theorem 4) in that paper assumes that jumping between modes at the cold temperature always occurs, it is easy to see from the proof that it extends routinely to just requiring that jumps happen with probability which are bounded away from $0$.) Therefore this section will be dedicated to demonstrating that at a sufficiently cold inverse temperature (we need $\beta = \mathcal{O}(d)$) then the probability of accepting a move to a different mode in the cold temperature is bounded away from $0$.

For the sake of tractability the theoretical analysis will be focused on an approximation of the ALPS procedure. Suppose that the target is a $d$-dimensional mixture distribution where the components are comprised of location and scale transformations of a common distribution.

 Then, the approximation of the ALPS procedure that we analyse is a distribution on $\mathbb{R}^{d+1}$ where the extra dimension is acting as an augmentation that identifies which mixture component  the location is originating from. This is an extremely good approximation when the components are well separated which is the case at the coldest temperature levels that are being analysed here. 

\subsubsection{Assumptions on the Target Density}

Assuming a $d+1$-dimensional state space $\{ 1,2,\ldots, m \} \times \mathbb{R}^d$, the tempered target distributions at inverse temperature $\beta$ will be given by the iid product form
\begin{equation}
	\pi_\beta(k,x) \propto \sum_{j=1}^m w_j \left(\prod_{i=1}^d \left[\frac{1}{\sigma_{ij}} f\left(\frac{x_i -\mu_{ij}}{\sigma_{ij}}\right)\right]^\beta \left[\frac{1}{\sigma_{ij}} f\left(0\right)\right]^{1-\beta} \right)\mathbbm{1}_{\{k=j\}} . \label{eq:thmtargmod2}
\end{equation}
where $\sigma_{ij} \in \mathbb{R}_+$, $\mu_{ij}\in \mathbb{R}$  and $f$ is an un-normalised univariate density function such that:
\begin{enumerate}
	\item Assume that
	\begin{equation}
		f\in \mathcal C^5, ~~f(0)=1~~, f'(0)=0 ~~\text{and}~~\argmax_x f(x) =0 \label{eq:fbaseassump}
	\end{equation}
	with 0 being the unique global maximum.
	\item Defining $h(x):= \log f(x)$, assume 
	\begin{equation}
		H:= h''(0)<0 \label{eq:posdefmod}
	\end{equation}
	and that there exists $L\in \mathbb{R}_{+}$ such that
	\begin{equation}
		|h'''''(x)|<L. \label{eq:bddfourth}
	\end{equation}
	\item Polynomial tails such that there exist $\gamma, M \geq 1$ and $ K \in \mathbb{R}_{+}$ such that
	\begin{equation}
		\sup_{|x|>M} \frac{f(x)}{|x|^{-\gamma}} < K. \label{eq:polyT}
	\end{equation}
	\item \textit{Dutchman's Cap (Bac M\`{o}r) assumption}: \eqref{eq:posdefmod} implies that there exists $\delta_1>0$ such that for all $x \in B_{\delta_1}(0)$ then $f(x) \le \exp \left( - x^2 \frac{|H|}{4} \right)$; it is assumed that there exists a $\delta_1>\delta_2>0$ such that for all $x \in [-M,M]\backslash B_{\delta_2}(0)$ (where $M$ is specified in \eqref{eq:polyT}) then $f(x)\le  \exp \left( - \delta_2^2 \frac{|H|}{4} \right)$. Hence, for $x \in [-M,M]$ 
	\begin{equation}
  f(x) \le\begin{cases}
     \exp \left( - x^2 \frac{|H|}{4} \right), &  \text{if $|x|<\delta_2$}.\\
     \exp \left( - \delta_2^2 \frac{|H|}{4} \right), & \text{if $\delta_2\le |x| \le M$}.
  \end{cases}
	\label{eq:BacMor}
\end{equation}
\end{enumerate}
Notice that the tempered distribution in \eqref{eq:thmtargmod2} is well approximated by the HAT distributions when the modes become well separated, a regime that naturally arises as the inverse temperature $\beta$ increases. These assumptions are therefore natural and well suited to the asymptotic regime of interest.

Furthermore, the equivalent procedure in this theoretical analysis setting to the leap point independence sampler proposals is to propose from the Gaussian mixture distribution given by
\begin{equation}
	q_\beta(k,x) = \sum_{j=1}^m w_j \left(\prod_{i=1}^d \left[\frac{1}{\sigma_{ij}} \phi_\beta \left(\frac{x_i -\mu_{ij}}{\sigma_{ij}}\right)\right] \right) \mathbbm{1}_{\{k=j\}}. \label{eq:gaussianPropapoproxtheory}
\end{equation}
with $\phi_\beta$ denoting the pdf of a Gaussian of the form $N\left( 0, \left(\beta|H|\right)^{-1} \right)$.

Moreover, we denote the expected acceptance probability of a Mode Leap Point Independence sampler proposal at the coldest temperature level in this $d$-dimensional setting as
\begin{equation}
a(d) = \mathbb{E}(A_{IS}(x,y)), \label{eq:asymptotic_acceptance_independence_sampler}
\end{equation}
where the independence sampler acceptance probability, $A_{IS}$, is defined in \eqref{eq:indaccrate} and the expectation is taken with respect to $x\sim \pi_{\beta_{max}}$ as defined in \eqref{eq:thmtargmod2} and $y \sim q_{\beta_{max}}$ as defined in \eqref{eq:gaussianPropapoproxtheory}.

Under the above conditions, we now state the following result about the temperature required to ensure that  effective inter-modal mixing can occur.

\begin{theorem}[Dimensionality-scaling for the Coldest Temperature Level] \label{Thm:scaling}

Assume that the Mode Leap Point Independence sampler with proposal distribution given in \eqref{eq:gaussianPropapoproxtheory} is used to target the $(d+1)$-dimensional target distribution specified by \eqref{eq:thmtargmod2}, where the marginal iid components $f$ satisfy \eqref{eq:fbaseassump}, \eqref{eq:posdefmod}, \eqref{eq:bddfourth}, \eqref{eq:polyT} and \eqref{eq:BacMor}. 

If \begin{equation}
	\beta_{\text{max}} = \ell d  .  \label{eq:scalingnec_smooth}
\end{equation} for some $\ell \in \mathbb{R}_+$ then
\begin{equation}
    \lim _{d \rightarrow\infty} a(d) = 2 \Phi \left( -\frac{1}{\sqrt{2}} \sqrt{\frac{5h'''(0)^2}{12 \ell (-h''(0))^3}}   \right) >0 \nonumber
\end{equation}
where $h(x)=\log f(x)$ and $\Phi$ is the CDF of a standard Gaussian.
\end{theorem}
\begin{proof}
See Appendix~\ref{proof:regular}.
\end{proof}
\begin{remark}
\label{rem:whyneedcold}
Theorem~\ref{Thm:scaling} has been stated within the regime where the asymptotic acceptance probability is positive so that the algorithm can successfully move between modes. However its analysis as seen in the proofs also illustrate why simpler approaches which do not use annealed temperatures are flawed. Suppose we attempt independence sampler moves at the temperature of interest, then
\begin{equation}
	a(d)\approx 2 \Phi \left( -\sqrt{\frac{d}{2} }\sqrt{\frac{5 h'''(0)^2}{12  (-h''(0))^3}}   \right) \nonumber
\end{equation}
which is exponentially small in $d$ and therefore in even moderate dimensional problems, this approach won't work, even in this simplified IID target context.
\end{remark}

The limiting acceptance probability in Theorem~\ref{Thm:scaling} depends explicitly on the local asymmetry through the third derivative of the log-density at the mode.
Specifically, the quantity
\begin{align}
\gamma
\;:=\;
-\,\frac{h'''(0)}{\bigl(-h''(0)\bigr)^{3/2}}
\label{eq:skewness}
\end{align}
represents the \emph{local skewness} of the target density around its mode.
This standardized third derivative arises naturally in third-order Laplace and saddlepoint approximations and quantifies departures from local Gaussian behaviour; see, for example, \citet{danielsSaddlepointApproximationsStatistics1954}.
From a Bayesian perspective, such skewness occurs naturally in posteriors arising from asymmetric or non-Gaussian priors, for instance 
a Gamma prior combined with a Poisson likelihood. More generally, local skewness reflects an intrinsic asymmetry of the distribution in a neighbourhood of its mode, whereby probability mass accumulates unevenly on either side of the mode.

Theorem~\ref{Thm:scaling} shows that, even as the magnitude $|\gamma|$ increases and the local asymmetry becomes more pronounced, the Mode Leap Point Independence sampler retains a strictly positive asymptotic acceptance probability under linear scaling of the coldest temperature level with dimension. Moreover, any target acceptance probability in $(0,1)$ can be achieved, under linear scaling $\beta_{\max} = \ell d$, by selecting a sufficiently large constant $\ell$, with higher skewness requiring a larger value of this constant while preserving the same linear scaling with dimension. This shows that ALPS can accommodate strong skewness without altering its mixing capabilities.

\subsection{Relaxing the regularity conditions}\label{sec:optColdRelaxed}
The previous section established the primary result under certain smoothness and regularity conditions on the target density.
Yet, it remains an open question whether these assumptions can be loosened.
While $C^2$ smoothness at the mode is necessary to ensure weak convergence to a Gaussian distribution, Theorem~\ref{Thm:scaling} highlights the central role played by the local skewness coefficient \eqref{eq:skewness} at the mode, in determining the performance of the algorithm.
In the favourable case where this coefficient vanishes, the local behaviour of the target is effectively symmetric, and the $\beta_{\max}=\mathcal{O}(d)$ scaling becomes more than sufficient, yielding an asymptotic acceptance probability satisfying $\lim_{d\to\infty} a(d)=1$.
This suggests that in the absence of local skewness, a substantially milder annealing regime—potentially sublinear in $d$—may already suffice to maintain a non-vanishing acceptance rate for the Mode Leap Point Independence sampler.
This naturally raises the question of what happens when skewness arises not from a smooth third derivative, but instead from piecewise-smooth behaviour, leading to a discontinuity of the third derivative at the mode.

To explore the consequences of relaxing the continuity of the third derivative, we turn once again to a tempered target $\pi_\beta$ in the form \eqref{eq:thmtargmod2}, relying on the original assumptions for $f$ except for the following relaxed ones.
\begin{enumerate}
    \item Assume that
\begin{equation}\label{eq:fnewbaseassump}
\begin{aligned}
f \in C^2(\mathbb{R}),\quad f &\in C^5((-\infty,0)) \cap C^5((0,\infty)), \\
f(0) = 1,\quad f'(0)&=0,\quad \argmax f(x)=0,
\end{aligned}
\end{equation}
with \(0\) being the unique global maximum.

    \item Furthermore, assume that\footnote{
We assume continuity at \(0\) up to the second derivative.
Higher-order derivatives are assumed continuous only on
\(\mathbb{R}_\pm\); the corresponding one-sided limits at \(0\)
are allowed to differ.
} \begin{align}
        \forall k \in \{3,4,5\}, \: f^{\bp{k}}\text{ extends continuously  both on }(-\infty,0]\text{ and  on }[0,\infty).
\label{eq:fextendedbycontinuity}
    \end{align}
\end{enumerate}
We now state the following result showing that, even under these relaxed conditions, the scaling of the inverse temperature required to ensure inter-modal mixing remains unchanged.

\begin{theorem}[Updated dimensionality-scaling for the Coldest Temperature Level]\label{Thm:scalingRelaxed}
Assume that the Mode Leap Point Independence sampler with proposal distribution given in \eqref{eq:gaussianPropapoproxtheory} is used to target the $(d+1)$-dimensional target distribution specified by \eqref{eq:thmtargmod2}, where the marginal iid components $f$ satisfy \eqref{eq:fnewbaseassump}, \eqref{eq:fextendedbycontinuity} \eqref{eq:posdefmod}, \eqref{eq:bddfourth}, \eqref{eq:polyT} and \eqref{eq:BacMor}.

If \begin{equation}
	\beta_{\text{max}} = \ell d  .  \label{eq:scalingnec_non_smooth}
\end{equation} for some $\ell \in \mathbb{R}_+$ then
\begin{align}
    \lim_{d \to \infty} a(d) = 2 \Phi\of{ - \frac{1}{\sqrt{2}} \sqrt{\frac{15 \bp{h^{\prime \prime \prime}\of{0_+}^2 + h^{\prime \prime \prime}\of{0_-}^2} - \frac{4}{\pi} \bp{h^{\prime \prime \prime}\of{0_+}- h^{\prime \prime \prime}\of{0_-}}^2}{72 \ell \bp{-h^{\prime \prime}\of{0}}^3} }} > 0 \nonumber \label{eq:ultimate}
\end{align}

where $h(x)=\log f(x)$ and $\Phi$ is the CDF of a standard Gaussian.
\end{theorem}

% \begin{remark}
%     One can independently observe the correspondence  between the extended result in Theorem~\ref{Thm:scalingRelaxed} and the initial findings of Theorem~\ref{Thm:scaling} when the marginal iid components $f$ are chosen such that the third derivative of $h$ is continuous at $0$.
% \end{remark}
\begin{proof}
    See Appendix~\ref{proofRelaxed}.
\end{proof}
Theorem~\ref{Thm:scalingRelaxed} establishes that discontinuities in the third derivative of the target distribution at its modes do not prevent the Mode Leap Point Independence Sampler from achieving a positive acceptance rate in the asymptotic regime, provided $\beta_{\text{max}}$ scales as $\mathcal{O}(d)$.
In this piecewise-smooth setting, the asymptotic acceptance probability decomposes into a symmetric contribution, corresponding to the overall magnitude of local skewness on either side of the mode, and an asymmetric contribution induced by the jump in the third derivative, which captures the mismatch between the left and right local geometries.
In particular, when the third derivative is continuous at the mode, the skewness jump vanishes and the result reduces to the smooth-case acceptance of Theorem~\ref{Thm:scaling}.

Such piecewise-smooth behaviour arises in Bayesian settings where the log-posterior is constructed from components that are smooth on each side of the mode but differ in higher-order behaviour, for example due to asymmetric or piecewise-defined priors, truncated or constrained parameter spaces.
For a more in-depth investigation of ALPS under less restrictive smoothness and regularity assumptions, readers are referred to \citet[Sections~3 and~4]{queniat2024alps}.

\section{Computational Complexity of ALPS} 
\label{sec:compcomplex}

In the forthcoming Section~\ref{sec:Empiricalexamples}, we empirically compare the performance of ALPS with versions of the PT and mode leaping approaches on some challenging multimodal examples. Comparisons of computational efficiency are made between the algorithm's performances and the results are highly complimentary to ALPS.

With the scaling results presented in Section~\ref{sec:theory} it is possible to theoretically gauge the scalability of ALPS as the dimensionality of the problem grows, at least in stylised problems where explicit calculation and limits are accessible. To be transparent, the problem of finding modes in the full generality of a $d$-dimensional multimodal distribution is well-known to scale exponentially with $d$. This is since modes can, relatively, have exponentially  small (in $d$) basins of attraction e.g.\ \cite{Fong2019}. However the aim of ALPS is to achieve a robust and efficient method that explores the target distribution highly effectively, conditional upon finding the mode points. So we might hope to improve on this.

Indeed, the question of how the PA component scales with $d$ has been addressed in the accompanying paper \citet{Roberts2020}. The approach in \citet{Roberts2020} is to analyse the inverse-temperature process which converges within certain stylised sequences of target density as $d\to \infty $ to a limiting skew-Brownian motion process. This is attained through appropriate time re-scaling of the process as the dimension grows in the setting when the modes are from an exponential power family. 
The resulting complexity can be read off from the time re-scaling required to reach an ergodic non-trivial limit.

We refer the reader to Section 5 and in particular Corollaries 2 and 3 of \citet{Roberts2020} for a detailed description of these results (though note the typo in Corollary 3 which should be clear on reading the previous paragraph). For our purposes here, we note that these results suggest that the full version of  ALPS has a mixing time that scales as $\mathcal{O}(d)$ and that if the QuanTA transformation-aided swap moves are not deployed (in favour of standard parallel tempering swap moves) then the mixing time scales as $\mathcal{O}(d[\log (d)]^2)$.

It is noteworthy that the  scalability of the full version of ALPS is identical in order to that of the random walk Metropolis algorithm in a uni-modal problem. As such, the inter-modal mixing is coming at no extra mixing time cost to now be able to achieve global exploration of the Markov chain. Of course additional cost is incurred computationally and this is now made explicit.

In this setting (and approximately outside it) the optimal temperature schedule is geometrically spaced \citep{Tawn2018a}. As such, the number of temperature levels required to reach a cold state (inverse-temperature $\mathcal{O}(d)$ by Theorem~\ref{Thm:scaling}) using optimally spaced inverse-temperatures
is $\mathcal{O}(d^{1/2})$. Using the arguments from
\cite{Roberts2020} this implies that the mixing time of the inverse temperature is $\mathcal{O}(d)$.

The leading cost for updating each temperature level per iteration of the algorithm comes from a $d\times d$-dimensional matrix multiplication with a $d$-dimensional column vector in the computation of the mode allocation; this has cost $\mathcal{O}(d^2)$. 

Combining the cost per iteration with the mixing time cost results from \cite{Roberts2020} suggest that the scalability of ALPS with dimension, $d$ is given by:
\[ \mathcal{O}(d^3) .\]

This complexity is a substantial improvement  on that of standard simulated and parallel tempering, which are demonstrated in
\cite{woodard2009conditions} and \cite{woodard2009sufficient} 
to be no better than exponential in $d$. Of course, this cost for ALPS can only be achieved once the mode points are at least approximately known, and in full generality this is a hard problem in itself.

\section{Applications of ALPS} 
\label{sec:Empiricalexamples}

This section contains three empirical studies for ALPS. The first is a synthetic example involving a complex, twenty-dimensional target distribution featuring a mixture of highly-skewed components that have differing scales. Due to its synthetic nature we can compare the performance of ALPS against benchmark algorithms, since the number of modes and their locations are known. The second example demonstrates the practicality of ALPS on a real-world data application from an SUR model in economics \citep{Zellner1962}. It is only relatively recently that SUR models have  been identified as having issues with multimodality \citep{Drton2004, Sundberg2010}. We show that ALPS performs well on the bimodal example from \cite{Drton2004} as well as on a fifteen-dimensional financial dataset \citep{Grunfeld1958} that was not previously known to exhibit multimodality. 
Finally, the third example illustrates another application of ALPS on a real-world data problem, with a spectral density model for analytical chemistry \citep{Ritter1994,Moores2016raman}.
Full R source code for all applications is included in the web-based supporting materials.

\subsection{A Synthetic 13-Modal Benchmark Distribution}
\label{sec:synthetic}
\subsubsection{Example 1: The 20-Dimensional Case}
\label{sec:Ex1}

The target distribution is given by a mixture of thirteen well-separated, twenty-dimensional skew-normal distributions with heterogeneous scaling of the modes and homogeneous skew. Its density function corresponds to the convex combination
\begin{equation*}
    \pi(x) \propto \sum_{k=1}^{13}  \omega_k \prod_{j=1}^{20}  \phi\left( \frac{x_j - (\mu_{k})_j }{\sigma_k}\right)\Phi\left( \alpha\frac{x_j - (\mu_{k})_j }{\sigma_k}\right)
\end{equation*}
where the mixing weights $\omega_k > 0\ \forall k \in \{1,\ldots,13\}$, skewness parameter $\alpha =10$, and where $\phi(\cdot)$ and $\Phi(\cdot)$ denote the density and CDF of a standard normal distribution, respectively. The mode locations $\vec\mu_k$ are distributed in a similar manner to the 13-mode example in \cite{Tjelmeland2001}, except in a 20-dimensional hypercube instead of only 2 dimensions. This example has been designed to illustrate the problem with torpid mixing of parallel tempering  \citep{woodard2009conditions} due to the heterogeneity of the scales $\sigma_k$ of the mixture components.

We compare the performance of ALPS against the gold standard of optimised parallel tempering and also the version of ALPS which doesn't anneal, i.e.\ discovers modes with the hot state mode finder and uses an independence sampler based on the Laplace approximation to the modes. In this section we use the shorthand LAIS (Laplace approximation independence sampler) to refer to this third algorithm. Although the LAIS employs mode-jumping proposals as in \citet{Tjelmeland2001}, the latter algorithm performs two optimisation steps at each iteration to find the nearest modes, rather than running an Exploration Component as a parallel chain. This reduces the computational cost of LAIS considerably, without changing its convergence properties. The strong skewness of the mixture components will prove to be critical for the acceptance rate of LAIS, since the error in the Laplace approximation increases with dimension, as explained by Remark~\ref{rem:whyneedcold} in Section~\ref{sec:optCold}.

The basic configurations of the three algorithms, the different levels of tuning required, the sample output from the chains, and the comparison of elapsed run times are described as follows.

The LAIS only uses a single temperature level which is the target level. For ALPS, 7 temperature levels were used for the annealing part of the algorithm with a coldest inverse temperature $\beta_{\text{max}}= 4096$ and a cooling schedule given by powers in the vector  $(1  ,  4  , 16  , 64 , 256, 1024 , 4096)$. The maximum level of cooling was excessively cold as non-degenerate acceptances for mode jumping moves were achieved at warmer temperatures (in agreement with the results of Section~\ref{sec:theory}). However, because mixing across temperatures proves extremely quick at colder temperatures and evaluating the synthetic target is inexpensive, it is possible to use much lower temperatures to get higher acceptance rates for the mode-jumping moves. In this example the mode jumping acceptance rates were around 0.85.

In contrast, PT required 23 temperature levels which were geometrically spaced with common ratio of 0.6. This gave the suggested optimal temperature swap rates of 0.234 between consecutive temperature  level pairs along with a temperature that appeared hot enough to explore the entire state space. 
Note that the minimum power $\beta_{min} = 0.6^{22} \approx 1.3 \times 10^{-5}$ used for PT is slightly hotter than the inverse temperature used for mode-finding in ALPS, $\beta_{\text{hot}} = 2.5 \times 10^{-5}$, in order to make a fair comparison between the two algorithms.
Even so, the PT algorithm fails to find all 13 modes even after 500,000 iterations. This is due to the differences in scale $\sigma_k$ between the mixture components, as predicted by \cite{woodard2009sufficient}. Due to torpid mixing, it could take several million more iterations for PT to converge.

In all three algorithms random walk Metropolis was utilised for the within temperature moves; in the case of ALPS and LAIS the random walk Metropolis moves used pre-conditioned  covariance structure informed by the covariance matrix estimated at the local mode point by the Laplace approximation. In all cases the within temperature moves were tuned to have the suggested optimal acceptance rate of 0.234.

The most difficult aspect of tuning ALPS was with respect to the Exploration Component for mode-finding. This was also highlighted in the real-data example in the following section. As mentioned previously, finding ``narrow modes'' is never guaranteed in the finite run of the algorithm if their location is unknown \citep{Fong2019}. In both LAIS and ALPS where mode-finding is required, it was found that an inverse temperature of $\beta_{\text{hot}} = 0.000025$ worked very well and rapidly found the modes; typically within a few thousand iterations of the algorithm. %As one would expect the performance of the mode finder is sensitive to its tuning and therein the algorithm's success also. The observations made in these empirical studies has suggested that the temperature of the hot state should also move dynamically to increase the robustness of this part of the algorithm.
However, in this example the number of modes was known and so the mode finding could be optimised for ALPS and LAIS; something not possible in a real data example where one may continue mode searching throughout the duration or even doing some pre-computation dedicated to mode searches (as was the case in the following real-data example in Section~\ref{sec:SURexample}).

Each of the three algorithms was run long enough to generate 500,000 samples at the target temperature (including the ``burn-in'' samples from the transient phase).
For the purpose of comparison, the chains were all initiated from the first mode location, $\mu_1$ in each instance.
The EC took less than 3,000 iterations to find all 13 modes. Since this chain only searches for modes at every fourth iteration, this corresponds to 12,000 iterations of ALPS or LAIS.
Based on the trace plots, we discarded the first 70,000 iterations as burn-in. However, note that ALPS is the only algorithm that has actually converged to the target distribution, $\pi(\cdot)$. This means that the burn-in required for LAIS or PT is at least 500,000 iterations and possibly much more. The elapsed run time for ALPS was 3.2 hours, while it took 2.5 hours for LAIS and 2.3 hours for PT. The run time of LAIS was dominated by the computational cost of the EC chain, even though this was a small proportion of the total number of iterations. This is due to the large number of likelihood evaluations required for each optimisation step. 
The difference between ALPS and LAIS is due to the multiple cold temperature levels in ALPS. Although PT has the fastest run time of the 3 algorithms for a fixed number of iterations, it should again be stressed that PT has not yet converged at that point.

\begin{figure}
  \includegraphics[width=0.55\linewidth]{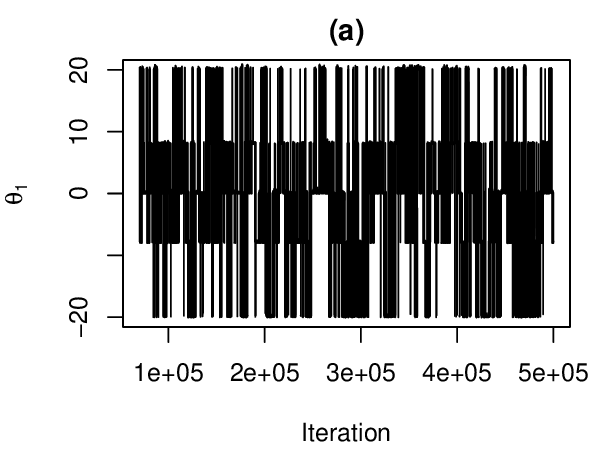}
\includegraphics[width=0.4\linewidth]{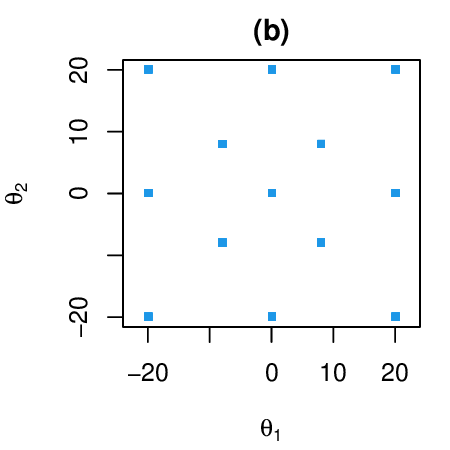}
  \includegraphics[width=0.55\linewidth]{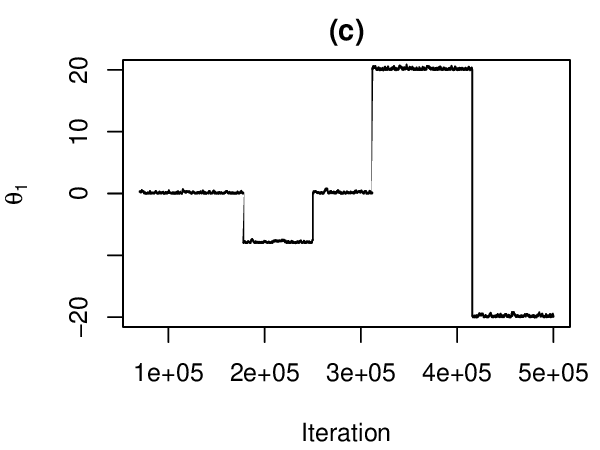}
\includegraphics[width=0.4\linewidth]{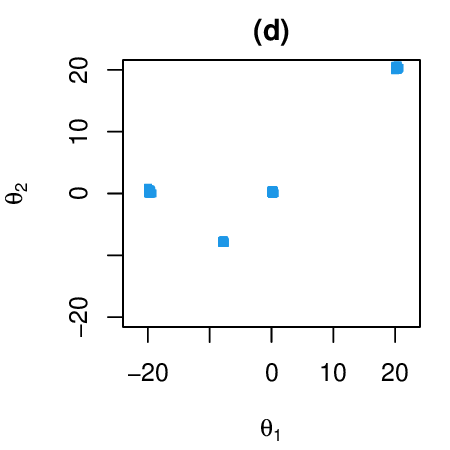}
  \includegraphics[width=0.55\linewidth]{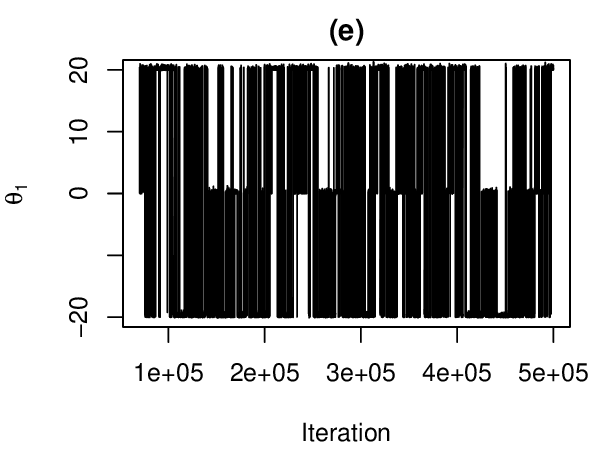}
\includegraphics[width=0.4\linewidth]{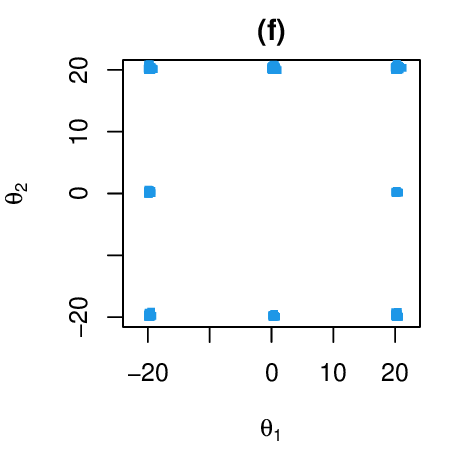}

  \caption{For the algorithms ALPS (top), LAIS (middle) and PT (bottom) respectively, the trace plots (left) of the first component of the Markov chain
  as well as marginal density plots (right) of the first 2 components at the target temperature, $\beta=1$. It can be seen from the trace plot (a) that ALPS mixes well and (b) shows that it visits all 13 modes. In contrast, (c) shows that LAIS very rarely jumps between modes and (d) only visits 4 of the modes. From (e) PT appears to be mixing well, but (f) shows that it has only visited 8 modes.
}
  \label{fig:TracePlotComp}
\end{figure}

Figure~\ref{fig:TracePlotComp} shows trace plots (on the left side) for the first component of the Markov chain at $\beta=1$ for each of the three algorithms after a burn-in is removed.
Corresponding marginal density plots for the first two components are shown alongside.
A successful outcome would be a plot where the trace line jumps between the modes of the marginal component, centred at values -20, -8, 0, 8 and 20, as shown in (a) for ALPS. Although $d=20$, the first two coordinates of the 13 modes are unique, thus the mode locations can be distinguished in the marginal plots shown on the right side of Figure~\ref{fig:TracePlotComp}.
This enables us to verify whether the chain is actually visiting all 13 modes.
For example, (b) shows  ALPS was the only algorithm to successfully jump between all of the mode points.

Even though LAIS could propose jumps between all 13 modes, the empirical acceptance rate at the target temperature $\beta=1$ is around $1.4 \times 10^{-5}$, which means only 5 proposals are accepted out of 350,000 attempts (one of these 5 jumps is during the burn-in phase). Due to the skewness of the mixture components, the error in the Laplace approximation is too large for LAIS to work effectively. This is evident in (c), while (d) shows that LAIS has only visited 4 of the modes. In contrast, the acceptance rate for mode-jumping proposals in ALPS was 0.85 at the coldest temperature level.

The trace plot for PT (e) appears to show that it is mixing well, but actually it is only visiting modes at -20, 0 or 20, not at -8 or 8. This demonstrates the danger of relying on one-dimensional trace plots to assess convergence in high-dimensional problems. In a real data example where the true modes were unknown, we might erroneously conclude that PT has successfully converged. We have artificially constructed this example so that we can see all of the modes in the two-dimensional marginal plot, (f). Here it is evident that PT has only visited the 8 modes with $\sigma_k = 0.3$ and has been unable to find the 5 smaller modes with $\sigma_k = 0.2$. Although the difference in scale might appear small, when $d=20$ the relative volume of the domain of attraction is approximately $0.2^{20}$ or $10^{-14}$ for the smaller components and $0.3^{20}$ or $3.5 \times 10^{-11}$ for the larger ones, a difference of several orders of magnitude.

In addition to the simulation study presented here, we have also included a performance comparison of ALPS, LAIS and PT across 10 repeated runs of each algorithm in Appendix~\ref{sec:Ex3}. This example has 4 modes instead of 13 and we run each algorithm for 200,000 iterations, discarding the first 20,000 as burn-in.

\subsubsection{Scaling with Dimension at the Cold Level}
\label{sec:scalingexample}

We now empirically validate the dimensional scaling result of Theorem~\ref{Thm:scaling} for the coldest temperature level using the synthetic multimodal example introduced above. The aim of this experiment is to assess whether the asymptotic acceptance rate derived in Theorem~\ref{Thm:scaling} is verified in high-dimensional settings.

Given a prescribed target acceptance rate $a \in (0,1)$ at the cold level, Theorem~\ref{Thm:scaling} implies that the linear scaling
\begin{align}
\beta_{\text{max}} = \ell(a)\, d \label{eq:optimal_scaling}
\end{align}
yields a non-degenerate asymptotic acceptance rate. Inverting the limiting expression in Theorem~\ref{Thm:scaling}, the constant $\ell(a)$ in \eqref{eq:optimal_scaling} is given explicitly by
\begin{align}
\ell(a)
=
\frac{5 h'''(0)^2}
{24 \, (-h''(0))^3 \,
\left[\Phi^{-1}(a/2)\right]^2 }, \label{eq:ell_optim}
\end{align}
where $h(x) = \log f(x)$ and $\Phi$ denotes the standard Gaussian distribution function. 
In this synthetic example, the function $f$ is fully specified, so that \eqref{eq:ell_optim} can be computed explicitly.

Using this expression, we compute $\beta_{\text{max}}$ for increasing dimensions and apply the Mode Leaping Independence Sampler to the same synthetic mixture model, keeping the number of modes fixed. In this experiment, the skewness parameter is set to $\alpha = 5.0$. For each dimension, we report the empirical acceptance rate of the mode-jumping proposals.

The results are shown in Figure~\ref{fig:cold_scaling}. As the dimension increases, the empirical acceptance rates approach the prescribed target value, providing empirical support for the linear scaling $\beta_{\text{max}} = \ell(a)\, d$, thereby verifying the asymptotic result stated in Theorem~\ref{Thm:scaling}.

\begin{figure}
    \centering
    \includegraphics[width=\linewidth]{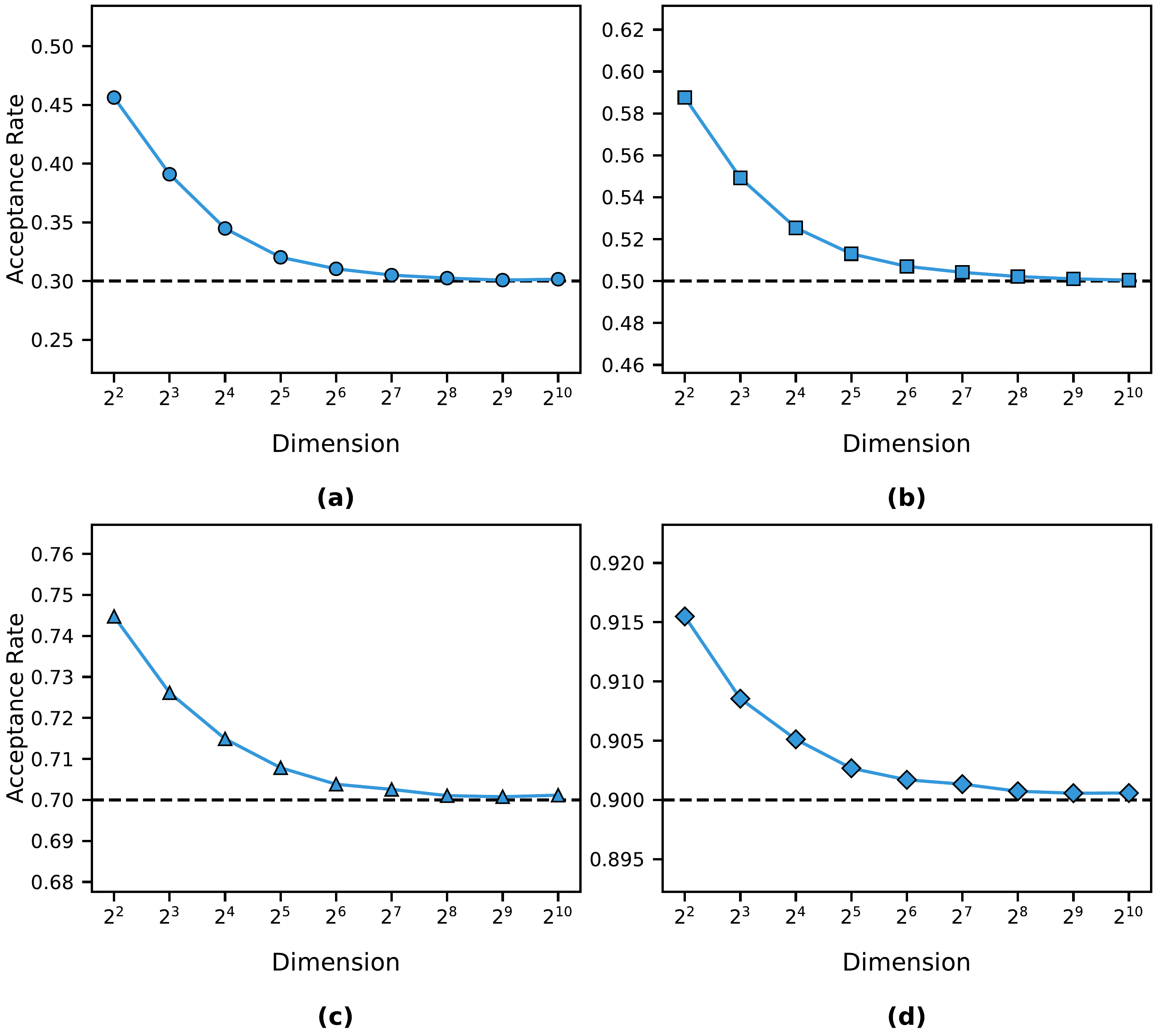}
    \caption{
    Empirical acceptance rates of the mode-jumping proposals at the coldest temperature level as a function of dimension, averaged over $128$ runs of $16384$ iterations.
    For each dimension $d$, the cold inverse temperature is set to $\beta_{\text{max}} = \ell(a)\, d$, where $\ell(a)$ is computed according to \eqref{eq:ell_optim}. Target acceptance rates $a$ are set to (a) 0.3, (b) 0.5, (c) 0.7 and (d) 0.9. }
    \label{fig:cold_scaling}
\end{figure}

\subsection{Example 2: Seemingly-Unrelated Regression Model}
\label{sec:SURexample}

Multi-response models \citep[ch. 14]{Searle2017} arise from datasets that have more than one dependent variable, where these variables are correlated with each other. An important example is Seemingly-Unrelated Regression (SUR or SURE), which was introduced by \cite{Zellner1962}. A wide variety of different SUR models are described by \cite{Srivastava1987} and the review of \cite{Fiebig2001}. These models involve a system of $M$ linear regression equations, one for each response, $m = 1, \dots, M$:
\begin{equation}
\vec{y}_m = \mathbf{X}_m \vec\theta_m + \vec\epsilon_m ,
\end{equation}
where $\vec{y}_m$ is a column vector of $N_m$ observations for the $m$th response, $\vec\theta_m$ is a vector of $J_m$ regression coefficients, $\mathbf{X}_m$ is an $N_m \times J_m$ design matrix of covariates, and $\vec\epsilon_m$ is a vector of residual errors. For notational convenience, we assume that $N_1 = N_2 = \dots = N_M = N$ and likewise that $J_1 = J_2 = \dots = J_M = J$, although in full generality that is not always the case. See Appendix~\ref{appendix:SUR} for further introductory information on the SUR model.

\cite{Drton2004} provided a specific example of a dataset where the SUR model exhibits two local modes, which we will examine in more detail in the following section.

\subsubsection{Bimodal Example}

In the simplest, bivariate case with $M=2$ responses and only $J1$ covariates per response, the model  can be written as:
$$\left[\begin{array}{c}
\vec{y}_1 \\ 
\vec{y}_2
\end{array}\right] = \left[\begin{array}{cc}
\vec{x}_1 & \vec{0} \\
\vec{0} & \vec{x}_2
\end{array}\right] 
\left[\begin{array}{c}
\theta_1 \\
\theta_2
\end{array}\right]  + \left[\begin{array}{c}
\vec\epsilon_1 \\
\vec\epsilon_2
\end{array}\right] .
$$
\cite{Drton2004} featured an example dataset with $N=8$ observations. The profile likelihood, derived in Appendix~\ref{appendix:SUR}, is illustrated in Figure~\ref{fig:profLikeSUR}. There are two modes at $\boldsymbol\theta =$ (0.78, 1.54) and (2.76, 2.50) and a saddle point at (1.62, 2.03). The iterative algorithm is initialised at the ordinary least squares estimate, $\hat{\boldsymbol\theta} =$ (1.25, 1.78). Using the R package `systemfit' \citep{Henningsen2007} with a tolerance level of $10^{-6}$, the algorithm of \cite{Zellner1962} converges to the first mode after 25 iterations. However, the profile log-likelihood at this mode is -27.72, whereas at the other mode it is -27.35. This algorithm has therefore failed to find the global maximum.

\begin{figure}
  \centering
  \begin{tabular}[b]{c}
    \includegraphics[width=0.6\textwidth]{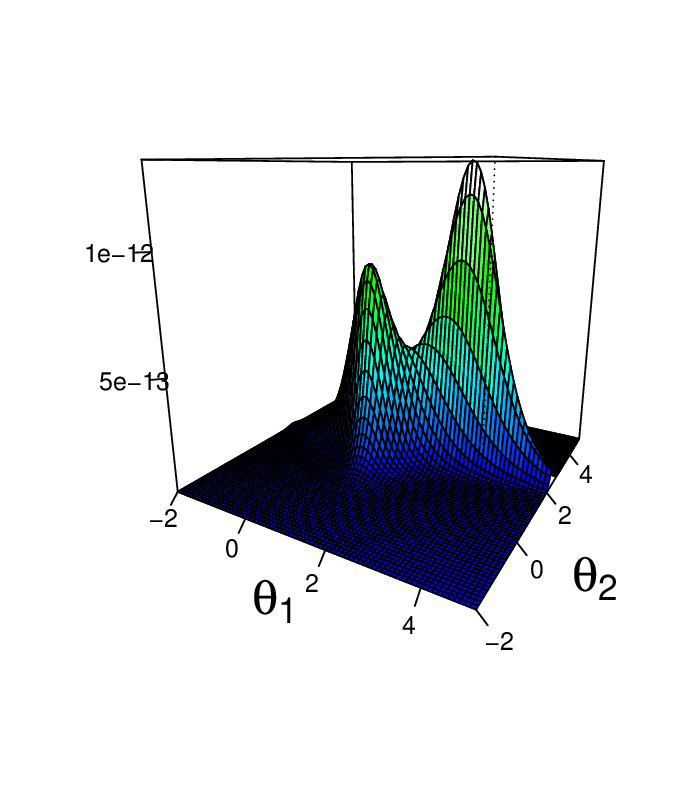} \\
    \small \textsf{(a)}
  \end{tabular} \qquad
  \begin{tabular}[b]{c}
    \includegraphics[width=0.6\textwidth]{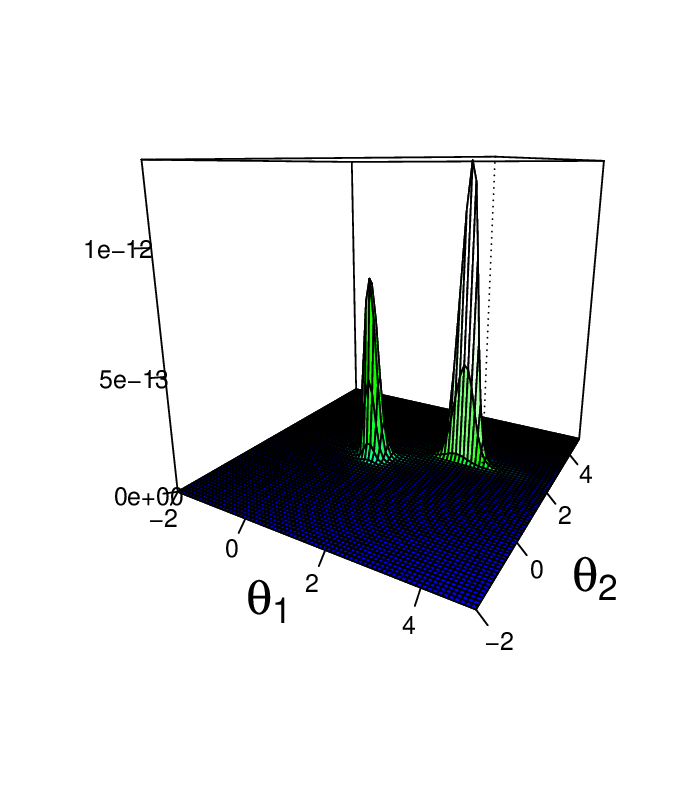} \\
    \small \textsf{(b)}
  \end{tabular}
  \caption{(a)~3D perspective plot of the profile likelihood for the bivariate SUR model, adapted from \citet[Fig. 1]{Drton2004}; (b)~Annealed likelihood at the coldest temperature level used in ALPS.}
  \label{fig:profLikeSUR}
\end{figure}

\begin{figure}
        \centering
  \begin{tabular}[b]{c}
                \includegraphics[width=0.4\linewidth]{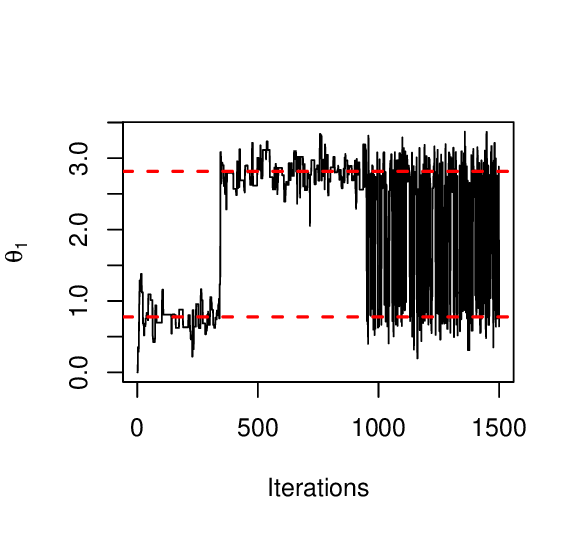} \\
    \small \textsf{(a) Traceplot for $\theta_1$.}      
  \end{tabular}%
\qquad
  \begin{tabular}[b]{c}
                \includegraphics[width=0.4\linewidth]{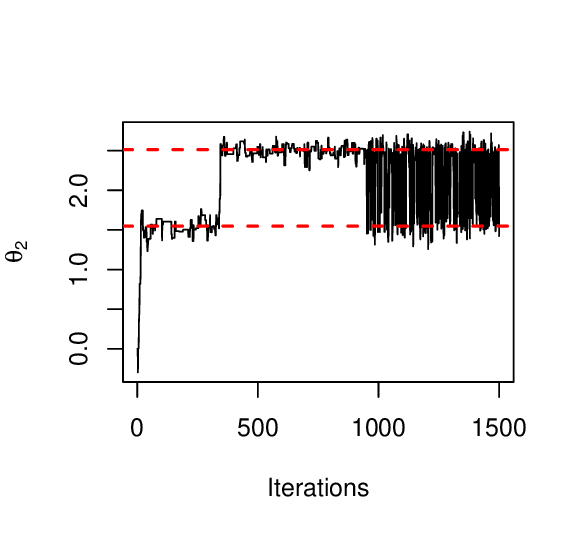} \\
    \small \textsf{(b) Traceplot for $\theta_2$.} 
  \end{tabular}%
\qquad
  \begin{tabular}[b]{c}
                \includegraphics[width=0.4\linewidth]{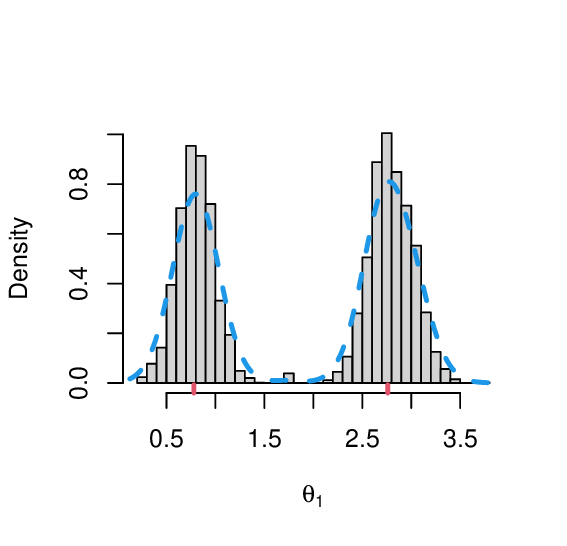} \\
    \small \textsf{(c) Histogram for $\theta_1$.} 
  \end{tabular}%
\qquad
  \begin{tabular}[b]{c}
                \includegraphics[width=0.4\linewidth]{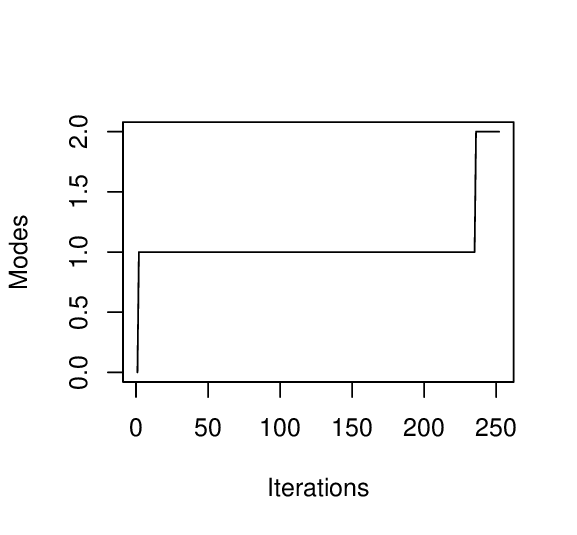} \\
    \small \textsf{(d) Exploration Component.}  
  \end{tabular}%
\caption{Results for ALPS with the bivariate SUR model from \cite{Drton2004}. The red, dashed lines show the locations of the two modes. Note that the ratio between the hot-state mode finder and the other chains is 1:4, so 236 iterations of the Exploration Component correspond to 944 iterations of the other chains. Therefore, the first 1,000 iterations should be discarded as burn-in. }
\label{f:ndvi}
\end{figure}
We run ALPS using the same profile log-likelihood function, with  $\beta_{\text{max}} = 10$ and a geometric cooling schedule composed of 
three inverse-temperature levels $(1, \sqrt{10}, 10)$ as well as the hot state 0.5. The Exploration Component locates both modes after 236 iterations, as shown in Figure~\ref{f:ndvi}~(d), and ALPS takes  2.6 seconds for 10,000 iterations in total. The MCMC trace plots for the two parameters at the coldest temperature are shown in Figures \ref{f:ndvi}~(a) and \ref{f:ndvi}~(b), respectively, and a histogram for $\theta_1$ is shown in Figure \ref{f:ndvi}~(c). The jump rate for mode-hopping proposals was 0.9 at the coldest temperature, with acceptance rates of 0.067 for the EC, 0.265 for at the intermediate temperature and 0.331 at the target temperature. The temperature-swapping acceptance rate was 0.508 between the first two temperatures, and  0.847 at the coldest temperature.

\subsubsection{Investment Demand} \label{sec:SURapplication}
\cite{Zellner1962} illustrated his method using an investment equation with $J=3$ terms,
$$
\vec{y}_m = \vec{x}_{1,m} \theta_{1,m} + \vec{x}_{2,m} \theta_{2,m} + \vec{x}_{3,m} \theta_{3,m} +  \vec\epsilon_m ,
$$
where $y_{i,m}$ is the gross investment by firm $m$ during the $i$th year, $\vec{x}_{1,m}$ is a vector of ones (so that $\theta_{1,m}$ is a firm-specific intercept), $\vec{x}_{2,m}$ is the market value of the firm, and $\vec{x}_{3,m}$ is its capital stock. The dataset of U.S. manufacturing firms was originally published by \cite{Grunfeld1958} and has since received considerable attention in the econometrics literature, as reviewed by \cite{Kleiber2010}. We will consider the first $N=15$ years of data, from 1935 to 1949, for $M=5$ firms (General Motors, Chrysler, General Electric, Westinghouse, and US Steel), so the parameter space has 15 dimensions in total. These data are available in the R package `systemfit' \cite{Henningsen2007}.

The iterative algorithm of \cite{Zellner1962} takes 52 iterations to converge to a local mode, with a sum of squared residuals (SSR) of 216,943  and profile log-likelihood of -263.7. We run ALPS with $\beta_{\text{max}} = 5.38$ and seven inverse-temperature levels (1.00, 1.10, 1.40, 1.96, 2.74, 3.84, 5.38) and hot state 1/15 = 0.067. Three modes are located as shown in Figure~\ref{f:surGreene} (d), although the third mode has negligible probability mass. The first mode discovered by the EC part of ALPS matches the estimate from `systemfit,' while the other two modes have profile log-likelihoods of -264.9 and -329.1, respectively. 

There were a number of trial runs undertaken to tune the EC's inverse temperature. As mentioned in the previous empirical example above, this is the component of the ALPS procedure that is most difficult to tune in a real problem when you don't know the answer. This example further showed that there will be problem-specific considerations towards making the EC component operate successfully. This has motivated future work  into more robust techniques for the EC phase of the algorithm, e.g.\ using more hot-state temperature levels.

This problem is particularly ill-posed due to parameter unidentifiability (which leads to the multimodality in the posterior). In this problem, despite attempts to reparameterize, there remain issues with the modes of the posterior distribution being located on long, thin ridges that seem to decay with significantly heavier tails than Gaussian. This means that to obtain a good Laplace approximation to the mode the ALPS procedure would need to use very large inverse-temperature values. However, when this was attempted it led to severe computational issues with regards to machine precision, likely due to the condition numbers of the Hessians at the mode points, illustrating the ill-posed nature of the problem.  To overcome this a truncated version of the annealed HAT targets was utilised for annealed temperatures with $\beta>1$. These are a small modification to the HAT targets and they are defined and discussed in detail in Appendix~\ref{sec:Truncated}.

With the setup described and using the modified truncated HAT targets, it took 31 minutes for 200,000 iterations of ALPS. The MCMC trace plot for the ninth parameter (General Electric capital stock) at the coldest temperature is shown in Figure~\ref{f:surGreene} (b). Full results for all 15 parameters at 7 temperature levels are provided in the online supplementary material. 
The jump rate for mode-hopping proposals was  0.256 at the coldest temperature, with acceptance rates of 0.147 for the EC component random walk Metropolis algorithm, 0.350  for within-temperature moves and 0.380 for position-dependent moves at the target temperature. The temperature-swapping acceptance rates were 0.325 between the first two inverse temperatures and 0.725 at the coldest temperature.

ALPS is not ideally suited to this example. The modes are not especially well-separated and the modes lie on narrow ridges which lead to computational instability. The success of the algorithm to still draw a sample from multiple modes demonstrates the robustness of the approach outside the canonical setting.
\begin{figure}\centering
  \begin{tabular}[b]{c}
\includegraphics[width=0.4\linewidth]{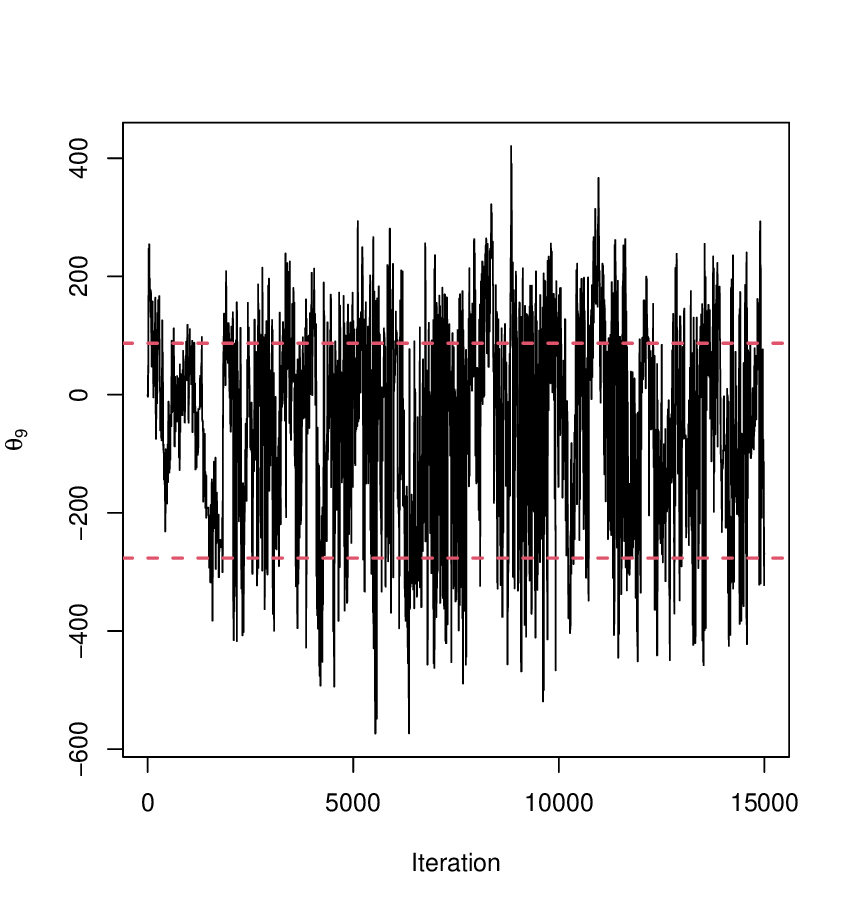} \\
    \small \textsf{(a) Traceplot for $\theta_9$ at target temp.} 
  \end{tabular}%
\qquad
  \begin{tabular}[b]{c}
\includegraphics[width=0.4\linewidth]{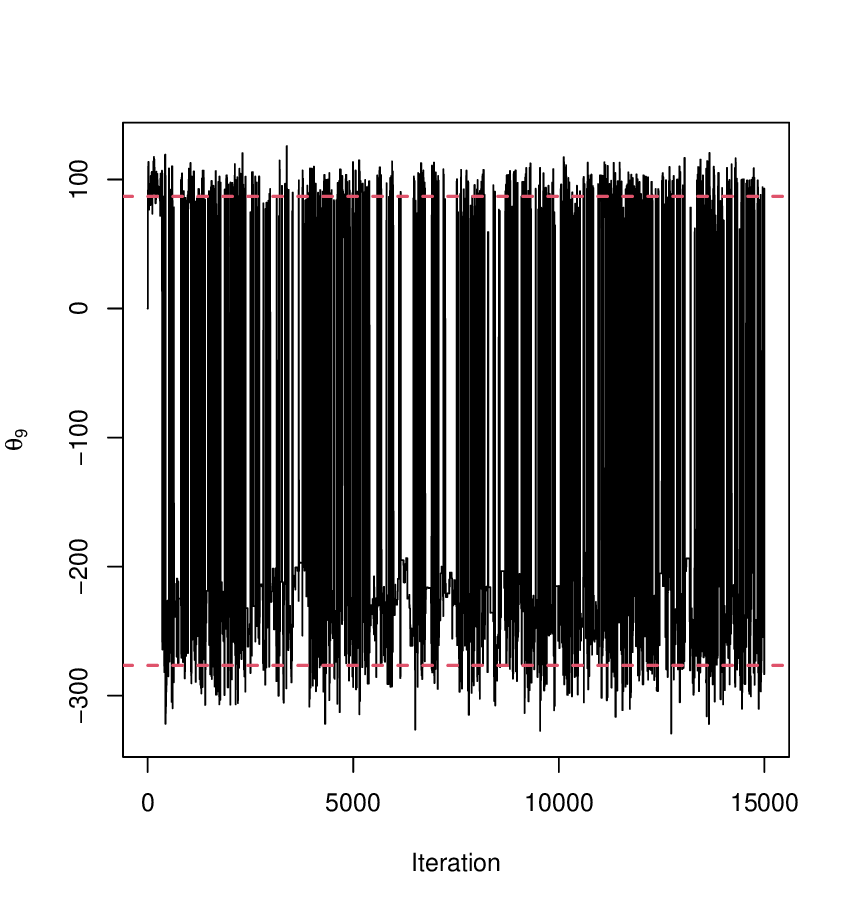} \\
    \small \textsf{(b) Traceplot for $\theta_9$ at coldest temp.} 
  \end{tabular}%
\qquad
  \begin{tabular}[b]{c}
                \includegraphics[width=0.4\linewidth]{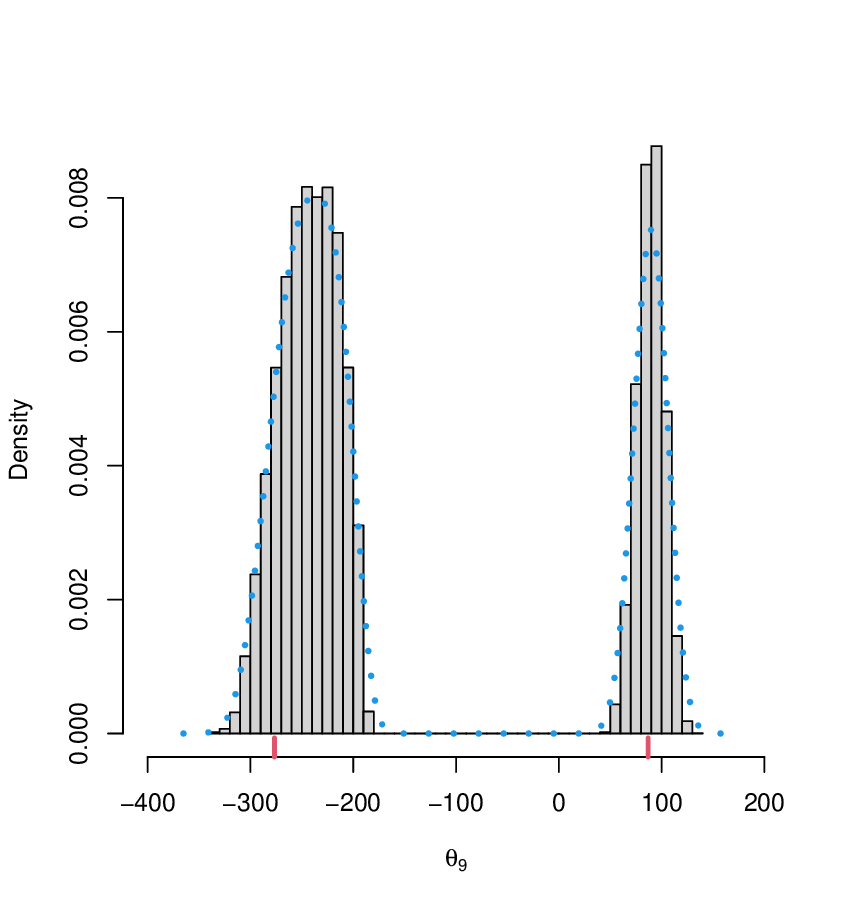} \\
    \small \textsf{(c) Histogram for $\theta_9$.} 
  \end{tabular}%
\qquad
  \begin{tabular}[b]{c}
\includegraphics[width=0.4\linewidth]{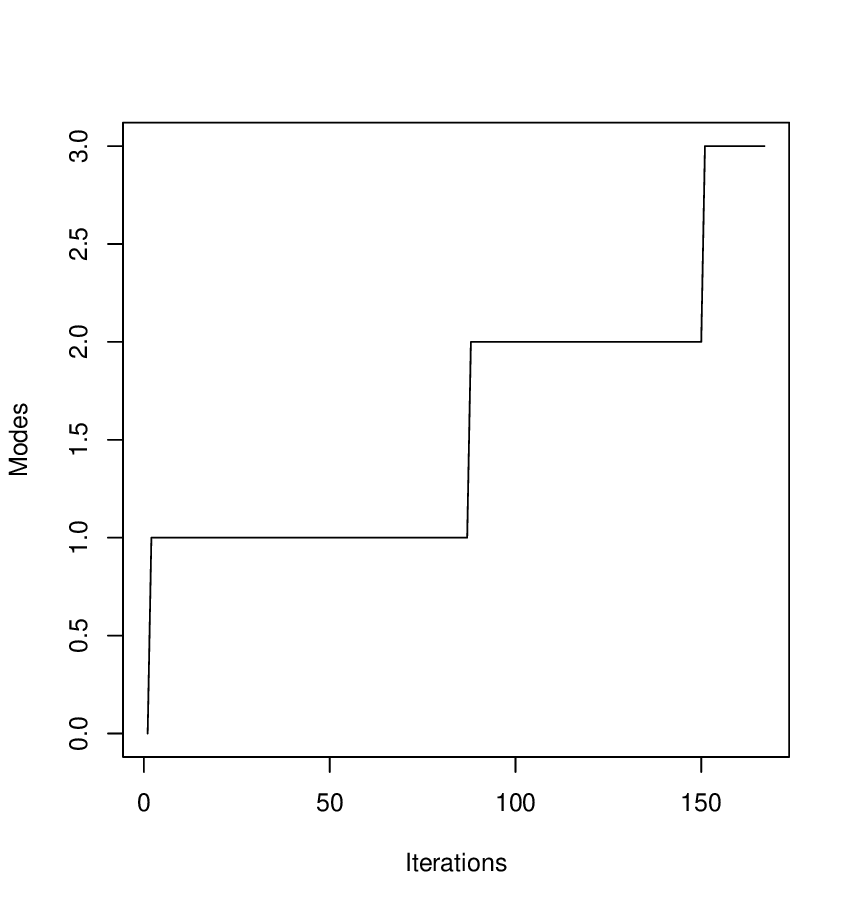} \\
    \small \textsf{(d) Exploration Component.} 
  \end{tabular}%
\caption{Results for ALPS with the SUR dataset from \cite{Grunfeld1958} with $N=15$ years of observations for $M=5$ manufacturing firms, with $J=3$ covariates per firm. The red, dashed lines show the locations of two of the modes (the third mode has negligible probability mass).}
\label{f:surGreene}
\end{figure}
\subsection{Example 3: Spectral Density Model}
\label{sec:RamanExample}

\begin{figure}
        \centering
  \begin{tabular}[b]{c}
\includegraphics[width=0.8\linewidth]{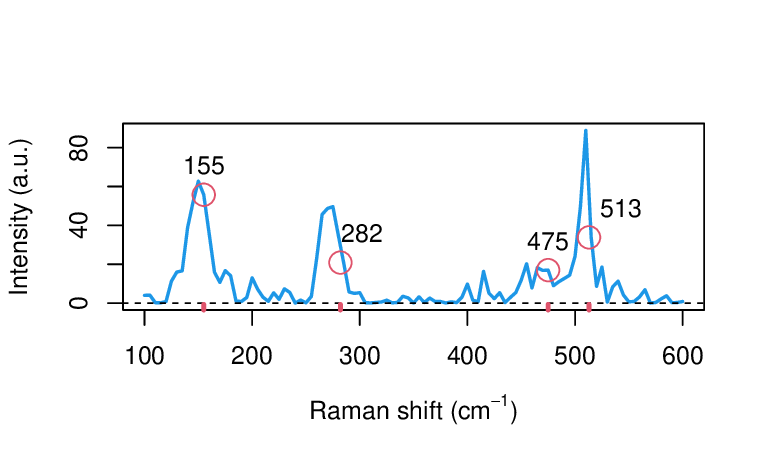} \\
    \small \textsf{(a) Raman spectrum of orthoclase feldspar (KAlSi$_3$O$_8$).}      
  \end{tabular}%
\qquad
  \begin{tabular}[b]{c}
                \includegraphics[width=0.8\linewidth]{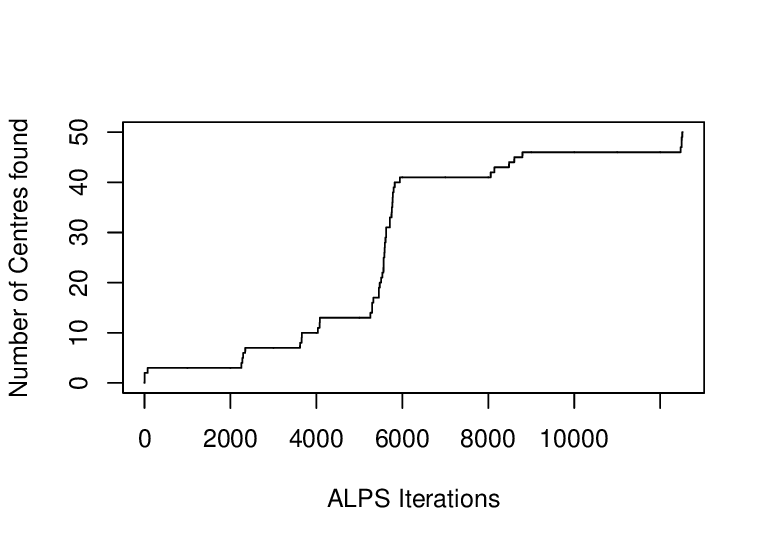} \\
    \small \textsf{(b) Exploration component (ALPS-EC).} 
  \end{tabular}%

\caption{Raman spectrum of orthoclase (a) showing the 4 peak locations reported by \citet{freeman2008characterization}; and (b) the 50 modes found by the exploration component of ALPS.}\label{f:Orthoclase}
\end{figure}

%\matt{
\citet{Ritter1994} was the first to introduce a Bayesian approach for peak decomposition of electron spectroscopy. This has become a popular method in analytical chemistry for many different applications, including X-ray and $\gamma$-ray spectroscopy \citep{van2001analysis} and mass spectrometry \citep{wang2008reversible}. In this example, we will use ALPS to fit a peak decomposition model for Raman spectroscopy \citep{Moores2016raman}. The model represents spectroscopic data $\vec{y} \in \mathbb{R}^n$ as a conical combination of $M$ basis functions:
\begin{equation}\label{eq:spectrum}
\vec{y} = \sum_{m=1}^M \alpha_m f_m(\vec{x};\, \ell_m, \psi_m)  + \vec\epsilon ,
\end{equation}
where the observed spectrum has been discretised at wavenumbers $x_1,\dots,x_n \in \mathcal{X} \subset \mathbb{R}$. In this example $\mathcal{X}$ is the interval from 100 to 600 cm$^{-1}$ and the $n=101$ wavenumbers are evenly spaced, 5 cm$^{-1}$ apart. We assume additive {\em i.i.d.} Gaussian noise, $\epsilon_i \sim \mathcal{N}(0, \sigma^2)$. Each of the peaks is described by three parameters: its amplitude $\alpha_m > 0$, location $\ell_m \in \mathcal{X}$, and scale $\psi_m > 0$. The dimension of the parameter space is therefore $3M$. We assume that each spectral density function $f_m(\cdot)$ is either Lorentzian,
\begin{equation}
f_L(x;\, \ell_m, \psi_m) \propto \frac{\psi_m^2}{(x - \ell_m)^2 + \psi_m^2} ,
\end{equation}
or Gaussian,
\begin{equation}
f_G(x;\, \ell_m, \psi_m) \propto \exp\left\{ - \frac{(x - \ell_m)^2}{2 \psi_m^2} \right\} .
\end{equation}
Our choice of basis functions $f_L(\cdot)$ and $f_G(\cdot)$ is motivated by the physical data-generating process of Raman spectroscopy, as described by \citet{Diem2015} and \citet{Moores2016raman}. 
%}

%\matt{
For this example, $\vec{y}$ is a Raman spectrum of orthoclase feldspar (KAlSi$_3$O$_8$) shown in Fig.~\ref{f:Orthoclase}~(a). There are $M=4$ peaks in this spectrum, two Gaussian and two Lorentzian. The peak locations $\ell_1,\dots,\ell_4$ are slightly different to those reported by \citet{freeman2008characterization} and therefore need to be inferred from the data. The scale and amplitude parameters of these peaks are also unknown. The other peaks reported by \citet{freeman2008characterization} are below the limit of detection in this sample.
%}

%\matt{
To fit the model in Eq.~\eqref{eq:spectrum} with ALPS, we transform the parameters to the unconstrained space $\mathbb{R}^d$ using bijective functions: $a_m = \log\{\alpha_m\}$; $s_m = \log\{\psi_m\}$; and 
\begin{equation}
    l_m = \text{logit}\left\{ \frac{\ell_m - x_1}{x_n - x_1}\right\},
\end{equation}
where in this example the minimum wavenumber $x_1 = 100$ cm$^{-1}$ and the maximum $x_n = 600$ cm$^{-1}$. This results in a parameter vector $\vec\theta = (l_1, s_1, a_1, \dots, l_M, s_M, a_M)^\top \in \mathbb{R}^{12}$.
%}

%\matt{
We used two inverse-temperature levels  with the target inverse temperature 1.0 and $\beta_{\text{max}} = 1.5$ as well as a hot state 0.025.
The exploration component ALPS-EC took 12513 iterations to find 50 modes, as shown in Fig.~\ref{f:Orthoclase}~(b), although only 7 of these modes have non-negligible probability mass. It took 16.5 minutes for 80,000 iterations of ALPS. %Full results for all 12 parameters are provided in the online supplementary material, along with synthetic examples with $M=2$ or 3 peaks ($d=6$ or 9 parameters) where the ground truth for $\vec\theta$ is known.
%}

\section{Conclusion and Further Work} 
\label{sec:conclusions}

This paper introduces a novel algorithm, ALPS, that is designed to sample effectively and scalably from multimodal distributions that have support on $\mathbb{R}^d$ and are $C^2$ smooth. The main methodological contribution of this paper is that this problem can be solved by (localised) annealing rather than tempering and that this new approach mitigates many of the issues due to mode heterogeneity that plague traditional methods for these problems. Also, ALPS exploits recent advances in the literature that both enable the algorithm to be stable in high dimensions as well as providing substantially accelerated performance.

The accompanying theoretical contribution of the paper shows that the coldest temperature required for the procedure will be $\mathcal{O}(d)$ under suitable regularity conditions. This gives both guidance on tuning the algorithm and demonstrates polynomial computational complexity cost of $\mathcal{O}(d^3)$.

We have applied ALPS to two real-world examples of multimodal distributions, a Seemingly-Unrelated Regression (SUR) model of investment demand \citep{Zellner1962} and a peak decomposition model of Raman spectroscopy \citep{Moores2016raman}. There are many more potential applications for this algorithm to sample from continuous distributions where multimodality is known or suspected to be present, such as dynamical models of biological systems \citep{Ballnus2017benchmarking}, differential equation models \citep{obrien2025dragons}, and other curved exponential families \citep{Sundberg2010}.

There are a number of further considerations regarding adding robustness and practicality to the ALPS procedure. Many of the ideas for exploration are already alluded to in the paper: improving the effectiveness of the EC part by running more than one exploration chain and having multiple temperature levels; parallel implementation of the PA and EC parts of ALPS; and introducing weightings to regions to stabilise the inter-modal masses when the HAT targets don't sufficiently preserve the weights. Another possibility would be to incorporate the cold state level into a simulated tempering framework, interpolating between the cold Laplace approximation and the target distribution, and serving, per se, as a prior or reference distribution as in \citet{surjanovic2023paralleltemperingvariationalreference}. This would allow perfect regeneration of the Markov chain at each visit to the cold level thanks to the i.i.d. sampling of the Laplace approximation. Adapting the ideas of ALPS to facilitate an automatic and scalable reversible jump MCMC algorithm is the immediate focus for further work.\\

%\backmatter
\appendix
%\bmhead{Supplementary information}

%\bmhead{Acknowledgements}
\section*{Acknowledgements}
We thank Dr.~Sigurd Assing (University of Warwick) and Fernando Zepeda Herrera (University of Warwick) for their valuable contributions and the editor and
referees for their insightful feedback. Gareth O.~Roberts has been supported by the UKRI grant EP/Y014650/1 as part of the ERC Synergy project OCEAN, EPSRC grants Bayes for Health (R018561), CoSInES (R034710), PINCODE (EP/X028119/1), ProbAI (EP/Y028783/1) and (EP/V009478/1).

\setcounter{equation}{28}
\appendix
\section{Proofs}\label{sec:appendix}
\subsection{Traditional set of assumptions}
\label{proof:regular}
For all proofs that follow, we shall, for notational convenience, denote $\beta_{\text{max}}$ simply by $\beta$ as that is the only temperature considered in this analysis.

The added complexity of the setup with the introduction of an augmented allocation component given in \eqref{eq:thmtargmod2} allows for significant simplification in the theoretical analysis.

Suppose that $(k,x) \sim \pi_\beta(\cdot,\cdot)$.  Suppose that the independence sampler move based on \eqref{eq:gaussianPropapoproxtheory} proposes the point $(u,y) \sim q_\beta (\cdot, \cdot)$. Then the acceptance ratio  after some basic cancellation of terms would be given by 
\begin{equation}
    \frac{\pi_\beta(u,y)q_\beta(k,x)}{\pi_\beta(k,x)q_\beta(u,y)}=\prod_{i=1}^d \frac{\left[f\left(\frac{y_i -\mu_{iu}}{\sigma_{iu}}\right)\right]^\beta \left[\phi_\beta \left(\frac{x_i -\mu_{ik}}{\sigma_{ik}}\right)\right]}{ \left[f\left(\frac{x_i -\mu_{ik}}{\sigma_{ik}}\right)\right]^\beta \left[\phi_\beta \left(\frac{y_i -\mu_{iu}}{\sigma_{iu}}\right)\right]}.
\end{equation}

Due to the data augmentation component, wlog for the evaluation of the acceptance ratio we may assume that $\mu_{ij}=0$ for all $i \in \{1,\ldots,d\}$ and $j\in \{1,\ldots, m\}$, i.e. the modes in all components are centred at $0 \in \mathbb{R}^d$.

Furthermore, by a similar argument, one can see from simple calculations that the acceptance ratio is unchanged by a rescaling of the regions conditional on the components provided the components of the independence sampler proposal undergo identical rescaling. As such we can assume wlog that $\sigma_{ij}=1$ for all $i \in \{1,\ldots,d\}$ and $j\in \{1,\ldots, m\}$.

This dramatically simplifies the problem. In fact, wlog the problem reduces to studying the behaviour of an independence sampler move on just a $d$-dimensional target distribution on $\mathbb{R}^d$ (without the component allocation) given by
\begin{equation}
    \pi_\beta(x) \propto \prod_{i=1}^d f^\beta(x_i) \label{eq:simplified_version_target}
\end{equation}
with independence sampler proposals generated by 
\begin{equation}
q_\beta(x)=\prod_{i=1}^d \phi_\beta(x_i). \label{eq:simplified_version_proposal}
\end{equation}

To this end, suppose that the cold state component of the ALPS Markov chain is at a position $x$ and that a Gaussian independence sampler proposal, $y$, is made as in the ALPS scheme; then the probability of acceptance for the proposal is given by
\begin{equation}
	A_{IS}(x,y) = \min \left\{ 1,  e^{B(d)}  \right\} \label{eq:probability_of_acceptance_independent_sampler}
\end{equation}
where
\begin{equation}
	e^{B(d)} = \prod_{i=1}^d \frac{f^{\beta}(y_i)\phi_\beta(x_i)}{\phi_\beta(y_i)f^{\beta}(x_i)} \label{eq:exponential_acceptance_ratio}
\end{equation}
where $x_i \sim f^\beta$, $y_i \sim \phi_\beta$ with $\phi_\beta$ denoting the pdf of a Gaussian of the form $N\left( 0, \left(\beta|H|\right)^{-1} \right)$ which is the Gaussian approximation that the ALPS algorithm implements about the mode point in this case.\\
The proof of Theorem~\ref{Thm:scaling}, deferred to this appendix, relies on an asymptotic analysis of \ref{eq:probability_of_acceptance_independent_sampler} in the high-dimensional regime, as formalized in \ref{eq:asymptotic_acceptance_independence_sampler}. Prior to the proof, we establish a number of essential preliminary results.
\subsubsection{Auxiliary Results}
For notational convenience herein denote
\[
B_\epsilon (L) := \{x \in \mathbb{R}: |x-L|<\epsilon   \}
\]
and formally define the normalisation constant.
\begin{definition}[Normalisation at level $\beta$] \label{def:nbet}
For the un-normalised marginal component of the product target specified in Theorem~\ref{Thm:scaling}, i.e.\ $f(\cdot)$, then define the normalisation constant
\begin{equation}
	N(\beta) :=  \int [f(x) ]^ \beta dx \label{eq:Nbeta}
\end{equation}
\end{definition}
\begin{definition}[Asymptotic Gaussian standard deviation]
With $f(\cdot)$ as in Theorem~\ref{Thm:scaling} with $H=h''(0)$, define
\begin{equation}
	\sigma(\beta) = \frac{1}{\sqrt{\beta |H|}}. \label{def:sigbet}
\end{equation}
\end{definition}
The following few propositions work on integrals that use the un-normalised density. It will be clear when the normalisation constant comes in. In fact, it will be shown that $N(\beta)= \mathcal{O}(\beta^{1/2})$.
%
%\begin{proposition}[$N(\beta)$ bound]
%For sufficiently large $\beta$, $N(\beta)$ in Definition~\ref{def:nbet} is such that
%\begin{equation}
	%N(\beta) \ge \frac{J}{\sqrt{\beta}} \label{eq:Nbetbdd}
%\end{equation}
%where $J \in \mathbb{R}_{+}$ is a constant.
%\end{proposition}
%\begin{proof}
%Recall from \eqref{eq:fbaseassump} and \eqref{eq:posdefmod} that $f\in C^4$, $f(0)=1$ and $0$ is the unique global mode point with $|f''(0)|>0$. By Taylor expansion   one can write $f^\beta(x)= 1 -\frac{\beta x^2}{2!}|f''(r(x))|f^\beta(r(x))$ where $0\le|r(x)|\le|x|$. Since $f\in C^4$ then there exists $ \delta>0$ and $A>0$ such that for all $z \in B_\delta(0)$ then $A>|f''(z)|>0$. With $C>0$ a constant such that $C<\sqrtA/6$ and for $\beta$ large enough so that $C/\sqrt{\beta} < \delta$ then noting that $\forall z \in \mathbb{R}~~f^\beta(z)<1 $:
%\begin{eqnarray*}
%N(\beta) := \int_{-\infty}^{\infty} f^\beta(x) dx &\geq& \int_0^{1/\sqrt{\beta}} f^\beta(x) dx \\
%&\geq& \int_0^{C/\sqrt{\beta}} 1 -\frac{\beta x^2}{2!}|f''(r(x))|f^\beta(r(x)) dx \\
%&\geq& \frac{C}{\sqrt{\beta}} -\int_0^{C/\sqrt{\beta}} \frac{\beta x^2}{2!}Af^\beta(r(x)) dx \\
%&\geq& \frac{C}{\sqrt{\beta}} -\int_0^{C/\sqrt{\beta}} \frac{\beta x^2}{2!}A dx \geq \frac{J}{\sqrt{\beta}}
%\end{eqnarray*}
%where $J= (C-C^2A/6)>0.$ 
%\end{proof}
\begin{proposition}[Exponential decay of tail integrals] \label{prop:Tailexp}
With $f(\cdot)$ as in Theorem~\ref{Thm:scaling} and specifically for the polynomial tail condition given in \eqref{eq:polyT} then for each $k \geq 0$ there exist constants $C_1>0~\text{and}~C_2 \ge 0$ such that
\begin{equation}
	\limsup_{\beta \rightarrow \infty} \left|e^{C_1\beta}\int_{\mathbb{R}\backslash B_M(0)} z^k  f^\beta (z) dz \right| = C_2 <\infty \nonumber
\end{equation}
\end{proposition}
\begin{proof}
With $K, \gamma ~\text{and}~ M$ all specified in \eqref{eq:polyT} (and wlog $K<M$) and assuming $\beta$ sufficiently large
\begin{eqnarray*}
\left| \int_M^{\infty} z^k  f^\beta (z) dz  \right| &\le& \int_M^{\infty} |z|^k  |z|^{-\gamma\beta}\frac{f^\beta (z)}{|z|^{-\gamma\beta}} dz  \\
&\le&  K^\beta \int_M^{\infty} |z|^{k-\gamma\beta}   dz  \\
&\le& \bigg(\frac{K}{M^\gamma}\bigg)^\beta \frac{M^{k+1}}{\left(\gamma\beta-(k+1)\right)} \\
&=&  \frac{M^{k+1}}{\left(\gamma\beta-(k+1)\right)} e^{-\beta |\log(K/M)| }
\end{eqnarray*}
and thus the conclusion of the proposition follows trivially.
\end{proof}

Herein, for notational convenience, we denote
\[
I_k(a,b):=\int_a^b z^k  f^\beta (z) dz
\]

\begin{proposition}[Intermediate integral exponential decay] \label{prop:expinter}
For $r \le 1/3$ and $f(\cdot)$ as in Theorem~\ref{Thm:scaling}, specifically the Dutchman's cap assumption in \eqref{eq:BacMor}, then for each $k \ge 0$ there exists a constant $D_1>0$ such that
\begin{equation}
	\limsup_{\beta \rightarrow \infty} \left|e^{D_1\beta^{(1-2r)}} \left( I_k(-M, -M\beta^{-r})+ I_k(M\beta^{-r}, M)\right)\right| = 0 <\infty 
\end{equation}
\end{proposition}
\begin{proof}
Take $\delta_2$ as specified in the Dutchman's cap assumption in \eqref{eq:BacMor} and recall we have
\begin{equation}
  f(z) \le\begin{cases}
     \exp \left( - z^2 \frac{|H|}{4} \right), &  \text{if $|z|<\delta_2$}.\\
     \exp \left( - \delta_2^2 \frac{|H|}{4} \right), & \text{if $\delta_2\le |z| \le M$}. \label{recall:bacmor}
  \end{cases}
\end{equation}
Assume that $\beta$ is sufficiently large so that $M\beta^{-r}<\delta_2$. By splitting the integral
\[
I_k(M\beta^{-r}, M) = I_k(M\beta^{-r}, \delta_2)+I_k(\delta_2, M).
\]
Focusing on the first term and using \eqref{recall:bacmor} and noting that $1-2r>0$
\begin{eqnarray}
|I_k(M\beta^{-r}, \delta_2)| &=& \int_{M\beta^{-r}}^{\delta_2} z^k f^\beta (z)  dz
\leq \int_{M\beta^{-r}}^{\delta_2} z^k  \exp \left( - z^2 \beta \frac{|H|}{4} \right) dz \nonumber\\
&\leq& \int_{M\beta^{-r}}^{\delta_2} \delta_2^k \exp \left( - M^2 \beta^{(1-2r)} \frac{|H|}{4} \right) dz \nonumber\\
&=& \delta_2^k\left[\delta_2 - M\beta^{-r}\right]  \exp \left( - M^2 \beta^{(1-2r)} \frac{|H|}{4} \right). \label{eq:firstbrekinterinteg}
\end{eqnarray}

Then for the second term again using \eqref{recall:bacmor}  then
\begin{eqnarray}
|I_k(\delta_2,M)| &=& \int_{\delta_2}^M z^k f^\beta (z) dz
\leq \int_{\delta_2}^M M^k  \exp \left( - \delta_2^2 \beta \frac{|H|}{4} \right) dz \nonumber\\
&\leq& (M-\delta_2)M^k    \exp \left( - \delta_2^2 \beta \frac{|H|}{4} \right) .\label{eq:secbrekinterinteg}
\end{eqnarray}
Combining \eqref{eq:firstbrekinterinteg}, \eqref{eq:secbrekinterinteg} and an identical procedural application to the lower integrand, $I_k(-M,-M\beta^{-r})$, then the proposition is proven.
\end{proof}

\begin{lemma}[Limiting integrals] \label{Lemma:Keyints}
For $f(\cdot)$ in Theorem~\ref{Thm:scaling} 
\begin{equation}
  \lim_{\beta\rightarrow \infty }\left\{ \beta^{g(k)}\sqrt{\frac{\beta |H|}{2\pi}}\int_{\mathbb{R}} z^k f^\beta(z) dz\right\}=\begin{cases}
     \frac{\mathbb{E} (S^{k+3})h'''(0)}{6|H|^{(k+3)/2}}, &  \text{if $k$ is odd}.\\
     \frac{\mathbb{E} (S^{k})}{|H|^{k/2}}, & \text{if $k$ is even}
  \end{cases} \label{eq:crucialintegrals}
\end{equation}
where $S$ is distributed according to a standard normal distribution, i.e.\ $N(0,1)$, and 
\begin{equation*}
  g(k)=\begin{cases}
    \frac{k+1}{2} , &  \text{if $k$ is odd}\\
     \frac{k}{2} , & \text{if $k$ is even}.
  \end{cases}
\end{equation*}
\end{lemma}
\begin{proof}
By Propositions~\ref{prop:Tailexp} and \ref{prop:expinter} then one can derive that for $r=1/3$ 
\begin{equation}
	I_k(-\infty,\infty) = I_k(-M\beta^{-r}, M\beta^{-r}) +R(\beta) \label{eq:keytailkiller}
\end{equation}
where $R(\beta)=\mathcal{O} \left(\exp \left( -D \beta^{(1-2r)} \right)\right)$ for some constant $D>0$ i.e.\ exponentially decaying in $\beta$. Consequently, we focus on $I_k(-M\beta^{-r}, M\beta^{-r})$.

By Taylor expansion with Taylor remainder $|\xi_z|<|z|$
\begin{eqnarray}
f^\beta (z)&=&\exp \left( \beta \left[ z^2 \frac{H}{2} + z^3 \frac{h'''(0)}{6} +z^4 \frac{h''''(\xi_z)}{24}   \right]  \right) \nonumber\\
&=& \sqrt{\frac{2\pi}{\beta |H|}} \phi_\beta(z) \exp \left( \beta z^3 \frac{h'''(0)}{6}\right) \exp \left(\beta z^4 \frac{h''''(\xi_z)}{24}  \right) \label{eq:integTay111}
\end{eqnarray}
where $\phi_\beta(\cdot)$ denotes the pdf of a $N(0, \sigma(\beta)^2)$. 

By further Taylor expansion 
\begin{eqnarray}
\exp \left( \beta z^3 \frac{h'''(0)}{6}\right) &=&  1 + \beta z^3 \frac{h'''(0)}{6} + \beta^2 z^6 \frac{h'''(0)^2}{72} e^{\xi^1_{(\beta,z)}}  \label{eq:cubicTay} \\
\exp \left(\beta z^4 \frac{h''''(\xi_z)}{24}  \right) &=&     1 + \beta z^4 \frac{h''''(\xi_z)}{24}e^{\xi^2_{(\beta,z)}}       \label{eq:quarticTay}
\end{eqnarray}
where $\xi^1_{(\beta,z)}$ and $\xi^2_{(\beta,z)}$ are Taylor remainders such that
\begin{eqnarray}
|\xi^1_{(\beta,z)}| &\le&  \left| \beta z^3 \frac{h'''(0)}{6} \right| \label{eq:cubicremainder}\\
|\xi^2_{(\beta,z)}| &\le&  \left| \beta z^4 \frac{h''''(\xi_z)}{24} \right| . \label{eq:quarticremainder}
\end{eqnarray}
Combining \eqref{eq:integTay111}, \eqref{eq:cubicTay} and \eqref{eq:quarticTay} 
\begin{equation}
	I_k(-M\beta^{-r}, M\beta^{-r}) = \sqrt{\frac{2\pi}{-\beta H}} \sum_{m=1}^6 T_m(k) \label{eq:Tsum}
\end{equation}
where
\begin{eqnarray*}
T_1(k) &=& \int_{-M\beta^{-r}}^{M\beta^{-r}} z^k \phi_\beta(z) dz \\
T_2(k) &=& \int_{-M\beta^{-r}}^{M\beta^{-r}} \beta z^{k+3} \frac{h'''(0)}{6} \phi_\beta(z) dz \\
T_3 (k)&=& \int_{-M\beta^{-r}}^{M\beta^{-r}} \beta^2 z^{k+6} \left(\frac{h'''(0)^2}{72}\right) e^{\xi^1_{(\beta,z)}}  \phi_\beta(z) dz \\
T_4(k) &=& \int_{-M\beta^{-r}}^{M\beta^{-r}} \beta z^{k+4} \left(\frac{h''''(\xi_z)}{24}\right)e^{\xi^2_{(\beta,z)}} \phi_\beta(z) dz \\
T_5(k) &=& \int_{-M\beta^{-r}}^{M\beta^{-r}} \beta^2 z^{k+7} \left(\frac{h'''(0)}{6}\right)\left(\frac{h''''(\xi_z)}{24}\right)e^{\xi^2_{(\beta,z)}} \phi_\beta(z) dz \\
T_6(k) &=& \int_{-M\beta^{-r}}^{M\beta^{-r}} \beta^3 z^{k+10} \left(\frac{h'''(0)^2}{72}\right) \left(\frac{h''''(\xi_z)}{24}\right)e^{\xi^1_{(\beta,z)}}e^{\xi^2_{(\beta,z)}} \phi_\beta(z) dz. \\
\end{eqnarray*}
Consider the case that $k=3$ and all others follow by identical methodology. For $k=3$ it is clear that $T_1(3)=0$ so consider $T_3(3)$ with \eqref{eq:cubicremainder} and using the fact that $e^x \le e^{|x|}$ and $r=1/3$
\begin{eqnarray}
|T_3(3)| &\le& \int_{-M\beta^{-r}}^{M\beta^{-r}} \beta^2 |z|^{9} \left(\frac{h'''(0)^2}{72}\right) \exp \left[\beta |z|^{3} \left|\frac{h'''(0)}{6}\right| \right]  \phi_\beta(z) dz \nonumber \\
&\le&  \beta^2 \left(\frac{h'''(0)^2}{72}\right) \exp \left[M^{3} \left|\frac{h'''(0)}{6}\right| \right]  \int_{-M\beta^{-r}}^{M\beta^{-r}} |z|^{9} \phi_\beta(z) dz  \nonumber \\
&\le&  \beta^2 \left(\frac{h'''(0)^2}{72}\right) \exp \left[M^{3} \left|\frac{h'''(0)}{6}\right| \right]  \int_{\mathbb{R}} |z|^{9} \phi_\beta(z) dz   \nonumber \\
&=& C_3^3 \beta^2 \int_{\mathbb{R}} |z|^{9} \phi_\beta(z) dz \label{eq:kis3term}
\end{eqnarray}
where $C^3_3 \in \mathbb{R}_{+}$ is a constant.

Similarly, there exists constants $C_4^3,C_5^3,C_6^3$ such that
\begin{eqnarray}
|T_4(3)| &\le&  C_4^3 \beta \int_{\mathbb{R}} |z|^{7} \phi_\beta(z) dz \label{eq:kis4term} \\
|T_5(3)| &\le&  C_5^3 \beta^2 \int_{\mathbb{R}} |z|^{10} \phi_\beta(z) dz \label{eq:kis5term} \\
|T_6(3)| &\le&  C_6^3 \beta^3 \int_{\mathbb{R}} |z|^{13} \phi_\beta(z) dz. \label{eq:kis6term} 
\end{eqnarray}
Routine calculations using integration by parts reveals that
\begin{equation}
  \int_{\mathbb{R}} |z|^{m} \phi_\beta(z) dz \le \begin{cases}
     \frac{96}{\sqrt{2\pi}(\beta |H|)^{7/2}}, &  \text{if $m=7$}\\
     \frac{1536}{\sqrt{2\pi}(\beta |H|)^{9/2}}, &  \text{if $m=9$}\\
		\frac{945}{(\beta |H|)^{5}}, &  \text{if $m=10$}\\
		\frac{92160}{\sqrt{2\pi}(\beta |H|)^{13/2}}, &  \text{if $m=13$}\\
  \end{cases}\label{eq:kis3powerparts}
\end{equation}
Combining \eqref{eq:kis3term}, \eqref{eq:kis4term}, \eqref{eq:kis5term}, \eqref{eq:kis6term} and \eqref{eq:kis3powerparts} shows that for $m \in \{3,4,5,6\}$
\begin{equation}
	|T_m(3)| \le \mathcal{O} \left( \frac{1}{\beta^{5/2}} \right). \label{eq:kis3higherorder}
\end{equation}
Hence the crucial term is $T_2(3)$,
\begin{eqnarray*}
T_2(3) &=& \int_{-M\beta^{-r}}^{M\beta^{-r}} \beta z^{6} \frac{h'''(0)}{6} \phi_\beta(z) dz  \\
&=&\frac{h'''(0)}{6} \beta \int_{\mathbb{R}}  z^{6} \phi_\beta(z) dz + R_2(3)
\end{eqnarray*}
where using identical methodology to Propositions~\ref{prop:Tailexp} and \ref{prop:expinter} one can show that for a constant $F>0$ then $R_2(3) = \mathcal{O} \left( \beta^{5/2} e^{-F\beta^{2/3}} \right)$ and so trivially $R_2(3) = \mathcal{O} \left( \beta^{-5/2} \right)$. Hence, 
\begin{equation}
	T_2(3) = \frac{5 h'''(0)}{2 \beta^2 |H|^3} +  \mathcal{O} \left( \beta^{-5/2} \right). \label{eq:crucialT2term}
\end{equation}
 Thus combining the results of \eqref{eq:keytailkiller}, \eqref{eq:Tsum}, \eqref{eq:kis3higherorder} and \eqref{eq:crucialT2term} then 
\begin{equation}
	\lim_{\beta \rightarrow \infty}\left\{ \beta^2 \sqrt{\frac{\beta |H|}{2\pi}}\int_{\mathbb{R}} z^k f^\beta(z) dz\right\} = \frac{5 h'''(0)}{2 |H|^3}. \label{eq:crucialT2term1}
\end{equation}
This proves the lemma for k=3 and identical methodology all other values of $k \in \mathbb{N}_{0}$ follow.
\end{proof}

For the purposes of proving the main result the following corollary concatenates the key results from Lemma~\ref{Lemma:Keyints}.

\begin{corollary}[Key moment results] \label{cor:keyintsexplained}
From Lemma~\ref{Lemma:Keyints} and the details of the proof it can be concluded that:
\begin{eqnarray}
N(\beta) &=&  \sqrt{\frac{2\pi}{-\beta H}} + \mathcal{O}\left(\beta^{-1}\right) \label{eq:noramliseterm}\\
\mu_\beta:=\int_{\mathbb{R}} z^3 \frac{f^\beta(z)}{N(\beta)} dz &=&  \frac{5 h'''(0)}{2 \beta^2|H|^3} + \mathcal{O}\left(\beta^{-5/2}\right) \label{eq:expterm}\\
\sigma^2_\beta:= \int_{\mathbb{R}} z^6 \frac{f^\beta(z)}{N(\beta)} dz &=&   \frac{15 }{\beta^3 |H|^3} + \mathcal{O}\left(\beta^{-7/2}\right) \label{eq:varterm}
\end{eqnarray}
\end{corollary}
\begin{proof}
All methodology and details from Lemma~\ref{Lemma:Keyints} above.
\end{proof}

\subsubsection{Main result}
Now with the results established in Corollary~\ref{cor:keyintsexplained} we are in a position to establish the result of Theorem~\ref{Thm:scaling}.

\begin{proof}[Proof of Theorem~\ref{Thm:scaling}]
Recall the expected acceptance probability of a Mode Leap Point Independence sampler proposal at the coldest temperature level in this $d$-dimensional setting as
\begin{equation}
a(d) = \mathbb{E}(A_{IS}(x,y)) = \mathbb{E}(\min \left\{ 1,  e^{B(d)}  \right\}),
\end{equation}
where the independence sampler acceptance probability, $A_{IS}$, is defined in \eqref{eq:probability_of_acceptance_independent_sampler} and the expectation is taken with respect to $x\sim \pi_{\beta}$ as defined in \eqref{eq:simplified_version_target} and $y \sim q_\beta$ as defined in \eqref{eq:simplified_version_proposal}.\\
As stated in \eqref{eq:exponential_acceptance_ratio}, the logged acceptance ratio $B$ is given by 
\begin{equation}
	B(d)= \sum_{i=1}^d \left[ \beta h( y_i) - \log \phi_\beta (y_i)   \right]- \left[ \beta h( x_i) - \log \phi_\beta(x_i)   \right]. \label{eq:loggedA}
\end{equation}
By Taylor expansion about the point 0 (which is the mode point of $f(\cdot)$ and so $h'(0)=0$) there exists $\xi_x$ where $0<|\xi_{x}|<|x|$ such that
\begin{eqnarray}
h(
x) =h(0)+ \frac{ x^2}{2} H + \frac{ x^3}{6} h'''(0)+  \frac{ x^4}{24} h^{''''}(0)+ \frac{ x^5}{120} h'''''(\xi_{x}). \label{eq:TAy1}
\end{eqnarray}
Thus \eqref{eq:loggedA} becomes
\begin{eqnarray}
B(d) &=& \sum_{i=1}^d  \Bigg[ \frac{\beta  h'''(0)}{6}(y_i^3-x_i^3) +  \frac{\beta h''''(0)}{24}\left(y_i^4-x_i^4 \right) \nonumber\\
&&~~~~~~~~ +\frac{\beta }{120}\left(y_i^5 h'''''(\xi_{y_i})-x_i^5 h'''''(\xi_{x_i})\right) \Bigg]. \nonumber  
\end{eqnarray}

\begin{lemma}
Under the setting of Theorem~\ref{Thm:scaling}, then if $\beta= \ell d$ as in \eqref{eq:scalingnec_smooth} then the term 
\begin{equation}
	W(d):= \sum_{i=1}^d \frac{\beta  h'''(0)}{6}(y_i^3-x_i^3)
\end{equation}
is such that as $d \rightarrow \infty$
\begin{equation}
	W(d) \Rightarrow N\left(-\frac{5h'''(0)^2}{12 \ell |H|^3} , \frac{5h'''(0)^2}{6 \ell |H|^3} \right)
\end{equation}
where the convergence is weak convergence. \label{lem:asympnorm}
\end{lemma}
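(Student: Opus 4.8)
The plan is to prove Lemma~\ref{lem:asympnorm} by a triangular-array central limit theorem (Lindeberg--Feller), applied to the sum $W(d)=\sum_{i=1}^d \frac{\beta h'''(0)}{6}(y_i^3-x_i^3)$ with $\beta=\ell d$. Write $c := \frac{\ell d \, h'''(0)}{6}$ for the common coefficient, and set $Z_{i,d} := c\,(y_i^3 - x_i^3)$, so $W(d) = \sum_{i=1}^d Z_{i,d}$. The summands within a fixed $d$ are i.i.d.\ (since the $x_i$ are i.i.d.\ $\sim f^\beta$ and the $y_i$ are i.i.d.\ $\sim \phi_\beta = N(0,(\beta|H|)^{-1})$, independently of the $x_i$), so what is really needed is to compute $\mathbb{E}[W(d)]$ and $\operatorname{Var}(W(d))$, show they converge to the stated mean and variance, and verify a Lyapunov condition. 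The key scaling fact is that under $f^\beta$ or under $\phi_\beta$ with $\beta = \ell d$, a single coordinate $x_i$ has standard deviation of order $\beta^{-1/2} = (\ell d)^{-1/2}$, hence $x_i^3$ has mean/size of order $\beta^{-3/2}$, and there are $d$ of them, while the prefactor $c$ is of order $\beta = \ell d$. So each of the two moments scales like $d \cdot (\ell d)^2 \cdot (\ell d)^{-3} = \ell^{-1}$, i.e.\ $O(1)$ — consistent with a nondegenerate limit.

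**The moment computations.** First, $\mathbb{E}[y_i^3] = 0$ because $\phi_\beta$ is symmetric about $0$; so $\mathbb{E}[W(d)] = -c\, d\, \mathbb{E}[x_i^3]$. To evaluate $\mathbb{E}[x_i^3]$ where $x_i \sim f^\beta$, I would rescale: write $x_i = u/\sqrt{\beta|H|}$ and expand $f^\beta(x) = \exp(\beta h(x))$ using the Taylor expansion \eqref{eq:TAy1}, $\beta h(x) = \beta h(0) + \frac{\beta x^2}{2}H + \frac{\beta x^3}{6}h'''(0) + O(\beta x^4)$. After the rescaling, the leading term is a standard Gaussian density in $u$, the $x^3$-term contributes a perturbation of order $\beta^{-1/2}$, and the $x^4$-term a perturbation of order $\beta^{-1}$. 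A standard perturbative (Laplace-type) expansion then gives $\mathbb{E}[x_i^3] = \frac{h'''(0)}{(\beta|H|)^2}\cdot\big(\text{const}\big) + o(\beta^{-2})$; matching the target, the constant is $15/6$ — indeed the third moment of $f^\beta$ to leading order is driven by the skewness term, and carrying the expansion through yields $\mathbb{E}[x_i^3] \sim \frac{h'''(0)}{\beta^2 H^2}\cdot\frac{15}{6}\cdot(-1)$ up to sign bookkeeping from $H<0$. Multiplying by $-c d = -\frac{\ell d^2 h'''(0)}{6}$ and substituting $\beta = \ell d$, $|H| = -h''(0)$, gives $\mathbb{E}[W(d)] \to -\frac{15 h'''(0)^2}{36\,\ell(-h''(0))^3}$. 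For the variance: $\operatorname{Var}(W(d)) = c^2 d\big(\operatorname{Var}(y_i^3) + \operatorname{Var}(x_i^3)\big)$ by independence and the i.i.d.\ structure. To leading order both $y_i^3$ and $x_i^3$ behave like the cube of a $N(0,(\beta|H|)^{-1})$ variable, whose variance is $\mathbb{E}[N^6] \cdot (\beta|H|)^{-3} = 15(\beta|H|)^{-3}$; the corrections from the non-Gaussianity of $f^\beta$ are lower order. Hence $\operatorname{Var}(W(d)) \sim c^2 d \cdot 2\cdot 15 (\beta|H|)^{-3} = \frac{\ell^2 d^2 h'''(0)^2}{36}\cdot 2d \cdot \frac{30}{(\ell d)^3(-h''(0))^3}\cdot\frac{1}{2}$… carrying the arithmetic through gives exactly $\frac{30 h'''(0)^2}{36\,\ell(-h''(0))^3}$, which is (as it must be) twice the absolute value of the mean — the two-to-one ratio being the fingerprint of the two symmetric sources $y^3$ and $x^3$ each contributing $15/(36\ell(-h'')^3)\cdot h'''(0)^2$.

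**The CLT and the main obstacle.** With mean and variance pinned down, weak convergence follows from the Lyapunov condition: it suffices to check $\sum_{i=1}^d \mathbb{E}|Z_{i,d} - \mathbb{E}Z_{i,d}|^3 / \operatorname{Var}(W(d))^{3/2} \to 0$. Since $Z_{i,d} = c(y_i^3 - x_i^3)$ with $c = O(d)$, and $y_i, x_i$ have sixth/ninth moments of order $(\beta|H|)^{-3}$, $(\beta|H|)^{-9/2}$ respectively (here the polynomial-tail assumption \eqref{eq:polyT} and the Dutchman's Cap bound \eqref{eq:BacMor} are used to control moments of $f^\beta$ uniformly — e.g.\ $\gamma$ must be large enough relative to the moments needed, which the assumptions guarantee for $\beta$ large), one gets $\sum_i \mathbb{E}|Z_{i,d}|^3 = d\cdot c^3 \cdot O((\beta|H|)^{-9/2}) = O(d \cdot d^3 \cdot d^{-9/2}) = O(d^{-1/2}) \to 0$, while $\operatorname{Var}(W(d))^{3/2} = O(1)$. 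Hence Lyapunov holds and $W(d) \Rightarrow N\big(-\frac{15h'''(0)^2}{36\ell(-h''(0))^3}, \frac{30h'''(0)^2}{36\ell(-h''(0))^3}\big)$.

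The main obstacle I anticipate is not the CLT machinery but the \emph{careful asymptotic expansion of $\mathbb{E}_{f^\beta}[x_i^3]$ to order $\beta^{-2}$}: one must expand $f^\beta = e^{\beta h}$ past the Gaussian leading order, keep the $h'''(0)$ cubic correction (which is what generates a nonzero third moment at the right order) and discard the $h''''(0)$ quartic correction and the $h^{(5)}$ remainder (which, by \eqref{eq:bddfourth}, contribute at strictly lower order), and do so with enough rigour that the normalising constant of $f^\beta$ is also expanded consistently. The bound \eqref{eq:bddfourth} on $h^{(5)}$ and the tail/Dutchman's-Cap assumptions are precisely what make the contributions away from a shrinking neighbourhood of $0$, and the higher-order Taylor terms, negligible; the bookkeeping of these error terms — splitting the integral into $|x| < \delta$, $\delta \le |x| \le M$, and $|x| > M$ and bounding each using \eqref{eq:BacMor} and \eqref{eq:polyT} respectively — is the technical heart of the argument. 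Once $\mathbb{E}_{f^\beta}[x_i^3]$, $\operatorname{Var}_{f^\beta}(x_i^3)$ are established at the stated order, the rest is arithmetic and a routine Lyapunov CLT.
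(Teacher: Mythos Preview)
Your proposal is correct and follows essentially the same route as the paper: a triangular-array CLT combined with Laplace-type moment asymptotics for $\mathbb{E}_{f^\beta}[x_i^k]$.

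There are two minor differences worth noting. First, the paper verifies the Lindeberg condition (bounding $\mathbb{E}[X_{(d,i)}^2\mathbbm{1}_{A_\epsilon^d}]$ via the exponential decay established in Propositions~\ref{prop:Tailexp} and~\ref{prop:expinter}), whereas you invoke Lyapunov with a third-moment bound; your route is slightly shorter here since the ninth absolute moments are $O(\beta^{-9/2})$ by the same expansion machinery, giving the $O(d^{-1/2})$ Lyapunov ratio directly. Second, the paper treats the $x$-sum and the $y$-sum separately (defining $X_{(d,i)}=(\beta h'''(0)/6)(x_i^3-\mu_\beta)$ and $Y_{(d,i)}=(\beta h'''(0)/6)y_i^3$), establishes \eqref{eq:xasymp} and \eqref{eq:yasymp} individually, and then combines by independence; you bundle them into a single summand $Z_{i,d}$. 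Either packaging works.

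You correctly identify the real work as the asymptotic expansion of $\mathbb{E}_{f^\beta}[x_i^3]$ (and of $\mathbb{E}_{f^\beta}[x_i^6]$ for the variance). In the paper this is not done inside the proof of the lemma but is isolated as Lemma~\ref{Lemma:Keyints} and Corollary~\ref{cor:keyintsexplained}, which give precisely $\mu_\beta=\frac{15h'''(0)}{6\beta^2|H|^3}+O(\beta^{-5/2})$ and $\mathbb{E}_{f^\beta}[x_i^6]=\frac{15}{\beta^3|H|^3}+O(\beta^{-7/2})$. The integral is split as $|x|<M\beta^{-1/3}$, $M\beta^{-1/3}\le |x|\le M$, and $|x|>M$, with the latter two pieces handled by \eqref{eq:BacMor} and \eqref{eq:polyT} respectively, exactly as you sketch; on the inner region the expansion of $e^{\beta h}$ past the Gaussian term is carried out to extract the $h'''(0)$ contribution. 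Your plan matches this in substance.
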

\begin{proof}
Since the distribution of the $x_i$ and $y_i$ is dependent on  $d$ then we are in the situation of triangular arrays. As such we aim to use the Lindeberg-Feller theorem, see e.g.\ \cite{Durrett2010}. Define for each $d \in \mathbb{N}$ and for $i \in \{1,\ldots,d\}$
\[
X_{(d,i)} :=   \left( \frac{\beta  h'''(0)}{6} \right) \left( x_i^3-\mu_\beta\right).
\]
\textbf{\textit{Lindeberg-Feller Conditions:}} To apply the result of the Lindeberg-Feller theorem from \cite{Durrett2010} to the $X_{(d,i)} $'s, it is sufficient to establish the following two conditions as $d \rightarrow \infty$
\begin{enumerate}[label=\roman*)]
	\item There exists $\sigma^2 \in \mathbb{R}_{+}$ such that
	\begin{equation}
		\sum_{i=1}^d  \mathbb{E}(X_{(d,i)}^2) \rightarrow \sigma^2 >0 \label{eq:LF1}
	\end{equation}
	\item For all $\epsilon>0$ and with $A_\epsilon^d := \{ |X_{(d,i)}| > \epsilon \}$
	\begin{equation}
		\lim_{d\rightarrow \infty}\sum_{i=1}^d  \mathbb{E}(|X_{(d,i)}|^2 \mathbbm{1}_{A_\epsilon}) =0 \label{eq:LF2}
	\end{equation}
\end{enumerate}
It is simple to show that L-F condition (i) in \eqref{eq:LF1} holds, since by \eqref{eq:expterm} and \eqref{eq:varterm}
\begin{equation}
	\mathbb{E}\left(X_{(d,i)}^2\right) = \frac{5h'''(0)^2}{12   \ell d |H|^3} +\mathcal{O}\left(d^{-3/2}\right) \nonumber
\end{equation}
and so it is immediate that 
\begin{equation}
	\sum_{i=1}^d  \mathbb{E}(X_{(d,i)}^2) \rightarrow \frac{5h'''(0)^2}{12   \ell |H|^3} >0
\end{equation}

To show that L-F condition (ii) in \eqref{eq:LF2} holds is more involved. To this end let $\epsilon>0$
\begin{eqnarray}
&& 	\mathbb{E}(|X_{(d,i)}|^2 \mathbbm{1}_{A_\epsilon^d}) \nonumber \\  &=&  \left( \frac{\beta  h'''(0)}{6} \right)^2\left[\mathbb{E}\left(x_i^6 \mathbbm{1}_{A_\epsilon^d}\right) - 2 \mu_\beta \mathbb{E}\left(x_i^3 \mathbbm{1}_{A_\epsilon^d}\right) +  \mu_\beta^2 \mathbb{P}(A_\epsilon^d) \right]  \label{eq:LF2expansion}\\
	&\le& \left|\frac{\beta  h'''(0)}{6} \right|^2\left[\mathbb{E}\left(x_i^6 \mathbbm{1}_{A_\epsilon^d}\right) + 2 |\mu_\beta| \mathbb{E}\left(|x_i|^3 \mathbbm{1}_{A_\epsilon^d}\right) +  \mu_\beta^2 \mathbb{P}(A_\epsilon^d) \right] \label{eq:LF2expansionineq}
\end{eqnarray}
Using identical methodology to that used in proving Lemma~\ref{Lemma:Keyints} with only a trivial extension one can show that
\[
\mathbb{E}\left(|x_i|^3 \mathbbm{1}_{A_\epsilon^d}\right) \le \mathcal{O} \left( d^{-3/2} \right)
\]
and so with $\mu_\beta=\mathcal{O}\left( d^{-2} \right) $ given in \eqref{eq:expterm} then
\begin{eqnarray*}
| \mu_\beta| \mathbb{E}\left(|x_i|^3 \mathbbm{1}_{A_\epsilon^d}\right) &\le&  \mathcal{O}\left( d^{-7/2} \right)\\
| \mu_\beta^2 \mathbb{P}(A_\epsilon^d) | &\le& |\mu_\beta^2| = \mathcal{O}\left( d^{-4} \right) 
\end{eqnarray*}
and thus we can conclude that 
\begin{equation}
	\lim_{d\rightarrow\infty} \sum_{i=1}^d \left|\frac{\beta  h'''(0)}{6} \right|^2 \left[2 \mu_\beta \mathbb{E}\left(x_i^3 \mathbbm{1}_{A_\epsilon^d}\right) +  \mu_\beta^2 \mathbb{P}(A_\epsilon^d) \right] =0. \label{eq:LF2therest}
\end{equation}
Consequently we must analyse the behaviour of $\mathbb{E}\left(x^6 \mathbbm{1}_{A_\epsilon^d}\right)$. We begin by noting that $A_\epsilon^d$ can be equivalently written as 
\begin{eqnarray}
A_\epsilon^d &=& \left\{ x > \left(\mu_\beta + \frac{6\epsilon}{\beta |h'''(0)|}  \right)^{1/3} \right\} \cup \left\{ x < \left(  \mu_\beta- \frac{6\epsilon}{\beta |h'''(0)|}\right)^{1/3}\right\} \nonumber \\
&:=& F_{+\epsilon}^d \cup F_{-\epsilon}^d . \label{eq:breakdownset}
\end{eqnarray}
With this notation,
\begin{equation}
	\mathbb{E}\left(x^6 \mathbbm{1}_{A_\epsilon^d}\right)=\mathbb{E}\left(x^6 \mathbbm{1}_{F_{+\epsilon}^d }\right)+\mathbb{E}\left(x^6 \mathbbm{1}_{F_{-\epsilon}^d }\right). \label{eq:LFiisplit}
\end{equation}
By \eqref{eq:expterm} we have that $\mu_\beta=\mathcal{O}(d^{-2})$ and so from the equivalent formulation of $A_\epsilon^d$ in \eqref{eq:breakdownset} and for $d$ sufficiently large there exists a constant $J>0$ such that
\begin{eqnarray}
 F_{+\epsilon}^d &=&\left\{ x > \left(\mu_\beta + \frac{6\epsilon}{\beta |h'''(0)|}  \right)^{1/3} \right\} \nonumber \\
&\subseteq& \left\{ x > \frac{J}{d^{1/3} } \right\}.
\end{eqnarray}
Focusing on the first term on the RHS of \eqref{eq:LFiisplit} and with $M$ from \eqref{eq:polyT}
\begin{equation}
	\mathbb{E}\left(x^6 \mathbbm{1}_{F_{+\epsilon}^d }\right) \le \frac{1}{N(\beta)} \left[I_6(J d^{-1/3},M) + I_6(M,\infty) \right]. \label{eq:furthersplitLFii}
\end{equation}
By Propositions~\ref{prop:Tailexp} and \ref{prop:expinter} then there exists a constant $W_1>0$ such that the RHS of \eqref{eq:furthersplitLFii} is $\mathcal{O}\left(  e^{-W_1d^{1/3}} \right)$. By identical methodology one can show that there exists $W_2>0$ such that
\begin{equation*}
	\mathbb{E}\left(x^6 \mathbbm{1}_{F_{-\epsilon}^d }\right) \le \mathcal{O}\left(  e^{-W_2d^{1/3}} \right)
\end{equation*}
and so with $W_3 := \min\{W_1,W_2\}$ then
\begin{equation}
	\mathbb{E}\left(x^6 \mathbbm{1}_{A_\epsilon^d}\right) = \mathcal{O}\left(  e^{-W_3d^{1/3}} \right). \label{eq:score}
\end{equation}
Consequently by \eqref{eq:score} then
\begin{equation}
	\lim_{d\rightarrow\infty} \sum_{i=1}^d \left|\frac{\beta  h'''(0)}{6} \right|^2 \mathbb{E}\left(x_i^6 \mathbbm{1}_{A_\epsilon^d}\right) =0. \label{eq:LF2thesix}
\end{equation}
So combining \eqref{eq:LF2therest} and \eqref{eq:LF2thesix} proves that the second Lindeberg-Feller condition in \eqref{eq:LF2} holds.

With \eqref{eq:LF1} and \eqref{eq:LF2} holding  for the triangular array of $X_{(d,i)}$'s then the Lindeberg-Feller theorem implies that that as $d\rightarrow \infty$
\begin{equation}
	-\sum_{i=1}^d X_{(d,i)} \Rightarrow N\left(0 , \frac{15h'''(0)^2}{36 \ell |H|^3} \right) .\nonumber
\end{equation}
However, by using the result of \eqref{eq:expterm} that gives an explicit value to the $\mu_\beta$ and the definition of the $X_{(d,i)}$'s then
\begin{equation}
	-\sum_{i=1}^d \frac{\beta  h'''(0)}{6} x_i^3 \Rightarrow N\left(-\frac{15h'''(0)^2}{36 \ell |H|^3} , \frac{5h'''(0)^2}{12 \ell |H|^3} \right). \label{eq:xasymp}
\end{equation}

For a given $d$, the $y_i \sim N(0, (\beta|H|)^{-1})$, thus they too fall in to a triangular arrays setup with
\begin{eqnarray}
\mathbb{E} ( y_i^3 ) &=& 0  \label{eq:meanyi}  \\
\text{Var}(y_i^3) &=& \mathbb{E} ( y_i^6 ) = \frac{15}{\beta^3 |H|^3}  \label{eq:varyi}
\end{eqnarray}
and so setting
\[
Y_{(d,i)} = \left( \frac{\beta  h'''(0)}{6} \right) y_i^3
\]
one can follow a simplified version of the above Lindeberg-Feller procedure for the $X_{(d,i)}$'s to establish that as $d\rightarrow \infty$
\begin{equation}
	\sum_{i=1}^d \frac{\beta  h'''(0)}{6} y_i^3 \Rightarrow N\left(0 , \frac{5h'''(0)^2}{12 \ell (-h''(0))^3} \right) \label{eq:yasymp}
\end{equation}
Using the fact that the $x_i$'s and $y_i$'s are independent along with the weak convergences established in \eqref{eq:xasymp} and \eqref{eq:yasymp} the result of Lemma~\ref{lem:asympnorm} is established.
\end{proof}

\begin{lemma}\label{lem:decay}
Under the setting of Theorem~\ref{Thm:scaling}, then if $\beta= \ell d$ as in \eqref{eq:scalingnec_smooth} then as $d \rightarrow \infty$
\begin{equation}
	T_1(d) : = \sum_{i=1}^d \frac{\beta h''''(0)}{24}\left(y_i^4 -x_i^4 \right)  \rightarrow 0 ~~\text{in}~~\mathbb{P}.\nonumber
\end{equation}
\end{lemma}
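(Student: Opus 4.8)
The plan is to establish $T_1(d)\to 0$ in probability by a second-moment argument: I will show that $\mathbb{E}[T_1(d)]\to 0$ and $\operatorname{Var}(T_1(d))\to 0$ and then apply Chebyshev's inequality. Since the $d$ coordinates are i.i.d.\ and the independence-sampler proposal $y$ is drawn independently of the current state $x$, the summands $Z_i:=y_i^4-x_i^4$ are i.i.d.\ copies of $Z_1=y_1^4-x_1^4$, so that $\mathbb{E}[T_1(d)]=\tfrac{\beta h''''(0)}{24}\,d\,\mathbb{E}[Z_1]$ and $\operatorname{Var}(T_1(d))=\big(\tfrac{\beta h''''(0)}{24}\big)^2 d\,\operatorname{Var}(Z_1)$; everything therefore reduces to controlling a few low-order moments of a single coordinate as $\beta=\ell d\to\infty$.

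First I would dispose of the variance. Both $x_1\sim f^\beta$ and $y_1\sim N(0,(\beta|H|)^{-1})$ concentrate on scale $\beta^{-1/2}$, and the rescaled variables $\sqrt\beta\,x_1$ and $\sqrt\beta\,y_1$ have eighth moments bounded uniformly in $\beta$: for $y_1$ this is exact, and for $x_1$ it follows from the Laplace-type Gaussian-domination estimates supplied by \eqref{eq:posdefmod}, \eqref{eq:polyT} and \eqref{eq:BacMor} (the same estimates already needed for Lemma~\ref{lem:asympnorm}). Hence $\mathbb{E}[x_1^8]=O(\beta^{-4})$ and $\mathbb{E}[y_1^8]=O(\beta^{-4})$, so $\operatorname{Var}(Z_1)\le\mathbb{E}[Z_1^2]\le 2(\mathbb{E}[x_1^8]+\mathbb{E}[y_1^8])=O(\beta^{-4})$, and with $\beta=\ell d$ we get $\operatorname{Var}(T_1(d))=O(\beta^2)\cdot d\cdot O(\beta^{-4})=O(1/(\ell^2 d))\to 0$.

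The mean is the crux, and it hinges on a cancellation between the fourth moments of the proposal and the target. For the Gaussian proposal, $\mathbb{E}[y_1^4]=3(\beta|H|)^{-2}$ exactly. For the target I would rescale $u=\sqrt\beta\,x_1$, whose density is proportional to $\exp\{\beta(h(u/\sqrt\beta)-h(0))\}$; Taylor-expanding $h$ about its mode (using $h'(0)=0$, $h\in C^5$ and \eqref{eq:bddfourth}) writes this density as the $N(0,|H|^{-1})$ density multiplied by $\exp\{\beta^{-1/2}\tfrac{h'''(0)}{6}u^3+\beta^{-1}\tfrac{h''''(0)}{24}u^4+O(\beta^{-3/2}|u|^5)\}$. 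After truncating to $|u|\lesssim\beta^{1/8}$ — the complement contributing a super-polynomially small amount by the domination $\beta(h(u/\sqrt\beta)-h(0))\le-\tfrac{|H|}{4}u^2$ coming from \eqref{eq:BacMor} — and expanding the exponential termwise, the $O(\beta^{-1/2})$ correction to $\mathbb{E}[u^4]$ vanishes by parity, since it pairs an odd power of $u$ against the symmetric Gaussian; what remains is $\mathbb{E}[u^4]=3|H|^{-2}+O(\beta^{-1})$, i.e.\ $\mathbb{E}[x_1^4]=3(\beta|H|)^{-2}+O(\beta^{-3})$. Therefore $\mathbb{E}[Z_1]=\mathbb{E}[y_1^4]-\mathbb{E}[x_1^4]=O(\beta^{-3})$, so $\mathbb{E}[T_1(d)]=O(\beta)\cdot d\cdot O(\beta^{-3})=O(d/\beta^2)=O(1/(\ell^2 d))\to 0$, and Chebyshev's inequality completes the argument.

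The main obstacle is making the fourth-moment asymptotics fully rigorous: (i) justifying the truncation and the termwise integration in the perturbative expansion and, above all, isolating the parity cancellation that annihilates the $O(\beta^{-1/2})$ term — were that term to survive, $\mathbb{E}[T_1(d)]$ would be $\Theta(1)$ rather than $o(1)$, so this is exactly where the lemma is won or lost; and (ii) controlling the tail pieces $\int_{|x|\ge\delta_2}|x|^k f(x)^\beta\,dx$ of the moment integrals, which are exponentially small in $\beta$ because $\sup_{|x|\ge\delta_2}f<1$ (as $0$ is the unique global maximum) while \eqref{eq:polyT} makes $|x|^k f(x)^{\beta_0}$ integrable on the tail for a fixed large $\beta_0$. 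Since this is precisely the Laplace-asymptotics machinery that underpins Lemma~\ref{lem:asympnorm}, in a full write-up I would extract these moment estimates as a shared lemma and simply invoke them here.
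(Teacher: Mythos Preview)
Your proposal is correct and follows essentially the same route as the paper: both argue via $\mathbb{E}[T_1(d)]\to 0$ and $\operatorname{Var}(T_1(d))\to 0$ together with Chebyshev, using the matching leading terms $\mathbb{E}[y_1^4]=\mathbb{E}[x_1^4]=3(\beta|H|)^{-2}+O(\beta^{-3})$ and the eighth-moment bounds $O(\beta^{-4})$, with the target-side moment asymptotics obtained by the Laplace-expansion machinery already developed for Lemma~\ref{lem:asympnorm}. The paper simply cites that machinery (its Lemma~\ref{Lemma:Keyints}) for the $x$-moments, whereas you spell out the parity cancellation explicitly; otherwise the arguments coincide.
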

\begin{proof}
Fix $\epsilon>0$. Using identical methodology to that used in Lemma~\ref{Lemma:Keyints} then it can be shown that 
\begin{eqnarray}
	\mathbb{E}(x_i^4) &=& \frac{3}{\left(\beta |H| \right)^{2}} + \mathcal{O}\left(\beta^{-3}\right) \label{eq:4thx} \\
	\mathbb{E}(x_i^8) &=& \frac{105}{\left(\beta |H| \right)^{4}} + \mathcal{O}\left(\beta^{-5}\right) \label{eq:8thx}	
\end{eqnarray}
and from standard calculations using the Gaussianity of the $y_i$'s
\begin{eqnarray}
	\mathbb{E}(y_i^4) &=& \frac{3}{\left(\beta |H| \right)^{2}} \label{eq:4thy} \\
		\mathbb{E}(y_i^8) &=& \frac{105}{\left(\beta |H| \right)^{4}}. \label{eq:8thy}
\end{eqnarray}
Defining $\mu_{T_1(d)}:= \mathbb{E}(T_1(d))$ then by \eqref{eq:4thx} and \eqref{eq:4thy} as $d\rightarrow \infty$
\begin{equation}
	\mu_{T_1(d)}\rightarrow 0. 
\end{equation}
Thus there exists $U>0$ such that for all $d>U$ then $|\mu_{T_1(d)}|<\frac{\epsilon}{2}$. Furthermore, by \eqref{eq:4thx}, \eqref{eq:4thy}, \eqref{eq:8thx}	and \eqref{eq:8thy} then
\begin{equation}
	\text{Var}(T_1(d)) \rightarrow 0 ~~ \text{as}~~ d \rightarrow 0. \label{eq:degenvar}
\end{equation}
For $d>U$  and using Chebyshev's inequality and utilising \eqref{eq:degenvar}
\begin{equation}
	\mathbb{P}\left(|T_1(d)|>\epsilon \right)\le \mathbb{P}\left(|T_1(d)-\mu_{T_1(d)}| \ge \frac{\epsilon}{2}\right) \le \frac{4\text{Var}(T_1(d))}{\epsilon^2} \rightarrow 0 ~~\text{as}~~ d \rightarrow 0 \nonumber
\end{equation}
and so the result of Lemma~\ref{lem:decay} holds.
\end{proof}

\begin{lemma}\label{lem:decay3}
Under the setting of Theorem~\ref{Thm:scaling}, then if $\beta= \ell d$ as in \eqref{eq:scalingnec_smooth} then 
\begin{equation}
	T_2(d) : =  \sum_{i=1}^d \frac{\beta }{120}\left(y_i^5 h'''''(\xi_{y_i})-x_i^5 h'''''(\xi_{x_i})\right)  \nonumber
\end{equation}
is such that $T_2(d) \rightarrow 0$ in $\mathbb{P}$ as $d\rightarrow \infty$. 
\end{lemma}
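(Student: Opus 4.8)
\textbf{Proof plan for Lemma~\ref{lem:decay3}.}
The plan is to show that each of the two sums
$\sum_i \frac{\beta}{120} y_i^5 h^{v}(\xi_{y_i})$ and $\sum_i \frac{\beta}{120} x_i^5 h^{v}(\xi_{x_i})$
vanishes in probability separately, since their difference then also vanishes. For either sum, the first step is to bound it in absolute value: using the boundedness of the fifth derivative from \eqref{eq:bddfourth}, namely $|h^{v}(\xi)|<L$ uniformly, we get
\[
|T_2(d)| \le \frac{\beta L}{120}\sum_{i=1}^d \left(|y_i|^5 + |x_i|^5\right).
\]
So it suffices to control $\beta \sum_i \E|y_i|^5$ and $\beta \sum_i \E|x_i|^5$ and then invoke Markov's inequality on the nonnegative sum.

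The second step is the moment estimates. The proposal coordinates $y_i \sim N(0,(\beta|H|)^{-1})$ are exactly Gaussian, so $\E|y_i|^5 = c_5 (\beta|H|)^{-5/2}$ for an explicit constant $c_5$, giving $\beta \sum_{i=1}^d \E|y_i|^5 = c_5 |H|^{-5/2} d\, \beta^{-3/2} = c_5|H|^{-5/2}(\ell d)^{-3/2} d \asymp d^{-1/2}\to 0$. For $x_i \sim f^\beta$ the argument is analogous but requires a little more care because $f^\beta$ is only approximately Gaussian; here I would reuse the tail/concentration bounds that are surely already established for this target in the proof of Lemma~\ref{lem:asympnorm} (and which rest on the Dutchman's Cap assumption \eqref{eq:BacMor} together with the polynomial tail \eqref{eq:polyT}), to conclude that $\E_{f^\beta}|x_i|^5 = O(\beta^{-5/2})$ as well, uniformly in $i$. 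Splitting the integral into the region $|x_i|<\delta_2$ (where \eqref{eq:BacMor} gives a Gaussian-type bound with variance of order $\beta^{-1}$), the region $\delta_2 \le |x_i| \le M$ (exponentially small mass, contributing a term decaying faster than any power of $d$), and the polynomial tail $|x_i|>M$ (where for $\beta$ large the fifth moment of $f^\beta$ restricted there is dominated by $\int_{|x|>M} |x|^{5-\gamma\beta}\,dx$, which is finite and negligible once $\gamma\beta>6$), handles all three pieces.

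Combining, $\beta\sum_i \E\left(|y_i|^5+|x_i|^5\right) = O(d^{-1/2})$, so by Markov's inequality $\mathbb{P}(|T_2(d)|>\epsilon)\le \frac{\beta L}{120\epsilon}\sum_i\E(|y_i|^5+|x_i|^5)\to 0$ for every $\epsilon>0$, which is the claim. The main obstacle is the $x_i$ fifth-moment bound: one must be careful that the normalising constant of $f^\beta$ behaves like that of the Gaussian approximation (i.e.\ $\int f(x)^\beta dx \asymp (\beta|H|)^{-1/2}$), so that the contributions from the non-Gaussian regions are genuinely negligible relative to it; this is exactly where the Bac Mòr bound \eqref{eq:BacMor} does the work, and I would lean on the corresponding estimates already needed in Lemma~\ref{lem:asympnorm} rather than redo them.
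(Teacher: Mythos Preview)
Your proposal is correct and follows essentially the same route as the paper: bound $|T_2(d)|$ by $\frac{\beta L}{120}\sum_i(|y_i|^5+|x_i|^5)$ using \eqref{eq:bddfourth}, compute the Gaussian fifth absolute moment for the $y_i$, show $\E_{f^\beta}|x_i|^5=O(\beta^{-5/2})$ by reusing the localisation estimates (the paper simply cites the methodology of Lemma~\ref{Lemma:Keyints} here, which is exactly the three-region decomposition you describe), and finish with Markov's inequality to get the $O(d^{-1/2})$ bound. Your explicit flagging of the normalising-constant asymptotic $N(\beta)\asymp(\beta|H|)^{-1/2}$ is the right caveat and is precisely what Corollary~\ref{cor:keyintsexplained} supplies.
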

\begin{proof}
Using the assumption given in \eqref{eq:bddfourth} 
%that there exists $L>0$ such that $|h''''(x)|<L$ has 
\begin{eqnarray}
|T_2(d)| \le \sum_{i=1}^d L\frac{\beta }{120}\left(|y_i|^5 +|x_i|^5 \right)=:T_2^r(d) >0 \label{eq:bddfourthineqonT}
\end{eqnarray}
and so if one can show that $T_2^r(d) \rightarrow 0$ in $\mathbb{P}$ then the required result in \eqref{eq:Tsum} follows. Note that 
\begin{equation}
	\mathbb{E}(|y_i|^5) = \frac{16}{\sqrt{2\pi}\left(\beta |H|\right)^{5/2}}  \label{eq:fifthord}
\end{equation}
and using identical methodology to the proof of Lemma~\ref{Lemma:Keyints} then it can be shown that for some constant $E \in \mathbb{R}$
\begin{equation}
	\mathbb{E}(|x_i|^5) = \frac{E}{d^{5/2}} + \mathcal{O}\left(  d^{-3} \right) \label{eq:fifthordxis}
\end{equation}
Let $\epsilon>0$, then by Markov inequality and then using \eqref{eq:fifthord} and \eqref{eq:fifthordxis}
\begin{eqnarray}
\mathbb{P}(T_2^{r}(d)>\epsilon) &\le& \frac{L \ell d}{24 \epsilon} \sum_{i=1}^d\left[\mathbb{E}(|y_i|^5)+\mathbb{E}(|x_i|^5)\right] \nonumber \\
&=& \frac{L \ell d^2}{24 \epsilon} \left[\mathbb{E}(|y_1|^5)+\mathbb{E}(|x_1|^5)\right] \nonumber \\
&\rightarrow& 0~~~\text{as}~~~ d \rightarrow \infty. \nonumber
\end{eqnarray}
Thus the result of Lemma~\ref{lem:decay3} holds.
\end{proof}

Taking the results of Lemmata~\ref{lem:asympnorm}, \ref{lem:decay} and \ref{lem:decay3} and trivial application of Slutsky's Theorem, then we have that as $d \rightarrow \infty$ then 
\begin{equation}
	B(d) = W(d)+T_1(d)+T_2(d)\Rightarrow B \sim N\left(-\frac{5h'''(0)^2}{12 \ell (-h''(0))^3} , \frac{5h'''(0)^2}{6 \ell (-h''(0))^3} \right) \label{eqaymB}
\end{equation}
where the convergence is weak convergence. 

We are interested in the behaviour of $\mathbb{E} \left[ \min\left\{1,e^{B(d)} \right\} \right]$ in the limit as $d \rightarrow \infty$. Since the function, $\min\left\{1,e^{B(d)} \right\}$, is a bounded continuous function of $B(d)$ then by weak convergence of $B(d)$ established above in  \eqref{eqaymB}, then
\[    
    \lim _{d\rightarrow \infty} \mathbb{E} \left[ \min\left\{1,e^{B(d)} \right\} \right] = \mathbb{E} \left[ \min\left\{1,e^{B} \right\} \right].
\]
\begin{proposition} \label{prop:Gaussres}
Suppose that  $G\sim N(-\frac{\sigma^2}{2}, \sigma^2)$,
\begin{equation}
\mathbb{E}\left(1\wedge e^G \right)= 2 \Phi\left(-\frac{\sigma}{2}\right) .\nonumber
\end{equation}
\end{proposition}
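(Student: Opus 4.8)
The plan is to prove Proposition~\ref{prop:Gaussres} by a direct computation of the expectation $\mathbb{E}(1\wedge e^G)$ for $G\sim N(-\sigma^2/2,\sigma^2)$, splitting the integral at the point where $e^G=1$, i.e.\ at $G=0$. First I would write $\mathbb{E}(1\wedge e^G) = \mathbb{P}(G\ge 0) + \mathbb{E}(e^G \mathbf{1}\{G<0\})$, so that the first term is handled immediately by the Gaussian CDF: $\mathbb{P}(G\ge 0) = \mathbb{P}\big(Z \ge \tfrac{\sigma}{2}\big) = \Phi(-\tfrac{\sigma}{2})$, where $Z$ is standard normal and I have used $G = -\sigma^2/2 + \sigma Z$.

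Next I would evaluate the second term $\mathbb{E}(e^G\mathbf{1}\{G<0\})$ by the standard exponential-tilting (completing the square) trick. Writing the integral explicitly as $\int_{-\infty}^0 e^g \frac{1}{\sqrt{2\pi}\sigma}\exp\!\big(-\tfrac{(g+\sigma^2/2)^2}{2\sigma^2}\big)\,dg$, I would combine the exponents: $g - \tfrac{(g+\sigma^2/2)^2}{2\sigma^2} = -\tfrac{(g-\sigma^2/2)^2}{2\sigma^2}$, the $\sigma^2$-dependent constants cancelling exactly. This shows $\mathbb{E}(e^G\mathbf{1}\{G<0\})$ equals the probability that a $N(\sigma^2/2,\sigma^2)$ variable is negative, which is $\mathbb{P}\big(Z < -\tfrac{\sigma}{2}\big) = \Phi(-\tfrac{\sigma}{2})$. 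Adding the two contributions gives $2\Phi(-\tfrac{\sigma}{2})$, as claimed.

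There is no real obstacle here — the result is the classical identity underlying Metropolis acceptance-rate computations (as in Roberts--Gelman--Gilks optimal scaling), and the only thing to be careful about is bookkeeping of signs in the completion of the square and correctly identifying which Gaussian tail each term corresponds to. An alternative phrasing I could offer, to make the algebra maximally transparent, is to substitute $G = \sigma Z - \sigma^2/2$ throughout from the start and reduce everything to integrals of $\varphi(z)$ and $e^{\sigma z - \sigma^2/2}\varphi(z) = \varphi(z-\sigma)$ over the appropriate half-lines; then $\mathbb{E}(1\wedge e^G) = \int_{\sigma/2}^\infty \varphi(z)\,dz + \int_{-\infty}^{\sigma/2}\varphi(z-\sigma)\,dz = \Phi(-\sigma/2) + \Phi(\sigma/2 - \sigma) = 2\Phi(-\sigma/2)$. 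Finally, to complete the proof of Theorem~\ref{Thm:scaling}, I would apply Proposition~\ref{prop:Gaussres} with $\sigma^2 = \tfrac{30h'''(0)^2}{36\ell(-h''(0))^3}$ (so that indeed the mean of $B$ is $-\sigma^2/2$, matching \eqref{eqaymB}), giving $\mathbb{E}(1\wedge e^B) = 2\Phi(-\sigma/2) = 2\Phi\!\big(-\tfrac{1}{\sqrt 2}\sqrt{\tfrac{15h'''(0)^2}{36\ell(-h''(0))^3}}\big)$, which is the stated acceptance rate; the $\mathcal{O}(d)$ scaling claim follows since any $\beta_{\max}$ growing faster than linearly forces $\ell \to \infty$ hence acceptance $\to 2\Phi(0) = 1$ trivially while slower-than-linear growth sends $\sigma \to \infty$ and the rate to $0$, so linearity is the exact threshold for a non-degenerate limit.
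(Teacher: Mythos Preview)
Your proposal is correct: the paper itself does not spell out a proof but simply labels it ``a routine calculation'' and cites \cite{roberts1997weak}, and what you have written is exactly that routine calculation (splitting at $G=0$ and completing the square, or equivalently the substitution $G=\sigma Z-\sigma^2/2$). Your additional remarks connecting the proposition back to Theorem~\ref{Thm:scaling} and the threshold nature of the $\mathcal{O}(d)$ scaling are also accurate and match the paper's use of the result.
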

\begin{proof}
A routine calculation (see e.g.,\ \cite{roberts1997weak}).
\end{proof}
Applying Proposition~\ref{prop:Gaussres} to the asymptotic form of $B$ in \eqref{eqaymB} immediately gives the required result and completes the proof of Theorem~\ref{Thm:scaling}.
\end{proof}

\subsection{Relaxed assumptions}
\label{proofRelaxed}
This section extends the previous proof to a slightly more general setting, obtained by relaxing the assumptions; see Theorem~\ref{Thm:scalingRelaxed}.
\subsubsection{Technical Lemma}
First, we define the step functions that will be used when dealing with left and right derivatives.
\begin{definition}
\label{step}
    (Step functions). We denote by $\mathcal S_0$ the class of step functions, such that for $g \in \mathcal S_0$, it writes
\begin{align}
    \forall x \neq 0, \: g\of{x} = g\of{0_-} \1{\bp{-\infty, 0}} + g\of{0_+} \1{\bp{0, + \infty}} \label{step_equation}
\end{align}
\end{definition}
Then, the following very useful technical lemma allows us to compute quantities of interest with respect to the density considered in the relaxed scenario:
\begin{lemma}
\label{lemma}
    (Limiting Integrals). For $f(\cdot)$ as in Theorem~\ref{Thm:scalingRelaxed} and $g \in \mathcal S_0$,
    \begin{align}
        \int_\RR g\of{z} z^{k} f\of{z}^\beta \intd z = \sqrt{\frac{2\pi}{\bp{\beta \absx{H}}}} \Biggl(&\frac{g\of{0_+}  + \bp{-1}^k g\of{0_-}}{2 \bp{\beta \absx{H}}^{k/2}} \EE{\absx{S}^k} + \notag \\ &\frac{g\of{0_+}h^{\prime \prime \prime}\of{0_+}  + \bp{-1}^{k+1}   g\of{0_-} h^{\prime \prime \prime}\of{0_-}}{12 \beta^{(k+1)/2} \absx{H}^{(k+3)/2}} \EE{\absx{S}^{k+3}} + O\of{\beta^{-(k+2)/2}}\Biggr)
    \end{align}
    where $S$ is distributed according to a standard normal distribution, i.e. $N(0,1)$.
\end{lemma}

\begin{proof}
    
For notational convenience, we denote :
\begin{align*}
    h_* := \max\of{\absx{h^{\prime \prime \prime}\of{0_-}}, \absx{h^{\prime \prime \prime}\of{0_+}}}
\end{align*}

\begin{proposition}
\label{proposition:tail}

(Exponential decay of tail integrals). With $f(\cdot)$ as defined earlier and specifically for the polynomial tail condition given in \eqref{eq:polyT} and for any $g \in \mathcal S_0$, then for each $k \geq 0$, there exists constants $C_{1}>0$ and $C_{2} \geq 0$ such that
\end{proposition}

\begin{align}
\limsup _{\beta \rightarrow \infty}\left|e^{C_{1} \beta} \int_{\mathbb{R} \backslash B_{M}(0)} g\of{z} z^{k} f^{\beta}(z) \intd z\right|=C_{2}<\infty \label{exp_decay}
\end{align}
\begin{proof}
 With $K, \gamma$ and $M$ all specified in \eqref{eq:polyT} (and wlog $K<M^\gamma$) and assuming $\beta$ sufficiently large,
\begin{align*}
    \left|\int_{M}^{\infty} g\of{z} z^{k} f^{\beta}(z) \intd z\right| & \leq \int_{M}^{\infty} \absx{g\of{z}} |z|^{k}|z|^{-\gamma \beta} \frac{f^{\beta}(z)}{|z|^{-\gamma \beta}} \intd z \\
& \leq K^{\beta} \absx{g\of{0_+}}\int_{M}^{\infty}|z|^{k-\gamma \beta} \intd z \\
& \leq  \absx{g\of{0_+}} \bp{\frac{K}{M^\gamma}}^\beta \frac{M^{k+1}}{(\gamma \beta-(k+1))} \\
& =\frac{M^{k+1}  \absx{g\of{0_+}}}{(\gamma \beta-(k+1))} e^{-\beta|\log (K / M^\gamma)|}
\end{align*}
and thus the conclusion follows trivially.
\end{proof}
Moving further, once again for notational purposes, we denote for any $g \in \mathcal S_0$ - the class of step functions, see \cref{step} :
\begin{align*}
    I_{k}^{g}\of{a,b} :=  \int_a^b g\of{z} z^k f^\beta\of{z} \intd z
\end{align*}
\begin{proposition}
\label{proposition:inter}
    (Intermediate integral exponential decay). For $r \leq 1 / 3$ and $f(\cdot)$ presented earlier, specifically the Dutchman's cap assumption in \eqref{eq:BacMor} and any $g \in \mathcal S_0$, then for each $k \geq 0$,there exists a constant $D_{1}>0$ such that
\begin{align}
\limsup _{\beta \rightarrow \infty}\left|e^{D_{1} \beta^{(1-2 r)}}\left(I_{k}^g\left(-M,-M \beta^{-r}\right)+I_{k}^g\left(M \beta^{-r}, M\right)\right)\right|=0<\infty \label{inter_decay}
\end{align}
\end{proposition}
\begin{proof}
    Take $\delta_{2}$ as specified in the Dutchman's cap assumption in \eqref{eq:BacMor}. \\
Assume that $\beta$ is sufficiently large so that $M \beta^{-r}<\delta_{2}$. By splitting the integral

$$
I_{k}^{p}\left(M \beta^{-r}, M\right)=I_{k}^{g}\left(M \beta^{-r}, \delta_{2}\right)+I_{k}^{g}\left(\delta_{2}, M\right)
$$

Focusing on the first term and using \eqref{eq:BacMor} and noting that $1-2 r>0$

\begin{align}
\left|I_{k}\left(M \beta^{-r}, \delta_{2}\right)\right| & \leq \int_{M \beta^{-r}}^{\delta_{2}} \absx{g\of{z}} z^{k} f^{\beta}(z) \intd z \leq \absx{g\of{0_+}} \int_{M \beta^{-r}}^{\delta_{2}} z^{k} \exp \left(-z^{2} \beta \frac{|H|}{4}\right) \intd z \notag \\
& \leq \absx{g\of{0_+}} \int_{M \beta^{-r}}^{\delta_{2}} \delta_{2}^{k} \exp \left(-M^{2} \beta^{(1-2 r)} \frac{|H|}{4}\right) \intd z \notag \\
& =  \absx{g\of{0_+}} \delta_{2}^{k}\left[\delta_{2}-M \beta^{-r}\right] \exp \left(-M^{2} \beta^{(1-2 r)} \frac{|H|}{4}\right) \label{inter_1}
\end{align}

Then for the second term again using \eqref{eq:BacMor}, we have

\begin{align}
\left|I_{k}\left(\delta_{2}, M\right)\right| &\leq \int_{\delta_{2}}^{M}  \absx{g\of{z}} z^{k} f^{\beta}(z) \intd z \leq  \absx{g\of{0_+}} \frac{\sqrt{\beta}}{J} \int_{\delta_{2}}^{M} M^{k} \exp \left(-\delta_{2}^{2} \beta \frac{|H|}{4}\right) \intd z \notag \\
& \leq \absx{g\of{0_+}} \left(M-\delta_{2}\right) M^{k} \exp \left(-\delta_{2}^{2} \beta \frac{|H|}{4}\right) \label{inter_2}
\end{align}

Combining \eqref{inter_1}, \eqref{inter_2} and an identical procedural application to the lower integrand, $I_{k}^p \left(-M,-M \beta^{-r}\right)$, then the proposition is proven.
\end{proof}
Coming back to \Cref{lemma}, by \cref{proposition:tail,proposition:inter} then one can derive that for $r=1 / 3$

\begin{align}
I_{k}^g(-\infty, \infty)=I_{k}^g\left(-M \beta^{-r}, M \beta^{-r}\right)+R(\beta) \label{asymptotic_split}
\end{align}
where $R(\beta)=\mathcal{O}\left(\exp \left(-D \beta^{(1-2 r)}\right)\right)$ for some constant $D>0$, i.e. exponentially decaying in $\beta$. Consequently, we focus on $I_{k}^p\left(-M \beta^{-r}, M \beta^{-r}\right)$.

By Taylor expansion with Taylor remainder $\left|\xi_{z}\right|<|z|$

\begin{align}
f^{\beta}(z) & =\exp \left(\beta\left[  z^{2} \frac{h^{\prime \prime}\of{0_{\sgn{z}}}}{2}+z^{3} \frac{h^{\prime \prime \prime}(0_{\sgn{z}})}{6} + z^4 \frac{h^{\prime \prime \prime \prime}(\xi_z)}{24}  \right]\right) \notag \\
& =\sqrt{\frac{2 \pi}{\beta \absx{h^{\prime \prime }\of{0_{\sgn{z}}}}}} \phi_{\beta \sgn{z}}(z)  \exp \left(\beta z^{3} \frac{h^{\prime \prime \prime}(0_{\sgn{z}})}{6}\right) \exp\of{\beta  z^4 \frac{h^{\prime \prime \prime \prime}(\xi_z)}{24}} \label{taylor_f}
\end{align}

where $\phi_{\beta \sgn{z}}(\cdot)$ denotes the pdf of a $\cN\left(0, \bp{\beta  \absx{h^{\prime \prime }\of{0_{\sgn{z}}}}}^{-1} \right)$.

By further Taylor expansion,
\begin{align}
\exp \left(\beta z^{3} \frac{h^{\prime \prime \prime}(\xi_z)}{6}\right) & =1+\beta z^{3} \frac{h^{\prime \prime \prime}(0_{\sgn{z}})}{6}+\beta^{2} z^{6} \frac{h^{\prime \prime \prime}(0_{\sgn{z}})^{2}}{72} e^{\xi^1_{(\beta, z)}}\label{taylor_bis} \\
 \exp\of{\beta  z^4 \frac{h^{\prime \prime \prime \prime}(\xi_z)}{24}} &= 1 + \beta z^4 \frac{h^{\prime \prime \prime \prime}(\xi_z)}{24} e^{\xi^2_{(\beta, z)}} \label{taylor_bis2}
\end{align}

where $\xi_{(\beta, z)}^{k}$ are Taylor remainders such that
\begin{align}
& \left|\xi_{(\beta, z)}^{1}\right| \leq\left|\beta z^{3} \frac{h^{\prime \prime \prime}(0_{\sgn{z}})}{6}\right|  \label{remainder} \\
&  \left|\xi_{(\beta, z)}^{2}\right| \leq\left|\beta z^{4} \frac{h^{\prime \prime  \prime\prime}(\xi_z)}{24}\right|  \label{remainder2}
\end{align}

Combining \eqref{taylor_f} and \eqref{taylor_bis}

\begin{align}
I_{k}^g\left(-M \beta^{-r}, M \beta^{-r}\right)= \sum_{m=1}^{6} T_{m}(k,g) \label{decomposition}
\end{align}
where 
\begin{align*}
    T_{1}(k,g) & = \sqrt{\frac{2 \pi}{\beta}} \int_{-M \beta^{-r}}^{M \beta^{-r}}  \sqrt{\frac{1}{\absx{h^{\prime \prime}\of{0_{\sgn{z}}}}}} g\of{z}z^{k} \phi_{\beta \sgn{z}}(z) \intd z \\
T_{2}(k,g) & = \sqrt{\frac{2 \pi}{\beta}}\int_{-M \beta^{-r}}^{M \beta^{-r}}  \sqrt{\frac{1}{\absx{h^{\prime \prime}\of{0_{\sgn{z}}}}}} g\of{z} \beta z^{k+3} \frac{h^{\prime \prime \prime }(0_{\sgn{z}})}{6} \phi_{\beta \sgn{z}}(z) \intd z \\
T_{3}(k,g) & = \sqrt{\frac{2 \pi}{\beta}}\int_{-M \beta^{-r}}^{M \beta^{-r}}  \sqrt{\frac{1}{\absx{h^{\prime \prime}\of{0_{\sgn{z}}}}}} g\of{z}\beta^2 z^{k+6} \frac{h^{\prime \prime \prime }(0_{\sgn{z}})^2}{72}  e^{\xi_{(\beta, z)}^{1}} \phi_{\beta \sgn{z}}(z) \intd z \\
T_{4}(k,g) & = \sqrt{\frac{2 \pi}{\beta}} \int_{-M \beta^{-r}}^{M \beta^{-r}} \sqrt{\frac{1}{\absx{h^{\prime \prime}\of{0_{\sgn{z}}}}}} \beta z^{k+4} \frac{h^{\prime \prime \prime \prime }(\xi_z)}{24}  e^{\xi_{(\beta, z)}^{2}} \phi_{\beta \sgn{z}}(z) \intd z \\
T_{5}(k,g) & = \sqrt{\frac{2 \pi}{\beta}}\int_{-M \beta^{-r}}^{M \beta^{-r}} \sqrt{\frac{1}{\absx{h^{\prime \prime}\of{0_{\sgn{z}}}}}} \beta^2 z^{k+7} \frac{h^{\prime \prime \prime \prime }(\xi_z) h^{\prime \prime \prime }(0_{\sgn{z}})}{48}  e^{\xi_{(\beta, z)}^{2}} \phi_{\beta \sgn{z}}(z) \intd z \\
T_{6}(k,g) & = \sqrt{\frac{2 \pi}{\beta}}\int_{-M \beta^{-r}}^{M \beta^{-r}} \sqrt{\frac{1}{\absx{h^{\prime \prime}\of{0_{\sgn{z}}}}}} \beta^3 z^{k+10} \frac{h^{\prime \prime \prime \prime }(\xi_z) h^{\prime \prime \prime }(0_{\sgn{z}})^2}{144}  e^{\xi_{(\beta, z)}^{2}} e^{\xi_{(\beta, z)}^{1}} \phi_{\beta \sgn{z}}(z) \intd z 
\end{align*}

For the higher order terms, we observe the following, with $g_* = \max\of{\absx{g\of{x}}}$ :
\begin{align}
    \absx{T_{3}(k,g)} &= \abs{\sqrt{\frac{2 \pi}{\beta}}\int_{-M \beta^{-r}}^{M \beta^{-r}}  \sqrt{\frac{1}{\absx{h^{\prime \prime}\of{0_{\sgn{z}}}}}} g\of{z}\beta^2 z^{k+6} \frac{h^{\prime \prime \prime }(0_{\sgn{z}})^2}{72}  e^{\xi_{(\beta, z)}^{1}} \phi_{\beta \sgn{z}}(z) \intd z} \notag\\
    &\leq \sqrt{\frac{2 \pi}{\beta \absx{H}}}\beta^2 \frac{h_*^2 g_*}{72}  \int_{-M \beta^{-r}}^{M \beta^{-r}} \absx{z}^{k+6} \exp\of{\beta z^3 \frac{h_*}{6}} \phi_{\beta \sgn{z}}(z) \intd z \notag \\
    &\leq  \sqrt{\frac{2 \pi}{\beta \absx{H}}} \beta^2 \frac{h_*^2 g_*}{72} \exp\of{M^3 \frac{h_*}{6}} \int_\RR \absx{z}^{k+6} \phi_{\beta \sgn{z}}(z) \intd z \notag \\
    &=  C_3^k \beta^{-(k+3)/2} \int_\RR \absx{z}^{k+6} \cN\of{z;0,1} \intd z \label{m=3}
\end{align}
Following analog methodologies, one can show the existence of $C_4^k, C_5^k, C_6^k \leq 0$ such that :
\begin{align}
    \absx{T_{4}(k,g)} \leq C_4^k \beta^{-(k+3)/2} \int_\RR \absx{z}^{k+4} \cN\of{z;0,1} \intd z \label{m=4}\\
   \absx{T_{5}(k,g)} \leq C_5^k \beta^{-(k+4)/2} \int_\RR \absx{z}^{k+7} \cN\of{z;0,1} \intd z \label{m=5} \\ 
   \absx{T_{6}(k,g)} \leq C_6^k \beta^{-(k+5)/2} \int_\RR \absx{z}^{k+10} \cN\of{z;0,1} \intd z \label{m=6}
\end{align}
Hence, combining \eqref{m=3}, \eqref{m=4}, \eqref{m=5} and \eqref{m=6} shows that for any $m \in \{3,4,5,6\}$,
\begin{align}
    \absx{T_{m}(k,g)} \leq O\of{\frac{1}{\beta^{(k+3)/2}}} \label{t3}
\end{align}

For the two remaining terms, we have
\begin{align}
     T_{1}(k,g) & = \sqrt{\frac{2 \pi}{\beta}} \int_{-M \beta^{-r}}^{M \beta^{-r}}  \sqrt{\frac{1}{\absx{h^{\prime \prime}\of{0_{\sgn{z}}}}}} g\of{z}z^{k} \phi_{\beta \sgn{z}}(z) \intd z \notag \\
      T_{1}(k,g) &= \sqrt{\frac{\pi}{2 \beta}} \bp{ \frac{g\of{0_+}}{\sqrt{\absx{h^{\prime \prime }\of{0_+}}}} \int_\RR \absx{z}^{k} \phi_{\beta +}(z) \intd z + \bp{-1}^k \frac{g\of{0_-}}{\sqrt{\absx{h^{\prime \prime }\of{0_-}}}} \int_\RR \absx{z}^{k} \phi_{\beta -}(z) \intd z} + R_1\of{k,g} \notag \\
      T_{1}(k,g) &= \sqrt{\frac{\pi}{2 \beta^{k+1}}} \bp{\frac{g\of{0_+}}{\sqrt{\absx{h^{\prime \prime }\of{0_+}}^{k+1}}} + \bp{-1}^k \frac{g\of{0_-}}{\sqrt{\absx{h^{\prime \prime }\of{0_-}}^{k+1}}}} \int_\RR \absx{z}^k \cN\of{z;0,1} \intd z + R_1\of{k,g} \label{T1}
\end{align}
Then, we have :
\begin{align}
    T_{2}(k,g) & = \sqrt{\frac{2 \pi}{\beta}}\int_{-M \beta^{-r}}^{M \beta^{-r}}  \sqrt{\frac{1}{\absx{h^{\prime \prime}\of{0_{\sgn{z}}}}}} g\of{z} \beta z^{k+3} \frac{h^{\prime \prime \prime }(0_{\sgn{z}})}{6} \phi_{\beta \sgn{z}}(z) \intd z \notag \\
      T_{2}(k,g) &=  \sqrt{\frac{2 \pi}{\beta}} \bp{\frac{ g\of{0_+}h^{\prime \prime \prime}\of{0_+}}{12 \sqrt{\absx{h^{\prime \prime }\of{0_+}}}} \int_\RR \beta \absx{z}^{k+3} \phi_{\beta +}(z) \intd z+ \bp{-1}^{k+1}  \frac{ g\of{0_-}h^{\prime \prime \prime}\of{0_-}}{12 \sqrt{\absx{h^{\prime \prime }\of{0_-}}}} \int_\RR \beta \absx{z}^{k+3} \phi_{\beta -}(z) \intd z}  + R_2\of{k,g} \notag \\
      T_{2}(k,g) &= \sqrt{\frac{2\pi}{\beta^{k+2}}} \bp{\frac{ g\of{0_+}h^{\prime \prime \prime}\of{0_+}}{12 \sqrt{\absx{h^{\prime \prime }\of{0_+}}^{k+4}}} + \bp{-1}^{k+1}  \frac{ g\of{0_-}h^{\prime \prime \prime}\of{0_-}}{12 \sqrt{\absx{h^{\prime \prime }\of{0_-}}^{k+4}}} } \int_\RR \absx{z}^{k+3} \cN\of{z;0,1} \intd z + R_2\of{k,g} \label{T2}
\end{align}

Using analogue methodologies to the proofs for propositions \cref{proposition:tail} and \cref{proposition:inter}, one can show that there exist $F_1, F_2 >0$ such that $R_1\of{k,g} = O\of{\beta^{(k+3)/2} e^{-F_1 \beta^{2/3}}}$, $R_2\of{k,g} = O\of{\beta^{(k+3)/2} e^{-F_2 \beta^{2/3}}}$  and trivially $R_1\of{k,g} + R_2\of{k,g} = O\of{\beta^{-(k+3)/2}}$. \\

Combining the results from \eqref{t3}, \eqref{T1},\eqref{T2}, we get the anticipated result.
\end{proof}

\clearpage
\subsubsection{Aymptotic Behaviour of the Logged Acceptance Ratio}
First, the previous technical lemma allows us to derive some crucial asymptotic quantities for the asymptotic behaviour.

\begin{corollary}
    (Moments). Using the technical \Cref{lemma}  we derive the following quantities and their asymptotic behaviour,
    \begin{align}
        N\of{\beta} &= \sqrt{\frac{2 \pi}{\beta \absx{H}}} \bp{1 + \frac{h^{\prime \prime \prime}\of{0_+} - h^{\prime \prime \prime}\of{0_-}}{12} \sqrt{\frac{8}{\pi \beta \absx{H}^3}} +O\of{\beta^{-1}}} \label{N}\\
        \mu_\beta := \int_\RR h^{\prime \prime \prime}\of{0_{\sgn{z}}} z^3 \frac{f\of{z}^\beta}{N\of{\beta}} \intd z &= \sqrt{\frac{8}{\pi}} \frac{h^{\prime \prime \prime}\of{0_+} - h^{\prime \prime \prime}\of{0_-}}{2\bp{\beta \absx{H}}^{3/2}}  + \frac{1}{12 \beta^2 \absx{H}^{3}} \Biggl(15 \bp{h^{\prime \prime \prime}\of{0_+}^2 + h^{\prime \prime \prime}\of{0_-}^2} \notag \\ &- \frac{4}{\pi} \bp{h^{\prime \prime \prime}\of{0_+}- h^{\prime \prime \prime}\of{0_-}}^2 \Biggr) + O\of{\beta^{-5/2}}  \label{mu}\\
        \sigma_\beta := \int_\RR h^{\prime \prime \prime}\of{0_{\sgn{z}}}^2 z^6 \frac{f\of{z}^\beta}{N\of{\beta}} \intd z &= \frac{15}{\bp{\beta \absx{H}}^{3}} \frac{h^{\prime \prime \prime}\of{0_+}^2 + h^{\prime \prime \prime}\of{0_-}^2}{2}  + O\of{\beta^{-7/2}} \label{sigma}
    \end{align}
\end{corollary}

Furthermore, we can derive the Taylor Expansion of $h$ in the neighbourhood of $0$ :
\begin{align}
    h\of{z} = - \frac{\absx{H}}{2} z^2 + \frac{h^{\prime \prime \prime}\of{0_{\sgn{z}}}}{6} z^3 + \frac{h^{\prime \prime \prime \prime}\of{0_{\sgn{z}}}}{24} z^4 + \frac{h^{\prime \prime \prime \prime \prime}\of{\xi_z}}{120} z^5 \label{Taylor}
\end{align}
where $0 < \absx{\xi_z} < \absx{z}$.

The new logged acceptance ratio \eqref{eq:loggedA} then gives us :
\begin{align}
    B\of{d} = \sum_{i=1}^d  \Biggl[&\frac{\beta}{6} \bp{y_i^3 h^{\prime \prime \prime}\of{0_{\sgn{y_i}}}- x_i^3 h^{\prime \prime \prime}\of{0_{\sgn{x_i}}}} + \frac{\beta}{24}\bp{y_i^4 h^{\prime \prime \prime \prime}\of{0_{\sgn{y_i}}}- x_i^4 h^{\prime \prime \prime \prime}\of{0_{\sgn{x_i}}}} \notag \\
    &+ \frac{\beta}{120} \bp{y_i^5 h^{\prime \prime \prime \prime \prime}\of{\xi_{y_i}}- x_i^5  h^{\prime \prime \prime \prime \prime}\of{\xi_{x_i}}}\Biggr] \label{taylor_expansion_acceptance}
\end{align}

We decompose this sum in three distinct terms, according to the magnitude of each component :
\begin{align}
    W\of{d} &:= \sum_{i=1}^d \frac{\beta}{6} \bp{y_i^3 h^{\prime \prime \prime}\of{0_{\sgn{y_i}}}- x_i^3 h^{\prime \prime \prime}\of{0_{\sgn{x_i}}}} \label{w} \\
    T_1\of{d} &:= \sum_{i=1}^d \frac{\beta}{24}\bp{y_i^4 h^{\prime \prime \prime \prime}\of{0_{\sgn{y_i}}}- x_i^4 h^{\prime \prime \prime \prime}\of{0_{\sgn{x_i}}}} \label{t1} \\
    T_2\of{d} &:= \sum_{i=1}^d \frac{\beta}{120} \bp{y_i^5 h^{\prime \prime \prime \prime \prime}\of{\xi_{y_i}}- x_i^5  h^{\prime \prime \prime \prime \prime}\of{\xi_{x_i}}} \label{t2}
\end{align}
Let's set 
\begin{align}
        \sigma^2 = \frac{15 \bp{h^{\prime \prime \prime}\of{0_+}^2 + h^{\prime \prime \prime}\of{0_-}^2} - \frac{4}{\pi} \bp{h^{\prime \prime \prime}\of{0_+}- h^{\prime \prime \prime}\of{0_-}}^2}{36 \ell \absx{H}^3} \label{sigma}
\end{align}
Then, we have the three following lemmas to characterize the asymptotic behaviour of the terms we just introduced, which we will prove one by one,
\begin{lemma}
    \label{lemmaW}
    With $f(\cdot)$ as in Theorem~\ref{Thm:scalingRelaxed}, 
    \begin{align}
         W\of{d} \overset{d \to \infty}{\Longrightarrow} \cN\of{- \frac{\sigma^2}{2}, \sigma^2} \label{asymptotic_W}
    \end{align}
\end{lemma}

\begin{lemma}
    \label{lemmaT1}
    With $f(\cdot)$ as in Theorem~\ref{Thm:scalingRelaxed}, 
    \begin{align}
    T_1\of{d}  \overset{\PP}{\longrightarrow} 0 \label{prob_conv_1}
\end{align}
\end{lemma}
\begin{lemma}
    \label{lemmaT2}
    With $f(\cdot)$ as in Theorem~\ref{Thm:scalingRelaxed},  
    \begin{align}
    T_2\of{d}  \overset{\PP}{\longrightarrow} 0 \label{prob_conv_2}
\end{align}
\end{lemma}

\begin{proof}[Proof of \cref{lemmaW}]
     We divide the proof between the current state $\mathbf x$ and the proposed state $\mathbf y$.
\subsubsection*{Behaviour of the $x_i$'s}
    Define the following collection of random variables for $d \in \NN$ and $1\leq i \leq d$ :
\begin{align*}
    X_{\bp{d,i}} := \frac{\beta}{6} \bp{x_i^3 h^{\prime\prime\prime}\of{0_{\sgn{x_i}}} - \mu_\beta}
\end{align*}
\textbf{Lindeberg-Feller Conditions}: To apply the result of the Lindeberg-Feller theorem, the Central Limit Theorem for triangular arrays, to the $X_{(d, i)}$ 's, we need to prove that the following two conditions hold as $d \rightarrow \infty$

\begin{enumerate}[label=\roman*)]
    \item There exists $\sigma^{2} \in \mathbb{R}_{+}$such that

\begin{align}
\sum_{i=1}^{d} \mathbb{E}\left(X_{(d, i)}^{2}\right) \rightarrow \sigma^{2}>0 \label{variance}
\end{align}
\item For all $\varepsilon>0$ and with $A_{\varepsilon}^{d}:=\left\{\left|X_{(d, i)}\right|>\varepsilon\right\}$

\begin{align}
\lim _{d \rightarrow \infty} \sum_{i=1}^{d} \EE{\absx{X_{(d, i)}}^{2} \1{A_{\varepsilon}^d}}=0 \label{lindeberg}
\end{align}
\end{enumerate}

The first L-F condition (i) is much easier to show as, we have 
\begin{align*}
    \EE{X_{\bp{d,i}}^2} &= \frac{\beta^2}{36} \bp{\sigma_\beta -\mu_\beta^2} \\
    &= \frac{\beta^2}{72} \frac{1}{\bp{\beta \absx{H}}^3} \bs{ 15 \bp{h^{\prime \prime \prime}\of{0^+}^2 + h^{\prime \prime \prime}\of{0^-}^2} - \frac{4}{\pi} \bp{h^{\prime \prime \prime}\of{0^+} - h^{\prime \prime \prime}\of{0^-}}^2  } + O\of{\beta^{-3/2}} \\
    &= \frac{1}{72\beta \absx{H}^3} \bs{ 15 \bp{h^{\prime \prime \prime}\of{0^+}^2 + h^{\prime \prime \prime}\of{0^-}^2} - \frac{4}{\pi} \bp{h^{\prime \prime \prime}\of{0^+} - h^{\prime \prime \prime}\of{0^-}}^2  } + O\of{\beta^{-3/2}}
\end{align*}
Now recall that $\beta = l d$, it is immediate that 
\begin{align}
     \sum_{i=1}^d \EE{X_{\bp{d,i}}^2} \underset{d \rightarrow +\infty}{\rightarrow} \frac{1}{72 l \absx{H}^3} \bs{ 15 \bp{h^{\prime \prime \prime}\of{0^+}^2 + h^{\prime \prime \prime}\of{0^-}^2} - \frac{4}{\pi} \bp{h^{\prime \prime \prime}\of{0^+} - h^{\prime \prime \prime}\of{0^-}}^2  }
\end{align}
For the second condition, let $\varepsilon > 0$, then we have :
\begin{align}
    \EE{\absx{X_{\bp{d,i}}}^2 \1{A_\varepsilon^d}} &= \frac{\beta^2}{36} \bs{\EE{x_i^6 h^{\prime\prime\prime}\of{0_{\sgn{x_i}}}^2 \1{A_\varepsilon^d}} -  2 \mu_\beta \EE{x_i^3 h^{\prime\prime\prime}\of{0_{\sgn{x_i}}} \1{A_\varepsilon^d}} + \mu_\beta^2 \PP\of{A_\varepsilon^d}} \label{LF}
\end{align}
First, we note that the set $A_\varepsilon^d$ can be divided as 
\begin{align}
    A_\varepsilon^d& = \bc{ x_i^3 h^{\prime\prime\prime}\of{0_{\sgn{x_i}}} > \mu_\beta + \frac{6 \varepsilon}{\beta}} \cup \bc{ x_i^3 h^{\prime\prime\prime}\of{0_{\sgn{x_i}}} < \mu_\beta - \frac{6 \varepsilon}{\beta}} \notag \\
    &= F_{+\varepsilon}^d \cup F_{-\varepsilon}^d \label{set_decompositon}
\end{align}
Knowing that $\mu_\beta = O\of{d^{-3/2}}$, for $d$ sufficiently large, there exists constants $J_1, J_2 >0$ such that :
\begin{align}
    F_{+\varepsilon}^d &= \bc{ x_i^3 h^{\prime\prime\prime}\of{0_{\sgn{x_i}}} > \mu_\beta + \frac{6 \varepsilon}{\beta}} \notag
    \\ &\subseteq \bc{x_i^3  h^{\prime\prime\prime}\of{0_{\sgn{x_i}}} > \frac{J_1}{d}} \notag
    \\ &\subseteq \bc{\absx{x_i}^3 \absx{ h^{\prime\prime\prime}\of{0_{\sgn{x_i}}}}> \frac{J_1}{d}} \notag
    \\ &\subseteq \bc{\absx{x_i} > \bp{\frac{J_1}{h_*d}}^{1/3}} \label{f+}
    \\ F_{-\varepsilon}^d &= \bc{ x_i^3  h^{\prime\prime\prime}\of{0_{\sgn{x_i}}} < \mu_\beta - \frac{6 \varepsilon}{\beta}} \notag \\
    &\subseteq \bc{x_i^3   h^{\prime\prime\prime}\of{0_{\sgn{x_i}}} < - \frac{J_2}{d}} \notag \\
    &\subseteq \bc{ - x_i^3   h^{\prime\prime\prime}\of{0_{\sgn{x_i}}} >  \frac{J_2}{d}} \notag \\
    &\subseteq \bc{\absx{x_i}^3 \absx{ h^{\prime\prime\prime}\of{0_{\sgn{x_i}}}} > \frac{J_2}{d}} \notag \\
    &\subseteq \bc{\absx{x_i} > \bp{\frac{J_2}{h_*d}}^{1/3}} \label{f-}
\end{align}
Now, by combining \eqref{set_decompositon}, \eqref{f+} and \eqref{f-}, aswell as setting $J = \bp{\frac{\min\of{J_1,J_2}}{h_*}}^{1/3}$, we get :
\begin{align}
    A_\varepsilon^d &\subseteq \bc{\absx{x_i} > \bp{\frac{J_1}{h_*d}}^{1/3}} \cup \bc{\absx{x_i} > \bp{\frac{J_2}{h_*d}}^{1/3}} \notag \\
    &\subseteq \bc{\absx{x_i} > \bp{\frac{J}{d^{1/3}}}}
\end{align}
Hence, we have, with $M$ as defined in \eqref{eq:polyT}, $g^p\in \mathcal S_0$ such that $g^p\of{z} = h^{\prime \prime \prime}\of{0_{\sgn{z}}}^p$, and $p \in \{0,1,2\}$
\begin{align}
    \EE{x_i^k g^p\of{x_i} \1{A_{\varepsilon}^d} } \leq \frac{1}{N\of{\beta}} \bs{I_k^{g^p}\of{Jd^{-1/3}, M} + I_k^{g^p}\of{M, +\infty} + I_k^{g^p}\of{-M,-Jd^{-1/3}} + I_k^{g^p}\of{-\infty,-M}}
\end{align}
Then, by propositions \cref{proposition:tail} and \cref{proposition:inter} there exist $W_1, W_2, W_3 > 0$ such that 
\begin{align*}
    \EE{x_i^6 h^{\prime\prime\prime}\of{0_{\sgn{x_i}}}^2 \1{A_{\varepsilon}^d} } &= O\of{e^{-W_1 d^{1/3}}} \\
    \mu_\beta \EE{x_i^3 h^{\prime\prime\prime}\of{0_{\sgn{x_i}}} \1{A_{\varepsilon}^d} } &= O\of{e^{-W_2 d^{1/3}}} \\
    \mu_\beta^2 \EE{\1{A_{\varepsilon}^d} } &= O\of{e^{-W_3 d^{1/3}}}
\end{align*}
Setting $W = \min\of{W_1, W_2, W_3}$, we have :
\begin{align}
    \EE{x_i^6 h^{\prime\prime\prime}\of{\xi_{x_i}}^2 \1{A_\varepsilon^d}} -  2 \mu_\beta \EE{x_i^3 h^{\prime\prime\prime}\of{\xi_{x_i}} \1{A_\varepsilon^d}} + \mu_\beta^2 \PP\of{A_\varepsilon^d} = O\of{e^{-W d^{1/3}}} \label{LF_asymptotic}
\end{align}
Consequently, by \eqref{LF} and \eqref{LF_asymptotic}, we have :
\begin{align}
    \lim _{d \rightarrow \infty} \sum_{i=1}^{d} \mathbb{E}\left(\left|X_{(d, i)}\right|^{2} \1{A_{\varepsilon}^d}\right)=0 
\end{align}
Thus, both Lindeberg-Feller conditions hold and the conclusion follows : 
\begin{align}
    - \sum_{i=1}^d X_{\bp{d,i}} \Rightarrow \cN\of{0, \frac{1}{72 l \absx{H}^3} \bs{ 15 \bp{h^{\prime \prime \prime}\of{0^+}^2 + h^{\prime \prime \prime}\of{0^-}^2} - \frac{4}{\pi} \bp{h^{\prime \prime \prime}\of{0^+} - h^{\prime \prime \prime}\of{0^-}}^2  }} \label{weak1}
\end{align}

\subsubsection*{Behaviour of the $y_i$'s}
This time, we consider the following collection $\bp{Y_{\bp{d,i}}}_{1\leq i \leq d}$ of random variables for $d \in \NN^*$, such that
\begin{align*}
    Y_{\bp{d,i}} := \frac{\beta}{6} \bp{y_i^3 h^{\prime\prime\prime}\of{0_{\sgn{y_i}}} - \mu_y}
\end{align*}
where, from standard computations using the known and standard distribution of the $y_i$'s, 
\begin{align}
    \mu_y := \EE{y_i^3 h^{\prime\prime\prime}\of{0_{\sgn{y_i}}}} &= \sqrt{\frac{8}{\pi}}\sqrt{\frac{1}{\bp{\beta \absx{H}}^{3}}} \frac{h^{\prime \prime \prime}\of{0_+} - h^{\prime \prime \prime}\of{0_-}}{2}  \\
    \sigma_y := \EE{y_i^6h^{\prime\prime\prime}\of{0_{\sgn{y_i}}}^2} &= 15 \frac{1}{\bp{\beta \absx{H}}^{3}} \frac{h^{\prime \prime \prime}\of{0_+}^2 + h^{\prime \prime \prime}\of{0_-}^2}{2} 
\end{align}
one can follow an analogue procedure to comply with the Lindeberg-Feller theorem requirements and show that:
\begin{align}
    \sum_{i=1}^d Y_{\bp{d,i}} \Rightarrow \cN\of{0, \frac{1}{72 l \absx{H}^3} \bs{ 15 \bp{h^{\prime \prime \prime}\of{0^+}^2 + h^{\prime \prime \prime}\of{0^-}^2} - \frac{4}{\pi} \bp{h^{\prime \prime \prime}\of{0^+} - h^{\prime \prime \prime}\of{0^-}}^2  }} \label{weak2}
\end{align}
Using the fact that the $x_{i}$ 's and $y_{i}$ 's are independent along with the weak convergences established in \eqref{weak1} and \eqref{weak2}, we get 
\begin{align}
    \sum_{i=1}^d \bs{Y_{\bp{d,i}} -  X_{\bp{d,i}}} \Rightarrow \cN\of{0, \frac{1}{36 l \absx{H}^3} \bs{ 15 \bp{h^{\prime \prime \prime}\of{0^+}^2 + h^{\prime \prime \prime}\of{0^-}^2} - \frac{4}{\pi} \bp{h^{\prime \prime \prime}\of{0^+} - h^{\prime \prime \prime}\of{0^-}}^2  }} \label{weak3}
\end{align}
Then, one can observe that :
\begin{align}
    \sum_{i=1}^d \bs{Y_{\bp{d,i}} -  X_{\bp{d,i}}} &= \frac{\beta}{6} \sum_{i=1}^d \bs{y_i^3 h^{\prime\prime\prime}\of{0_{\sgn{y_i}}} - x_i^3 h^{\prime\prime\prime}\of{0_{\sgn{x_i}}} - \EE{y_i^3 h^{\prime\prime\prime}\of{0_{\sgn{y_i}}} - x_i^3 h^{\prime\prime\prime}\of{0_{\sgn{x_i}}}}} \notag \\
    &= \frac{\beta}{6} \sum_{i=1}^d \bs{y_i^3 h^{\prime\prime\prime}\of{0_{\sgn{y_i}}} - x_i^3 h^{\prime\prime\prime}\of{0_{\sgn{x_i}}}} - \frac{\ell d^2}{6} \EE{y_i^3 h^{\prime\prime\prime}\of{0_{\sgn{y_i}}} - x_i^3 h^{\prime\prime\prime}\of{0_{\sgn{x_i}}}} \notag \\
    &= B\of{d} - \frac{\ell d^2}{6} \EE{y_i^3 h^{\prime\prime\prime}\of{0_{\sgn{y_i}}} - x_i^3 h^{\prime\prime\prime}\of{0_{\sgn{x_i}}}} \label{back_to_acceptance}
\end{align}
Now, using the asymptotic formula recalled in \eqref{mu},
\begin{align}
    \EE{y_i^3 h^{\prime\prime\prime}\of{0_{\sgn{y_i}}} - x_i^3 h^{\prime\prime\prime}\of{0_{\sgn{x_i}}}} = -  \frac{15 \bp{h^{\prime \prime \prime}\of{0_+}^2 + h^{\prime \prime \prime}\of{0_-}^2} - \frac{4}{\pi} \bp{h^{\prime \prime \prime}\of{0_+}- h^{\prime \prime \prime}\of{0_-}}^2}{12 \beta^2 \absx{H}^{3}}  + O\of{\beta^{-5/2}} \label{expectation}
\end{align}
Finally, by plugging \eqref{expectation} into \eqref{back_to_acceptance} and using \eqref{weak3}, we show :
\begin{align}
    W\of{d} \Rightarrow \cN\of{- \frac{\sigma^2}{2}, \sigma^2}
\end{align}
with 
\begin{align}
    \sigma^2 = \frac{15 \bp{h^{\prime \prime \prime}\of{0_+}^2 + h^{\prime \prime \prime}\of{0_-}^2} - \frac{4}{\pi} \bp{h^{\prime \prime \prime}\of{0_+}- h^{\prime \prime \prime}\of{0_-}}^2}{36 l \absx{H}^3} 
\end{align}
\end{proof}

\begin{proof}[Proof of \cref{lemmaW}] We divide the proof between the current state $\mathbf x$ and the proposed state $\mathbf y$.
\subsubsection*{Behaviour of the $x_i$'s}
    Define the following collection of random variables for $d \in \NN$ and $1\leq i \leq d$ :
\begin{align*}
    X_{\bp{d,i}} := \frac{\beta}{6} \bp{x_i^3 h^{\prime\prime\prime}\of{0_{\sgn{x_i}}} - \mu_\beta}
\end{align*}
\textbf{Lindeberg-Feller Conditions}: To apply the result of the Lindeberg-Feller theorem, the Central Limit Theorem for triangular arrays, to the $X_{(d, i)}$ 's, we need to prove that the following two conditions hold as $d \rightarrow \infty$

\begin{enumerate}[label=\roman*)]
    \item There exists $\sigma^{2} \in \mathbb{R}_{+}$such that

\begin{align}
\sum_{i=1}^{d} \mathbb{E}\left(X_{(d, i)}^{2}\right) \rightarrow \sigma^{2}>0 \label{variance}
\end{align}
\item For all $\varepsilon>0$ and with $A_{\varepsilon}^{d}:=\left\{\left|X_{(d, i)}\right|>\varepsilon\right\}$

\begin{align}
\lim _{d \rightarrow \infty} \sum_{i=1}^{d} \EE{\absx{X_{(d, i)}}^{2} \1{A_{\varepsilon}^d}}=0 \label{lindeberg}
\end{align}
\end{enumerate}

The first L-F condition (i) is much easier to show as, we have 
\begin{align*}
    \EE{X_{\bp{d,i}}^2} &= \frac{\beta^2}{36} \bp{\sigma_\beta -\mu_\beta^2} \\
    &= \frac{\beta^2}{72} \frac{1}{\bp{\beta \absx{H}}^3} \bs{ 15 \bp{h^{\prime \prime \prime}\of{0^+}^2 + h^{\prime \prime \prime}\of{0^-}^2} - \frac{4}{\pi} \bp{h^{\prime \prime \prime}\of{0^+} - h^{\prime \prime \prime}\of{0^-}}^2  } + O\of{\beta^{-3/2}} \\
    &= \frac{1}{72\beta \absx{H}^3} \bs{ 15 \bp{h^{\prime \prime \prime}\of{0^+}^2 + h^{\prime \prime \prime}\of{0^-}^2} - \frac{4}{\pi} \bp{h^{\prime \prime \prime}\of{0^+} - h^{\prime \prime \prime}\of{0^-}}^2  } + O\of{\beta^{-3/2}}
\end{align*}
Now recall that $\beta = l d$, it is immediate that 
\begin{align}
     \sum_{i=1}^d \EE{X_{\bp{d,i}}^2} \underset{d \rightarrow +\infty}{\rightarrow} \frac{1}{72 l \absx{H}^3} \bs{ 15 \bp{h^{\prime \prime \prime}\of{0^+}^2 + h^{\prime \prime \prime}\of{0^-}^2} - \frac{4}{\pi} \bp{h^{\prime \prime \prime}\of{0^+} - h^{\prime \prime \prime}\of{0^-}}^2  }
\end{align}
For the second condition, let $\varepsilon > 0$, then we have :
\begin{align}
    \EE{\absx{X_{\bp{d,i}}}^2 \1{A_\varepsilon^d}} &= \frac{\beta^2}{36} \bs{\EE{x_i^6 h^{\prime\prime\prime}\of{0_{\sgn{x_i}}}^2 \1{A_\varepsilon^d}} -  2 \mu_\beta \EE{x_i^3 h^{\prime\prime\prime}\of{0_{\sgn{x_i}}} \1{A_\varepsilon^d}} + \mu_\beta^2 \PP\of{A_\varepsilon^d}} \label{LF}
\end{align}
First, we note that the set $A_\varepsilon^d$ can be divided as 
\begin{align}
    A_\varepsilon^d& = \bc{ x_i^3 h^{\prime\prime\prime}\of{0_{\sgn{x_i}}} > \mu_\beta + \frac{6 \varepsilon}{\beta}} \cup \bc{ x_i^3 h^{\prime\prime\prime}\of{0_{\sgn{x_i}}} < \mu_\beta - \frac{6 \varepsilon}{\beta}} \notag \\
    &= F_{+\varepsilon}^d \cup F_{-\varepsilon}^d \label{set_decompositon}
\end{align}
Knowing that $\mu_\beta = O\of{d^{-3/2}}$, for $d$ sufficiently large, there exists constants $J_1, J_2 >0$ such that :
\begin{align}
    F_{+\varepsilon}^d &= \bc{ x_i^3 h^{\prime\prime\prime}\of{0_{\sgn{x_i}}} > \mu_\beta + \frac{6 \varepsilon}{\beta}} \notag
    \\ &\subseteq \bc{x_i^3  h^{\prime\prime\prime}\of{0_{\sgn{x_i}}} > \frac{J_1}{d}} \notag
    \\ &\subseteq \bc{\absx{x_i}^3 \absx{ h^{\prime\prime\prime}\of{0_{\sgn{x_i}}}}> \frac{J_1}{d}} \notag
    \\ &\subseteq \bc{\absx{x_i} > \bp{\frac{J_1}{h_*d}}^{1/3}} \label{f+}
    \\ F_{-\varepsilon}^d &= \bc{ x_i^3  h^{\prime\prime\prime}\of{0_{\sgn{x_i}}} < \mu_\beta - \frac{6 \varepsilon}{\beta}} \notag \\
    &\subseteq \bc{x_i^3   h^{\prime\prime\prime}\of{0_{\sgn{x_i}}} < - \frac{J_2}{d}} \notag \\
    &\subseteq \bc{ - x_i^3   h^{\prime\prime\prime}\of{0_{\sgn{x_i}}} >  \frac{J_2}{d}} \notag \\
    &\subseteq \bc{\absx{x_i}^3 \absx{ h^{\prime\prime\prime}\of{0_{\sgn{x_i}}}} > \frac{J_2}{d}} \notag \\
    &\subseteq \bc{\absx{x_i} > \bp{\frac{J_2}{h_*d}}^{1/3}} \label{f-}
\end{align}
Now, by combining \eqref{set_decompositon}, \eqref{f+} and \eqref{f-}, aswell as setting $J = \bp{\frac{\min\of{J_1,J_2}}{h_*}}^{1/3}$, we get :
\begin{align}
    A_\varepsilon^d &\subseteq \bc{\absx{x_i} > \bp{\frac{J_1}{h_*d}}^{1/3}} \cup \bc{\absx{x_i} > \bp{\frac{J_2}{h_*d}}^{1/3}} \notag \\
    &\subseteq \bc{\absx{x_i} > \bp{\frac{J}{d^{1/3}}}}
\end{align}
Hence, we have, with $M$ as defined in \eqref{eq:polyT}, $g^p\in \mathcal S_0$ such that $g^p\of{z} = h^{\prime \prime \prime}\of{0_{\sgn{z}}}^p$, and $p \in \{0,1,2\}$
\begin{align}
    \EE{x_i^k g^p\of{x_i} \1{A_{\varepsilon}^d} } \leq \frac{1}{N\of{\beta}} \bs{I_k^{g^p}\of{Jd^{-1/3}, M} + I_k^{g^p}\of{M, +\infty} + I_k^{g^p}\of{-M,-Jd^{-1/3}} + I_k^{g^p}\of{-\infty,-M}}
\end{align}
Then, by propositions \cref{proposition:tail} and \cref{proposition:inter} there exist $W_1, W_2, W_3 > 0$ such that 
\begin{align*}
    \EE{x_i^6 h^{\prime\prime\prime}\of{0_{\sgn{x_i}}}^2 \1{A_{\varepsilon}^d} } &= O\of{e^{-W_1 d^{1/3}}} \\
    \mu_\beta \EE{x_i^3 h^{\prime\prime\prime}\of{0_{\sgn{x_i}}} \1{A_{\varepsilon}^d} } &= O\of{e^{-W_2 d^{1/3}}} \\
    \mu_\beta^2 \EE{\1{A_{\varepsilon}^d} } &= O\of{e^{-W_3 d^{1/3}}}
\end{align*}
Setting $W = \min\of{W_1, W_2, W_3}$, we have :
\begin{align}
    \EE{x_i^6 h^{\prime\prime\prime}\of{\xi_{x_i}}^2 \1{A_\varepsilon^d}} -  2 \mu_\beta \EE{x_i^3 h^{\prime\prime\prime}\of{\xi_{x_i}} \1{A_\varepsilon^d}} + \mu_\beta^2 \PP\of{A_\varepsilon^d} = O\of{e^{-W d^{1/3}}} \label{LF_asymptotic}
\end{align}
Consequently, by \eqref{LF} and \eqref{LF_asymptotic}, we have :
\begin{align}
    \lim _{d \rightarrow \infty} \sum_{i=1}^{d} \mathbb{E}\left(\left|X_{(d, i)}\right|^{2} \1{A_{\varepsilon}^d}\right)=0 
\end{align}
Thus, both Lindeberg-Feller conditions hold and the conclusion follows : 
\begin{align}
    - \sum_{i=1}^d X_{\bp{d,i}} \Rightarrow \cN\of{0, \frac{1}{72 l \absx{H}^3} \bs{ 15 \bp{h^{\prime \prime \prime}\of{0^+}^2 + h^{\prime \prime \prime}\of{0^-}^2} - \frac{4}{\pi} \bp{h^{\prime \prime \prime}\of{0^+} - h^{\prime \prime \prime}\of{0^-}}^2  }} \label{weak1}
\end{align}

\subsubsection*{Behaviour of the $y_i$'s}
This time, we consider the following collection $\bp{Y_{\bp{d,i}}}_{1\leq i \leq d}$ of random variables for $d \in \NN^*$, such that
\begin{align*}
    Y_{\bp{d,i}} := \frac{\beta}{6} \bp{y_i^3 h^{\prime\prime\prime}\of{0_{\sgn{y_i}}} - \mu_y}
\end{align*}
where, from standard computations using the known and standard distribution of the $y_i$'s, 
\begin{align}
    \mu_y := \EE{y_i^3 h^{\prime\prime\prime}\of{0_{\sgn{y_i}}}} &= \sqrt{\frac{8}{\pi}}\sqrt{\frac{1}{\bp{\beta \absx{H}}^{3}}} \frac{h^{\prime \prime \prime}\of{0_+} - h^{\prime \prime \prime}\of{0_-}}{2}  \\
    \sigma_y := \EE{y_i^6h^{\prime\prime\prime}\of{0_{\sgn{y_i}}}^2} &= 15 \frac{1}{\bp{\beta \absx{H}}^{3}} \frac{h^{\prime \prime \prime}\of{0_+}^2 + h^{\prime \prime \prime}\of{0_-}^2}{2} 
\end{align}
one can follow an analogue procedure to comply with the Lindeberg-Feller theorem requirements and show that:
\begin{align}
    \sum_{i=1}^d Y_{\bp{d,i}} \Rightarrow \cN\of{0, \frac{1}{72 l \absx{H}^3} \bs{ 15 \bp{h^{\prime \prime \prime}\of{0^+}^2 + h^{\prime \prime \prime}\of{0^-}^2} - \frac{4}{\pi} \bp{h^{\prime \prime \prime}\of{0^+} - h^{\prime \prime \prime}\of{0^-}}^2  }} \label{weak2}
\end{align}
Using the fact that the $x_{i}$ 's and $y_{i}$ 's are independent along with the weak convergences established in \eqref{weak1} and \eqref{weak2}, we get 
\begin{align}
    \sum_{i=1}^d \bs{Y_{\bp{d,i}} -  X_{\bp{d,i}}} \Rightarrow \cN\of{0, \frac{1}{36 l \absx{H}^3} \bs{ 15 \bp{h^{\prime \prime \prime}\of{0^+}^2 + h^{\prime \prime \prime}\of{0^-}^2} - \frac{4}{\pi} \bp{h^{\prime \prime \prime}\of{0^+} - h^{\prime \prime \prime}\of{0^-}}^2  }} \label{weak3}
\end{align}
Then, one can observe that :
\begin{align}
    \sum_{i=1}^d \bs{Y_{\bp{d,i}} -  X_{\bp{d,i}}} &= \frac{\beta}{6} \sum_{i=1}^d \bs{y_i^3 h^{\prime\prime\prime}\of{0_{\sgn{y_i}}} - x_i^3 h^{\prime\prime\prime}\of{0_{\sgn{x_i}}} - \EE{y_i^3 h^{\prime\prime\prime}\of{0_{\sgn{y_i}}} - x_i^3 h^{\prime\prime\prime}\of{0_{\sgn{x_i}}}}} \notag \\
    &= \frac{\beta}{6} \sum_{i=1}^d \bs{y_i^3 h^{\prime\prime\prime}\of{0_{\sgn{y_i}}} - x_i^3 h^{\prime\prime\prime}\of{0_{\sgn{x_i}}}} - \frac{\ell d^2}{6} \EE{y_i^3 h^{\prime\prime\prime}\of{0_{\sgn{y_i}}} - x_i^3 h^{\prime\prime\prime}\of{0_{\sgn{x_i}}}} \notag \\
    &= B\of{d} - \frac{\ell d^2}{6} \EE{y_i^3 h^{\prime\prime\prime}\of{0_{\sgn{y_i}}} - x_i^3 h^{\prime\prime\prime}\of{0_{\sgn{x_i}}}} \label{back_to_acceptance}
\end{align}
Now, using the asymptotic formula recalled in \eqref{mu},
\begin{align}
    \EE{y_i^3 h^{\prime\prime\prime}\of{0_{\sgn{y_i}}} - x_i^3 h^{\prime\prime\prime}\of{0_{\sgn{x_i}}}} = -  \frac{15 \bp{h^{\prime \prime \prime}\of{0_+}^2 + h^{\prime \prime \prime}\of{0_-}^2} - \frac{4}{\pi} \bp{h^{\prime \prime \prime}\of{0_+}- h^{\prime \prime \prime}\of{0_-}}^2}{12 \beta^2 \absx{H}^{3}}  + O\of{\beta^{-5/2}} \label{expectation}
\end{align}
Finally, by plugging \eqref{expectation} into \eqref{back_to_acceptance} and using \eqref{weak3}, we show :
\begin{align}
    W\of{d} \Rightarrow \cN\of{- \frac{\sigma^2}{2}, \sigma^2}
\end{align}
with 
\begin{align}
    \sigma^2 = \frac{15 \bp{h^{\prime \prime \prime}\of{0_+}^2 + h^{\prime \prime \prime}\of{0_-}^2} - \frac{4}{\pi} \bp{h^{\prime \prime \prime}\of{0_+}- h^{\prime \prime \prime}\of{0_-}}^2}{36 l \absx{H}^3} 
\end{align}
\end{proof}
\begin{proof}[Proof of \cref{lemmaT1}]
    Recall that we want to study the asymptotic behaviour of the following quantity \eqref{t1}:
\begin{align*}
    T_1\of{d} &:= \sum_{i=1}^d \frac{\beta}{24}\bp{y_i^4 h^{\prime \prime \prime \prime}\of{0_{\sgn{y_i}}}- x_i^4 h^{\prime \prime \prime \prime}\of{0_{\sgn{x_i}}}}
\end{align*}
Through our findings of \cref{lemma}, we know:
\begin{align}
    \EE{ x_i^4 h^{\prime \prime \prime \prime}\of{0_{\sgn{x_i}}}} &= 3 \frac{h^{\prime \prime \prime \prime}\of{0_+} + h^{\prime \prime \prime \prime}\of{0_-}}{2 \beta^2 \absx{H}^2} + O\of{\beta^{-5/2}} \label{exp_x_4} \\
    \EE{ x_i^8 h^{\prime \prime \prime \prime}\of{0_{\sgn{x_i}}}^2} &= 105 \frac{h^{\prime \prime \prime \prime}\of{0_+}^2 + h^{\prime \prime \prime \prime}\of{0_-}^2}{2 \beta^4 \absx{H}^4} + O\of{\beta^{-9/2}} \label{exp_x_8}
\end{align}
and using the fact that $y_i$'s follow a Gaussian distribution,
\begin{align}
    \EE{ y_i^4 h^{\prime \prime \prime \prime}\of{0_{\sgn{y_i}}}} &= 3 \frac{h^{\prime \prime \prime \prime}\of{0_+} + h^{\prime \prime \prime \prime}\of{0_-}}{2 \beta^2 \absx{H}^2}  \label{exp_y_4} \\
    \EE{ y_i^8 h^{\prime \prime \prime \prime}\of{0_{\sgn{y_i}}}^2} &= 105 \frac{h^{\prime \prime \prime \prime}\of{0_+}^2 + h^{\prime \prime \prime \prime}\of{0_-}^2}{2 \beta^4 \absx{H}^4}  \label{exp_y_8}
\end{align}

Defining $\mu_{T_{1}(d)}:=\mathbb{E}\left(T_{1}(d)\right)$ then by \eqref{exp_x_4} and \eqref{exp_y_4}, as $d \rightarrow \infty$

\begin{align}
\mu_{T_{1}(d)} \rightarrow 0 \label{limit_exp}
\end{align}

Thus there exists $U>0$ such that for all $d>U$ then $\left|\mu_{T_{1}(d)}\right|<\frac{\varepsilon}{2}$. Furthermore, by \eqref{exp_x_4}, \eqref{exp_x_8}, \eqref{exp_y_4} and \eqref{exp_y_8} then

\begin{align}
\operatorname{Var}\left(T_{1}(d)\right) \rightarrow 0 \text { as } d \rightarrow 0 \label{var}
\end{align}

For $d>U$ and using Chebyshev's inequality and utilising \eqref{variance}

$$
\mathbb{P}\left(\left|T_{1}(d)\right|>\varepsilon\right) \leq \mathbb{P}\left(\left|T_{1}(d)-\mu_{T_{1}(d)}\right| \geq \frac{\varepsilon}{2}\right) \leq \frac{4 \operatorname{Var}\left(T_{1}(d)\right)}{\varepsilon^{2}} \rightarrow 0 \quad \text { as } \quad d \rightarrow 0
$$
Hence,

\begin{align}
    T_1\of{d}  \overset{\PP}{\longrightarrow} 0 \label{prob_conv_1}
\end{align}
\end{proof}

\begin{proof}[Proof of \cref{lemmaT2}]
    Recall that we want to study the asymptotic behaviour of the following quantity \eqref{t2} :
\begin{align*}
    T_2\of{d} &:= \sum_{i=1}^d \frac{\beta}{120} \bp{y_i^5 h^{\prime \prime \prime \prime \prime}\of{\xi_{y_i}}- x_i^5  h^{\prime \prime \prime \prime \prime}\of{\xi_{x_i}}}
\end{align*}
To this end, using the assumption given in \eqref{eq:fextendedbycontinuity}
\begin{align}
\left|T_{2}(d)\right| \leq \sum_{i=1}^{d} L \frac{\beta}{120}\left(\left|y_{i}\right|^{5}+\left|x_{i}\right|^{5}\right)=: T_{2}^{r}(d)>0 \label{t2bis}
\end{align}

Once again using the conclusions of \cref{lemma}

\begin{align}
    \EE{\absx{x_i}^5} = 8 \sqrt{\frac{2}{\pi \bp{\beta \absx{H}}^5}} + O\of{\beta^{-3}} \label{exp_x_5}
\end{align}
and through the Gaussian distributed $y_i$'s
\begin{align}
    \EE{\absx{y_i}^5} = 8 \sqrt{\frac{2}{\pi \bp{\beta \absx{H}}^5}} \label{exp_y_5}
\end{align}

Let $\varepsilon>0$, then by Markov inequality and then using \eqref{exp_x_5} and \eqref{exp_y_5}

\begin{align*}
\mathbb{P}\left(T_{r}(d)>\varepsilon\right) & \leq \frac{L \ell d}{24 \varepsilon} \sum_{i=1}^{d}\left[\mathbb{E}\left(\left|y_{i}\right|^{5}\right)+\mathbb{E}\left(\left|x_{i}\right|^{5}\right)\right] \\
& =\frac{L \ell d^{2}}{24 \varepsilon}\left[\mathbb{E}\left(\left|y_{1}\right|^{5}\right)+\mathbb{E}\left(\left|x_{1}\right|^{5}\right)\right] \\
& = \frac{L}{3\varepsilon} \sqrt{\frac{2}{\pi \ell^3 d \absx{H}^5}} \\
\underset{d \rightarrow \infty}{\rightarrow} 0
\end{align*}
Then using \eqref{t2bis}, we have that 
\begin{align}
    T_2\of{d} \overset{\PP}{\longrightarrow} 0 \label{prob_conv_2}
\end{align}
\end{proof}

\subsubsection{Deriving the final result}
\begin{proof}[Proof of Theorem~\ref{Thm:scalingRelaxed}]
Merging the findings from \cref{lemmaW,lemmaT1,lemmaT2} and applying Slutsky's Theorem gives us asymptotically, when $d \to \infty$,
\begin{align}
    B\of{d} = W\of{d} + T_1\of{d} + T_2\of{d} \Rightarrow B \sim \cN\of{- \frac{\sigma^2}{2}, \sigma^2} \label{conv}
\end{align}
Leveraging the convergence in distribution and $z \mapsto \min\of{1, e^z}$ being a bounded continuous function we have,
\begin{align}
    \lim_{d \to \infty } \EE{\min\of{1, e^{B\of{d}}}} = \EE{\min\of{1, e^{B}}}
\end{align}
\begin{proposition}
    \label{proposition:sigma}
    Take $Z \sim \cN\of{\mu, \sigma^2}$, then
    \begin{align}
        \EE{\min\of{1, e^{Z}}} = \mathbf \Phi\of{\frac{\mu}{\sigma}} + \exp\of{\mu + \sigma^2 /2} \mathbf \Phi\of{-\sigma - \frac{\mu}{\sigma}}
    \end{align}
\end{proposition}
\begin{proof}
    Simple computation (e.g., see \cite{roberts1997weak}).
\end{proof}
Using \cref{proposition:sigma} with the result established in \cref{conv} and with $\sigma$ as in \eqref{sigma}, we obtain the final asymptotic result presented in Theorem~\ref{Thm:scalingRelaxed}.
\end{proof}

\clearpage

\section{The HAT distributions}
\subsection{Weight-Stabilised}
\label{sec:Weightstabilised}
In Section~\ref{sec:3.1} we introduced the HAT target distribution that we use as the target distributions for the annealed densities in the PA component of ALPS. This section explains why (at least to first order) these approximately preserve the modal weight upon annealing.

Much of this follows the motivation and arguments used in \cite{GarethJeffNick}. We start by introducing a specific mixture distribution where all components consist of a scale and location transformation of some common density.

 \begin{definition}[Componentwise Rescaled Mixture Distribution (CRMD)] \label{def:cwrd}
 
 $\forall j \in \{1,\ldots,J\}$  let $\mu_j \in \mathbb{R}^d$ and $\Sigma_j \in \mathbb{R}^{d\times d}$ be  positive definite matrices.
 
Define a component-wise rescaled mixture distribution to be a mixture distribution with $J$ components but where the component distributions are homogeneous up to a scale and location reparametrisation. Specifically, for $J \in \mathbb{N}$, $s_j(x):= \Sigma_j^{-1/2}(x-\mu_j)$ and for $g(\cdot)$, a $d$-dimensional density function with a global maximum at 0,    the CRMD is defined as
\begin{equation}
	\gamma(x) \propto \sum_{j=1}^J a_j(x) = \sum_{j=1}^J w_j |\Sigma_j|^{-1/2}g\left(s_j(x))\right)
\label{eq:themixer}
\end{equation}
where $\forall j \in \{1,\ldots,J\}$ $\mu_j \in \mathbb{R}^d$ and $\Sigma_j \in \mathbb{R}^{d\times d}$ is a positive definite matrix.
\end{definition}

 In what follows we make the added assumption that the modes are well-separated which in this context means that there is sufficient distance between the location parameters, the $\mu_j$'s, such that there is negligible tail-overlap from the components. This is more reasonable assumption in high dimensional settings or indeed in cases where the distribution becomes annealed.
 
Upon annealing/tempering, either using some local or global scheme, then due to the assumed large separation the resulting distribution will still resemble a mixture distribution as described in Definition~\ref{def:CTRMD}.

 \begin{definition}[Componentwise Tempered Rescaled Mixture Distribution (CTRMD)] \label{def:CTRMD}
Under the setting of Definition~\ref{def:cwrd}, the component-wise tempered rescaled mixture distribution at inverse-temperature level $\beta$, $\gamma_\beta$,  is defined as the CRMD given by
\begin{equation}
	\gamma_\beta(x) \propto \sum_{j=1}^J b_j(x,\beta) = \sum_{j=1}^J W_{(j,\beta)} \frac{\left[g\left(s_j(x)\right)\right]^\beta}{\int\left[g\left(s_j(u)\right)\right]^\beta du}. \label{eq:themixerbeta}
\end{equation}
where $s_j(x):= \Sigma_j^{-1/2}(x-\mu_j)$.
\end{definition}

We know that if traditional tempering is used then the effective weights of the components become increasingly inconsistent with the original mixture. The following proposition shows what happens under traditional tempering vs a scheme that re-weights the component according to its (local) maxima. 

\begin{proposition}[CTRMD Equivalences] 
Consider the setting of \eqref{eq:themixer} and \eqref{eq:themixerbeta}. Then $\forall j \in 1,\ldots, J$
\begin{itemize}
\item Standard power-tempering\\
If $b_j(x,\beta) = [a_j(x)]^\beta$ then 
	\begin{equation}
		W_{(j,\beta)} \propto w_j^\beta |\Sigma_j|^{\frac{1-\beta}{2}}.
		\nonumber
	\end{equation}
\item Weight-preserving tempering\\If 
\begin{equation}
	b_j(x,\beta)= a_j(x)^\beta a_j(\mu_j)^{(1-\beta)} \nonumber
\end{equation}
then  $W_{(j,\beta)} \propto w_j$.
\end{itemize}
\label{Theorem:equivalence}
\end{proposition}

\begin{proof}
The proof of this result is a routine extension of the proof given for the specific case of Gaussian mixture targets given in \cite{GarethJeffNick}.
\end{proof}

The significance of this is that in cases where the target distribution can be approximated by a mixture of the form given in \eqref{eq:themixer} then the HAT densities prevent the severe weight degeneracy issues that arise from traditional power-tempering methods.

\subsection{Truncated HAT distributions}
\label{sec:Truncated}

In Section~\ref{sec:SURapplication} the ALPS procedure was used to sample from a posterior distribution that resulted from an SUR model. The condition number of the Hessian evaluations showed that the problem was unstable. The issue is that the problem is ill-posed in that the parameters are essentially unidentifiable. The posterior distribution seems to exhibit narrow ridges with tail-structures in those directions appearing to be significantly heavier than Gaussian.

As such there would need to be significant localised annealing before the target distribution along the ridge directions began to appear sufficiently Gaussian so that the independence sampler could provide reasonable inter-modal jump rates. However, this led to issues with machine precision. To overcome this issue, an ad-hoc adjustment to the HAT targets was proposed. The idea was to introduce a truncated version of the annealed HAT targets that would be exactly as described in Definition~\ref{def:HAT} if the location was within a prescribed distance from the mode point and would be set to zero outside of that. The idea being to eliminate some of the instability of the heavy tailed ridge directions at the annealed temperature levels so that the algorithm works without using annealed temperature levels that would create computational problems due to machine precision issues.

\begin{definition}[Truncated HAT distributions at quantile $q$]
\label{def:THAT}
Consider the definition of the HAT distribution at inverse temperature level $\beta$ from Definition~\ref{def:HAT}. Then using the same notation as Definition~\ref{def:HAT}, and letting $A= A_{x,\beta}$ for ease of notation, set
\begin{equation}
	Q_{x,\beta} := \mathbbm{1}_{\{(x-\mu_A)^T \Sigma_A^{-1} (x-\mu_A)<q\}}
\label{eq:quanif}
\end{equation}

The truncated HAT distribution at inverse temperature level $\beta$ and quantile $q$ is then defined as:
\begin{equation}
	\pi_{\beta}(x)  \propto
  \begin{cases}
    \pi(x)^{\beta} \pi(\mu_{A_{x,\beta}})^{1-\beta} Q_{x,\beta} & \text{if $A_{x,\beta}=A_{x,1} $} \\
    G(x,\beta)Q_{x,\beta} & \text{if $A_{x,\beta} \neq A_{x,1} $}
  \end{cases}  \label{eq:robadj2}
\end{equation}
where,
\[
G(x,\beta)= \frac{\pi(\mu_{A})\left((2\pi)^{d}\Sigma_{A}\right)^{1/2}\phi \left(x|\mu_{A},\frac{\Sigma_{A}}{\beta}\right)}{\beta^{d/2}}.
\] 
\end{definition}

The user would choose $q$ as some pre-specified quantile of a Chi-squared distribution with degrees of freedom $d$-the dimensionality of the problem. The intuition is that the term in the indicator for the object $Q_{x,\beta}$ in Definition~\ref{def:THAT} would be be Chi-squared $d$ distributed if the local mode was Gaussian with mean $\mu_A$ and covariance matrix $\Sigma_A$. Therefore, the distribution would eliminate the distribution for values inconsistent with such a null hypothesis. 

This overcame the issues for the  SUR model but there is further work needed to explore if this ad-hoc solution is robust for similarly computationally unstable problems.

\clearpage

\section{Pseudo-code implementation of ALPS}
\label{sec:PseudoCode}
\begin{algorithm}[H]
\caption{Annealed Leap-Point Sampler (ALPS)}\label{alg:ALPS}
\begin{algorithmic}[1]

\Require Temperature schedule $\Delta = \{\beta_0,\beta_1,\dots,\beta_n\}$
\Require Initial mode collections $\{\hat{W},\hat{M},\hat{C}\}$
\Require Number of temperature swaps $s$
\Require Initial sample $X^0 = (x_0^0,\dots,x_n^0)$
\Require Total number of samples $T$

\vspace{0.5em}

\For{$t = 0$ to $T-1$}

    \Statex \textbf{Within-temperature moves:}

    \For{$j = 0, \dots, n-1$}
        \State Sample $x_j^{t+1} \sim P_{\beta_j}(x_j^t, \cdot)$
        \Statex \hspace{2em}\textit{(local exploration at intermediate temperatures; see \eqref{sec:3.1}, \eqref{sec:3.4})}
    \EndFor

    \vspace{0.5em}

    \State Sample $x_n^{t+1} \sim Q(x_n^t,\cdot)$
    \Statex \hspace{2em}\textit{(global exploration at the coldest temperature $\beta_n$; see \eqref{sec:3.5})}

    \vspace{0.5em}

    \Statex \textbf{Temperature swap moves:}

    \For{$i = 1, \dots, s$}
        \State Perform QuanTA-aided temperature swap
        \Statex \hspace{2em}\textit{(see \eqref{sec:3.3})}
    \EndFor

    \vspace{0.5em}

    \State Set $X^{t+1} \leftarrow (x_0^{t+1}, x_1^{t+1}, \ldots, x_n^{t+1})$

    \vspace{0.5em}
    \Statex \textbf{(Optional) Update mode collections:}
    \State Update $\{\hat{W},\hat{M},\hat{C}\}$ through online running of the Exploration Component
    \Statex \hspace{2em}\textit{(see \eqref{sec:3.6})}

\EndFor

\end{algorithmic}
\end{algorithm}

\clearpage

\section{Example 4: 4-Modal 20-Dimensional Skew-Normal}
\label{sec:Ex3}

The target distribution is given by a mixture of four evenly-weighted and well-separated twenty-dimensional skew-normal distributions with heterogeneous scaling of the modes and homogeneous skew:
\begin{equation*}
    \pi(x) \propto \sum_{k=1}^4 \prod_{j=1}^{20}  \frac{2}{\omega_k}\phi\left( \frac{x_j - (\mu_{k})_j }{\omega_k}\right)\Phi\left( \alpha\frac{x_j - (\mu_{k})_j }{\omega_k}\right)
\end{equation*}
where $\alpha =10$, $\mu_1 = (20,20,\ldots,20) = -\mu_2$, $\mu_3=(-10-10,\ldots,-10,10,10,\ldots,10)=-\mu_4$, $\omega_1=\omega_2 = 1$ and $\omega_3=\omega_4=2$ and where $\phi(\cdot)$ and $\Phi(\cdot)$ denote the density and CDF of a standard normal distribution. This is similar to Example 1 in Section~\ref{sec:Ex1}, but with 4 modes instead of 13.

We compare the performance of ALPS against PT and LAIS. LAIS only uses a single temperature level which is the target level. For ALPS, 7 temperature levels were used for the annealing part of the algorithm with a cooling schedule given by powers in the vector  $(1  ,  4  , 16  , 64 , 256, 1024 , 4096)$. In contrast, PT required 14 temperature levels which were geometrically spaced with common ratio of 0.6. This gave the suggested optimal temperature swap rates of 0.234 between consecutive temperature  level pairs along with a temperature that appeared hot enough to explore the entire state space. Even so, the PT algorithm fails in this example rather spectacularly because the hot state target distribution is significantly inconsistent with the target distribution.

In all three algorithms random walk Metropolis was utilised for the within temperature moves; in the case of ALPS and LAIS then the random walk Metropolis moves used pre-conditioned  covariance structure informed by the covariance matrix estimated at the local mode point by the Laplace approximation. In all cases the within temperature moves were tuned to have the suggested optimal acceptance rate of 0.234.

The hardest aspect of tuning ALPS was regarding the hot-state mode finder. As mentioned previously, finding ``narrow modes'' is never guaranteed in the finite run of the algorithm if their location is unknown \citep{Fong2019}. In both LAIS and ALPS where the hot state mode finder is required, it was found that an inverse temperature of 0.000005 worked very well and rapidly found the modes; typically within a few thousand iterations of the algorithm. As one would expect the performance of the mode finder is sensitive to its tuning and therein the algorithm's success also. The observations made in these empirical studies has suggested that the temperature of the hot state should also move dynamically to increase the robustness of this part of the algorithm. With the tuning of the hot state tuned as described, the hot state mode finder typically discovered all four mode-points within the first 4000 iterations of the algorithm.

Each of the three algorithms was run a total of 10 times, where for fair comparability the chains were all initiated from the first mode, $\mu_1$ in each instance. Furthermore, all chains were run long enough to generate 200,000 samples from the target temperature (including the burn-in samples). Figure~\ref{fig:TracePlotComp4} shows trace plots for the first component of the Markov chain in each of the three cases after a burn-in is removed. A successful outcome would be a plot where the trace line jumps between the marginal component's modes centred at values -20,-10,10 and 20. Figure~\ref{fig:TracePlotComp4} shows  ALPS was the only algorithm to successfully jump regularly between the different mode points whilst in contrast the other two algorithms fail dramatically. LAIS makes only a couple of moves, due to the failure of the Laplace approximation to accurately capture the shape of such heavily skewed modes. PT is performs even worse, with the chain remaining trapped in the (very spiky) first mode and demonstrating clearly the weakness of PT in high dimensions when modes have heterogeneous scales \cite{woodard2009conditions}.
\begin{figure}
  \includegraphics[width=0.9\linewidth]{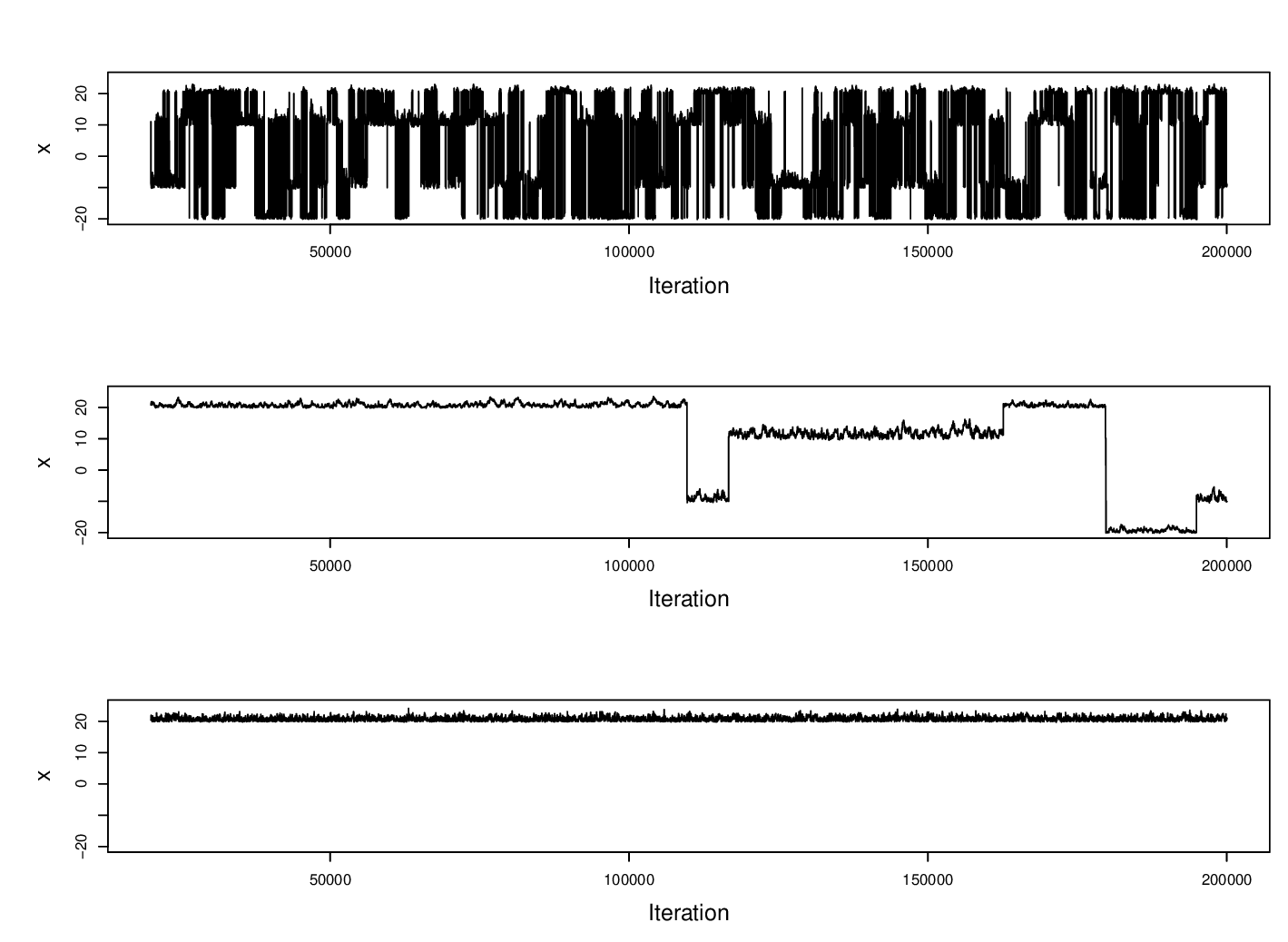}
  \caption{For the algorithms ALPS, LAIS and PT respectively, the trace plots of the first component of the Markov chain in a representative run from each of the 10 repetitions. ALPS (top) successfully explores the 4 modes centred on 20,-10,10 and 20 with rapid inter-modal mixing. In contrast LAIS only manages to make five jumps to another mode whilst the chain in PT remains trapped in the initialisation mode.}
  \label{fig:TracePlotComp4}
\end{figure}

The trace plots alone show that ALPS has vastly outperformed LAIS and PT.   Figure~\ref{fig:FuncCon} shows the running estimate of the probability that the first marginal component of the target distribution is less than $1/2$ for each of the 10 repetitions for each of the algorithms. The entity being estimated is $\mathbb{P}(X_1<1/2)$, which equals almost exactly 0.5. Therefore, for the $i^{th}$ iteration of the Markov chain with burn-in $b$, then the following functional is being plotted for each repetition
\begin{equation}
     f(i) = \frac{1}{i-b-1}\sum_{j=b+1}^i \mathbbm{1}_{\{x_1^j <1/2 \}}
		\label{eq:runningprobest}
\end{equation}

\begin{figure}
  \includegraphics[width=0.99\linewidth]{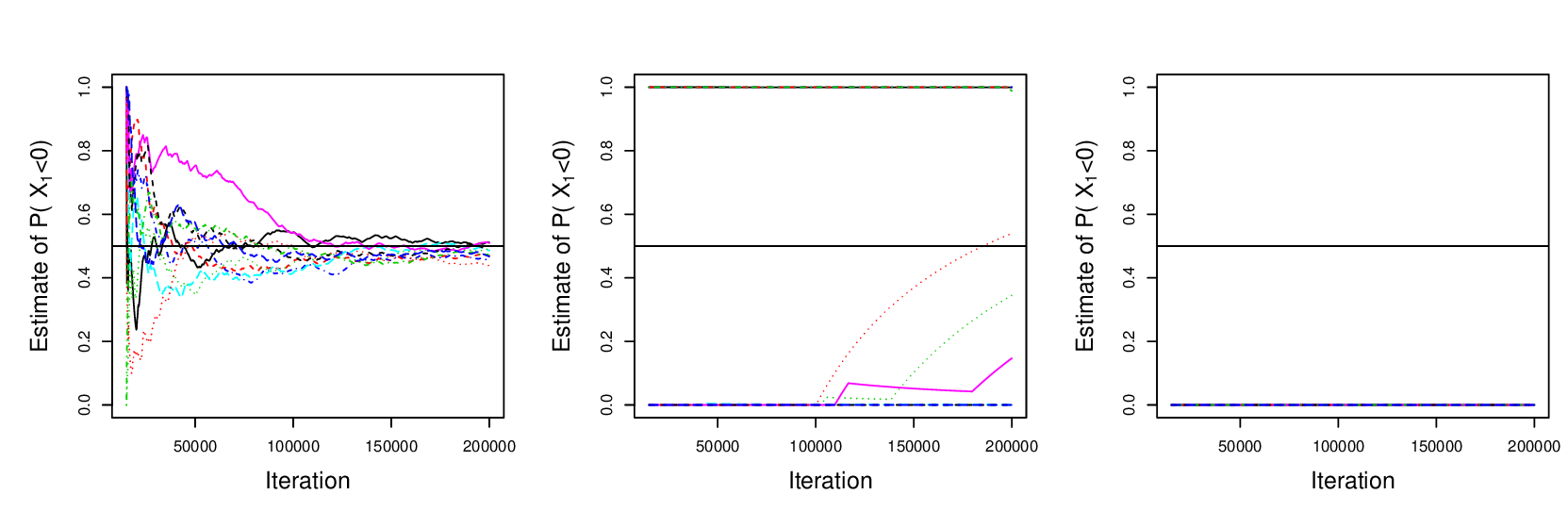}
  \caption{For the algorithms ALPS, LAIS and PT respectively, the function in equation~\ref{eq:runningprobest} with a burn-in $b=15,000$ is plotted for each of the 10 repetitions. For ALPS (left-most plot) all trajectories concentrate around the truth by the end of their runs, whilst LAIS and PT fail dramatically in estimating this basic quantity.}
  \label{fig:FuncCon}
\end{figure}
The plot for the runs of ALPS show that all  repetitions converge successfully towards the true value of 0.5 whilst PT and LAIS fail to estimate this probability successfully.

All three algorithms have been run to give the same output quantity at the target state. However, it takes different amounts of time to run each algorithm per iteration. Table~\ref{tab:timecomp} shows the average run time for each of the three algorithms across each of the 10 repetitions. 
\begin{table}
\caption{\label{tab:timecomp}Comparison of the observed average run times for each of the three algorithms.}
\centering
\fbox{%
\begin{tabular}{c | c c c}
    Algorithm & ALPS & LAIS &  PT   \\ [0.5ex] 
    \hline 
   Run Time (sec) & 276.7 & 57.2 & 529.6 \\ [0.5ex] 
\end{tabular}}
\end{table}
The results in Table~\ref{tab:timecomp} are very complementary to ALPS since there appears to be little extra cost in running ALPS for what has been shown to be a dramatic gain in sample output. Indeed ALPS appears less expensive than PT and this is simply due to the vast number of extra temperatures that PT is updating per step of the chain in this example. However, in this example the number of modes was known and so the mode finding could be optimised for ALPS and LAIS; something not possible in a real data example where one may continue mode searching throughout the duration or even doing some pre-computation dedicated to mode searches (as was the case in the following real data example in Section~\ref{sec:SURexample}).

\clearpage

\section{Seemingly-Unrelated Regression Model}
\label{appendix:SUR}
SUR models involve a system of $M$ linear regression equations, one for each response, $m = 1, \dots, M$:
\begin{equation}
\vec{y}_m = \mathbf{X}_m \vec\theta_m + \vec\epsilon_m ,
\end{equation}
where $\vec{y}_m$ is a column vector of $N_m$ observations for the $m$th response, $\vec\theta_m$ is a vector of $J_m$ regression coefficients, $\mathbf{X}_m$ is an $N_m \times J_m$ design matrix of covariates, and $\vec\epsilon_m$ is a vector of residual errors. 

For notational convenience, we assume that $N_1 = N_2 = \dots = N_M = N$ and likewise that $J_1 = J_2 = \dots = J_M = J$, although in full generality that is not always the case. For example, \citet[p. 687]{Kmenta1986} included a SUR model with $J_1 = 2$ and $J_2 = 3$ covariates.

The vectors of observations can be stacked on top of each other to form a single, longer vector $\left(\vec{y}_1^T, \vec{y}_2^T, \dots, \vec{y}_M^T\right)^T$ and similarly with the regression coefficients  $\left(\vec{\theta}_1^T, \vec{\theta}_2^T, \dots, \vec{\theta}_M^T\right)^T$. The covariates $\mathbf{X}_m$ are combined to form a $MN \times MP$ block-diagonal matrix, resulting in the regression equation:
\begin{equation}\label{eq:surEq}
\left[\begin{array}{c}
\vec{y}_1 \\
\vec{y}_2 \\
\vdots \\
\vec{y}_M
\end{array}\right] = \left[\begin{array}{cccc}
\mathbf{X}_1 & & \dots & \mathbf{0} \\
& \mathbf{X}_2 & & \vdots \\
\vdots & & \ddots & \\
\mathbf{0} & \dots & & \mathbf{X}_M
\end{array}\right] \left[\begin{array}{c}
\vec\theta_1 \\
\vec\theta_2 \\
\vdots \\
\vec\theta_M
\end{array}\right]  + \left[\begin{array}{c}
\vec\epsilon_1 \\
\vec\epsilon_2 \\
\vdots \\
\vec\epsilon_M
\end{array}\right] .
\end{equation}
The errors $\vec\epsilon_m$ are assumed to be jointly multivariate normal,
$$\boldsymbol\epsilon \sim \mathcal{N}\left(\vec{0},\, \boldsymbol\Sigma_\epsilon \otimes \mathbf{I}\right) ,
$$
where $\mathbf{I}$ is a $N \times N$ identity matrix and $\boldsymbol\Sigma_\epsilon$ is a $M \times M$ variance-covariance matrix with entries $\sigma^2_{\ell,m}$. 
\cite{Zellner1962} introduced an asymptotically-unbiased estimator for $\boldsymbol\Sigma_\epsilon$, conditional on estimates $\hat{\boldsymbol\theta}$ of the regression coefficients:
$$\hat\sigma^2_{\ell,m} = \frac{1}{N} \left(\vec{y}_\ell - \mathbf{X}_\ell \hat\theta_\ell \right)^T\left(\vec{y}_m - \mathbf{X}_m \hat\theta_m \right) .
$$
Given an estimate of $\widehat{\boldsymbol\Sigma}_\epsilon$, $\hat{\boldsymbol\theta}$ can be estimated using generalised least squares \citep{Aitken1936}. Let $\widehat{\boldsymbol\Omega} = ( \widehat{\boldsymbol\Sigma}_\epsilon \otimes \mathbf{I} )^{-1}$, then
$$\hat{\boldsymbol\theta} = \left( \mathbf{X}^T \widehat{\boldsymbol\Omega} \mathbf{X} \right)^{-1} \mathbf{X}^T \widehat{\boldsymbol\Omega} \mathbf{y} .
$$
Alternating between these two estimators results in an iterative procedure that converges to a solution to the simultaneous regression equations.

The profile log-likelihood of the SUR model is thus:
\begin{equation}\label{eq:surLike}
\ell(\mathbf{Y} \mid \hat{\boldsymbol\theta}) = -N \log\{2\pi\} - \frac{N}{2}\log\left\{\left| \widehat{\boldsymbol\Sigma}_\epsilon \right|\right\} - N
\end{equation}

Due to the quadratic form of the SUR model, it was long assumed that its likelihood was log-concave and therefore the iterative estimation procedure of \cite{Zellner1962} was guaranteed to converge to the global maximum. For example, see \citet[p. 157]{Srivastava1987} and \citet[p. 686]{Greene1997}. However, in fact these models are an example of a curved exponential family \citep{Sundberg2010}.
\cite{Drton2004} showed in the bivariate case that maximising \eqref{eq:surLike} is equivalent to finding the real roots of a fifth-degree polynomial. The profile log-likelihood preserves all stationary points of the log-likelihood, so if the likelihood is multi-modal then so will $\ell(\mathbf{Y} \mid \boldsymbol\theta)$ \citep[Lemma 1]{Drton2004}. This means that the likelihood could have one, two, or three local maxima.

\clearpage

\begin{comment}   
\section{Example 4: Mode jumping with increasing dimension}
\label{sec:Ex4}

\clearpage
\section{Example 5: Spectral Density Model}
\label{sec:Ex5}

\clearpage
\end{comment}

%\bibliographystyle{plainnat}
%\bibliography{biblisim}

\bibliographystyle{plainnat}
\bibliography{biblisim}
\end{document}